\patchcmd{\appendices}{\quad}{: }{}{}
\definecolor{darkblue}{rgb}{0.0,0.0,0.7}
\newtheorem{theorem}{Theorem}
\newtheorem{assu}{Assumption}
\newtheorem{definition}{Definition}
\newtheorem{lemma}{Lemma}
\newtheorem{proposition}{Proposition}
\newtheorem{remark}{Remark}
\newcommand{\diag}{\mathop{\mathrm{diag}}}
\def\QQ{{\rlap {\raise 0.4ex \hbox{$\scriptscriptstyle |$}}\hskip -0.2em Q}}
\def\11{{ I\!\!1}}
\definecolor{darkgreen}{rgb}{0,0.5,0}
\begin{document}

\title{Signaling with Private Monitoring\footnote{Cisternas: MIT Sloan School of Management, 100 Main St., Cambridge, MA 02142, \texttt{gcistern@mit.edu}. Kolb: Indiana University Kelley School of Business, 1309 E. Tenth St., Bloomington, IN 47405 \texttt{kolba@indiana.edu}. We thank Alessandro Bonatti, Isa Chavez, Wouter Dessein, Robert Gibbons, Marina Halac, Stephen Morris, Alessandro Pavan, Andy Skrzypacz, Bruno Strulovici, and Vish Viswanathan for useful conversations.}} 
\author{Gonzalo Cisternas and Aaron Kolb}
\date{May 8, 2020}
\maketitle

\vspace{-.4in}
\begin{abstract}

We study dynamic signaling when the informed party does not observe the signals generated by her actions. A long-run player signals her type continuously over time to a myopic second player who privately monitors her behavior; in turn, the myopic player transmits his private inferences back through an imperfect public signal of his actions. Preferences are linear-quadratic and the information structure is Gaussian. We construct linear Markov equilibria using belief states up to the long-run player's \emph{second-order belief}. Because of the private monitoring, this state is an explicit function of the long-run player's past play.  A novel separation effect then emerges through this second-order belief channel, altering the traditional signaling that arises when beliefs are public. Applications to models of leadership, reputation, and trading are examined.\\

%\vspace{-0.1in}
%\noindent\textbf{Keywords}: signaling; private monitoring; private beliefs; learning; Brownian motion.\bigskip

%\noindent\textbf{JEL codes}: C73, D82, D83.

\end{abstract}

\vspace{-0.4in}
\section{Introduction\label{sec:intro}}

The general interest in \emph{signaling}---i.e., information transmission through costly actions---is reflected in its influence in virtually all subfields across economics. Despite this breadth, the great majority of signaling games share a key commonality:  the ``sender'' knows the belief of the ``receiver'' about the sender's type at the moment of action. While this public nature of a receiver's belief can be a sensible approximation in some settings, it is far less appropriate in others, such as when imperfect private signals of behavior are at play: employers subjectively assessing their workers' performances \citep{levin2003relational}; traders handling others' orders  \citep{yang2019back}; or data brokers collecting data about consumers \citep{bonatti2019consumer}. There, the beliefs of employers, financial intermediaries, or data brokers over variables such as a worker's ability, an asset's value, or a consumer's preferences, are \emph{private}.

Allowing for private monitoring of an informed player's actions is an important agenda, as it can open the way for a new set of applied-theory questions to be analyzed. How do leaders gradually influence their followers when they do not know how their actions have been interpreted? Can career-concerned agents benefit by not being able to observe the signals generated by their actions when attempting to manage their reputations? How is trading behavior affected by the possibility of hidden leakages to other traders? While clearly realistic and relevant, these questions nonetheless present substantial challenges. First, higher-order beliefs can arise: in most settings, the senders involved will have to form a nontrivial belief about their receivers' beliefs. Second, such settings can be inherently asymmetric: when facing a sender of a \emph{fixed} type, the receiver develops \emph{evolving} private information in the form of a belief. Third, most analyses will be nonstationary due to ongoing learning effects.

In this paper, we introduce a class of linear-quadratic-Gaussian games of incomplete information and private monitoring in which these questions and challenges can be addressed. A long-run player (she) and a myopic counterpart (he), both with linear-quadratic preferences, interact over a finite horizon. The long-run player has a normally distributed type. Our key innovation is to allow the myopic player to privately observe a noisy signal of the long-run player's action; in turn we let the long-run player receive feedback about the myopic player's inferences via an imperfect public signal of the latter's behavior.  The shocks in both signals are additive and Brownian. Using continuous-time methods, we construct linear Markov equilibria (LMEs) in which the players' beliefs are the relevant states.\footnote{The myopic ``receiver'' assumption is convenient for focusing exclusively on how the long-run player's signaling motives respond to the introduction of higher-order uncertainty, but our construction, methods and main findings remain valid beyond this case. We discuss this and other assumptions in the conclusion.}

\vspace{-0.1in}
\paragraph{Equilibrium construction and signaling.} It is well known that the construction of nontrivial equilibria in games of private monitoring can be a daunting task. In fact, to estimate rivals' continuation behavior under any strategy, players  usually have to make an inference about their opponents' private histories. Not knowing what their rivals have seen, the players will then rely on their past play, but this implies that the players' inferences will vary with their \emph{own} private histories. Thus, (i) probability distributions over histories must be computed, and (ii) the continuation games at off- versus on-path histories may differ.

With incomplete information, one expects this statistical inference problem to become one of the estimation of belief states that summarize the payoff-relevant aspects of the players' private histories---our approach offers a parsimonious treatment of this issue. The quadratic preferences permit our players to employ strategies that are linear in their posterior beliefs' means (henceforth, beliefs). Conjecturing such linear strategies, learning is (conditionally) Gaussian: the myopic player's belief is linear in the history of his private signals, and the long-run player's \emph{second-order belief}---her belief about the myopic player's private belief---is linear in the histories of the public signal and her \emph{past play}. The estimation of histories described in (i) is thus simplified by the fact that these are aggregated linearly.

Critically, the long-run player's second-order belief is also private, as her actions depend on her type; the myopic player must therefore forecast this state. The problem of the state space expanding is then circumvented by a key \textit{representation} of the (candidate, on path) second-order belief in terms of the long-run player's type and the belief about it based exclusively on the public signal (Lemma \ref{lem:BeliefDecomp}). Thus, performing equilibrium analysis requires a nontrivial second-order belief that is spanned by the rest of the states along the path of play, in a reflection of how the game's structure changes after deviations, as noted in (ii). With Markov states as sufficient statistics, we can write the long-run player's best-response problem as one of stochastic control and use dynamic programming to find LMEs.\footnote{The presence of this public belief creates signal-jamming motives for the long-run player.}

The long-run player controls her own second-order belief, in a generalization of the traditional control of a public belief under imperfect public monitoring. But since this state is now an explicit function of past play, private monitoring has novel implications for signaling---our representation result is again key. Specifically, because different types behave differently in equilibrium, their different past behavior leads them to expect their ``receivers'' to hold different beliefs. In other words, the perception of different continuation games---as measured by the value of the second-order belief in the representation---opens an additional channel for separation. We refer to this as the \emph{history-inference effect} on signaling. The potential amplitude of this effect is largest when the public signal is pure noise (``no feedback''), and thus the reliance on past play is strongest; conversely, it disappears when beliefs are public.

From a positive standpoint, the relevance of this effect depends on the plausibility of individuals relying on their \emph{past} behavior to forecast what others \emph{currently} know. Crucially, this notion strongly resonates with reality, such as when leaders reflect on their past behavior when assessing organizations' understanding of the leadership's long-term goals, when politicians gauge their reputations, or when traders estimate how much of their private information has been learned by others. We are not aware of an existing framework where the signaling implications of this natural use of past behavior can be studied. Our approach, which ultimately exploits the use of ordinary differential equations (ODEs), offers a venue.

%From a positive standpoint, the relevance of this effect depends on the plausibility of individuals relying on their \emph{past} behavior to forecast what others \emph{currently} know. Crucially, this notion strongly resonates with reality, such as when a leader reflects on her past behavior when assessing an organization's understanding of the economic environment, when a politician gauges her reputation, or when a trader estimates how much of her information has been learned by others. We are not aware of an existing framework where the signaling implications of this natural use of past behavior can be studied. Our approach, which ultimately exploits the use of ordinary differential equations (ODEs), offers a venue.

\vspace{-0.1in}
\paragraph{Applications.} 

To leverage the flexibility of the model, we examine one instance of our baseline specification and two based on extensions of it. Our aim is to show how the precisions of the signals involved shape outcomes via the extent of higher-order uncertainty created.

In our leading application (Section \ref{sec:leading_by_example}), we examine the history-inference effect in a coordination game inspired by the linear-quadratic team theory of \cite{marschak1972team}---this framework, along with its extensions allowing for misaligned preferences, has become the main laboratory for studying the impact of information structures on organizations.

In the setting examined, the performance of a team composed of a leader and a follower increases with the proximity of its members' actions and of the leader's action to the state of the world. The leader shares the team's payoffs, while the follower attempts to match the leader's action at all times. Recognizing the prevalence of information frictions within organizations, we assume that, while the leader knows the state of the world, the follower's learning about it is only gradual and private, albeit influenced by the leader's behavior. In this context, the absence of feedback, via the history-inference effect it creates, can lead to more information being transmitted relative to the case in which the follower's belief is public. However, the team's performance is lower. Thus, organizations with a better understanding of the economic environment can  \emph{underperform} their less informed counterparts.

%In the setting we examine, the performance of a team composed of a leader and a follower increases with the proximity both of its members' actions and of the leader's action to the state of the world. The leader shares the team's payoffs, while the follower attempts to match the leader's action at all times. Recognizing the presence of important information frictions within organizations, we assume that, while the leader knows the state of the world, she is uncertain about the signals of her actions that are received by the follower. In this context, we show that the absence of feedback, via the history-inference effect it creates, leads to more information being transmitted relative to the case in which the follower's belief is public. However, the team's performance is lower. Thus, organizations with a better understanding of the economic environment can in fact \emph{underperform} their less informed counterparts.

Uncertainty about others' beliefs is also natural in reputational settings. In Section \ref{subsec:political}, we examine a model of horizontal reputation based on an extension that allows for \emph{terminal} payoffs: the long-run player suffers a terminal quadratic loss that increases in the distance between the myopic player's belief and the type's prior (e.g., a politician facing reelection who desires a reputation for neutrality). In such a context, we show that not directly observing her reputation can benefit the long-run player, despite the negative direct effect of the increased uncertainty over her concave objective. Indeed, since higher types take higher actions due to their higher biases, those types must offset higher beliefs to appear unbiased; the history-inference effect then reduces the informativeness of the long-run player's action, making beliefs less sensitive to new information, a strategic effect that can dominate.

Finally, in Section \ref{subsec:insidertrading}, we exploit the presence of the public belief state in a \emph{linear} trading model in which an informed trader faces both a myopic trader who privately monitors her and a competitive market maker who only observes the public total order flow. In this context, we show that there is no linear Markov equilibrium for any degree of noise of the private signal. Intuitively, the myopic player introduces momentum into the price, as the information he obtains is now distributed to the market maker through all future order flows. This causes prices to move against the insider and creates urgency---with an infinite number of opportunities to trade, the insider trades away all information in the first instant. 

\vspace{-0.1in}
\paragraph{Existence of LME and technical contribution.} 

The bulk of our analysis unfolds in Sections \ref{sec:model} and \ref{sec:eqmanalysis}, where we introduce the general model and lay out the methodological framework. A distinctive feature there is that the environment is  \emph{asymmetric}, both in terms of the players' preferences and their private information (a fixed state versus a changing one). In particular, the players can signal at substantially different rates, which is in stark contrast to the existing literature on symmetric multisided learning. With different rates of learning, however, the equilibrium analysis can become severely complicated.

Specifically, our belief states depend on both the myopic player's posterior variance, which determines the sensitivity of the myopic player's belief, and the weight attached to the long-run player's type in the representation result, which shapes the history-inference effect and is linked to the long-run player's learning. Both functions are deterministic due to the Gaussian structure. Using dynamic programming, one can then show that the problem of the existence of an LME reduces to a \emph{boundary value problem} (BVP) including ODEs for the two aforementioned functions of time and for the weights in the long-run player's linear Markov strategy. The two \emph{learning} ODEs endow the BVP with exogenous initial conditions, while the rest carry terminal conditions arising from myopic play at the end of the game.
%Specifically, our belief states depend on both the myopic player's posterior variance, which determines the sensitivity of the myopic player's belief, and the weight attached to the long-run player's type in the representation result, which affects signaling via the history-inference effect. Both functions are deterministic due to the Gaussian structure. Using dynamic programming, one can then show that the problem of the existence of an LME reduces to a \emph{boundary value problem} (BVP) including ODEs for the two aforementioned functions of time and for the weights in the long-run player's linear Markov strategy. The two \emph{learning} ODEs endow the BVP with exogenous initial conditions, while the rest carry terminal conditions arising from myopic play at the end of the game.

With multiple ODEs in both directions, establishing the existence of a solution to such a BVP is a challenging ``shooting'' problem: not only must solutions to all individual ODEs exist, but they must land at specific (potentially endogenous) values. To address this complexity, we distinguish between two types of environments. In a \emph{private-value} setting, the myopic player's best response is only a function of his expectation of the long-run player's action, i.e., it does not depend directly on his expectation of the type. In this context, there is enough (strategic) symmetry that a one-to-one mapping emerges between the solutions to the learning ODEs, which renders the shooting problem unidimensional (Lemma \ref{lem:GammaChiRelationship}). Via traditional continuity arguments, we can guarantee the existence of an LME in the leadership model of Section \ref{sec:leading_by_example} when the public signal is of intermediate quality, for a horizon length that is decreasing in the prior variance of the state of the world (Theorem \ref{thm:LeadingInteriorExistence}).

In \emph{common-value} settings, the multidimensionality issue must be confronted. Building on the literature on BVPs with \emph{intertemporal linear} constraints \citep{keller1968numerical}, we can establish the existence of LME for our BVP that carries \emph{intratemporal nonlinear} (terminal) constraints. Specifically, the multidimensional shooting problem can be formulated as one of finding a  fixed point for a suitable function derived from the BVP, a problem that we tackle for a variation of the leadership model in which the follower directly cares about the state of the world (Theorem \ref{thm:Leading_LME_BVP}). Critically, this approach is general: we show how to apply it to the whole class of games under study, and more generally, it offers a promising venue for examining behavior in other settings exhibiting incomplete information and asymmetries.

\vspace{-0.1in}
\paragraph{Related Literature.} A long literature on multisided private monitoring has developed in repeated games with \emph{complete} information, where the issue of inferences of private histories has been handled very differently relative to us. Closest in spirit is \cite{phelan2012beliefs}, where such inferences are coarsened into beliefs over a finite set of states; instead, our players' states take infinitely many values and completely determine their beliefs about the other player's state. Other approaches include \cite{mailath2002repeated}, examining equilibria that condition on finite histories when monitoring is nearly public, and \cite{ely2002robust}, where mixed-strategy equilibria render such inferences irrelevant. Relative to this literature, we focus on one-sided private monitoring but add private information at the outset to construct and quantify natural, yet nontrivial, belief-dependent Markov equilibria.

Regarding signaling models, in traditional \emph{static} (i.e., sequential-move, one-shot) noisy signaling games (e.g., \citealp{matthews1983equilibrium, carlsson1997noise}), the signal realization is trivially hidden from the sender at the moment of action, but the common prior makes the receiver's belief known at the same time. In dynamic environments, the receiver's belief is public in settings with observable actions and an exogenous, public stochastic process (e.g., \citealp{daley2012waiting, gryglewicz2019strategic, kolb2019strategic}) or when there is imperfect public monitoring, such as in \cite{heinsalu2018dynamic} and \cite{dilme2019dynamic}. By contrast, our assumptions on payoffs and signal structure make all players' beliefs private.

Private beliefs arise in \cite{foster1996strategic} and \cite{bonatti2017dynamic}, where all the players have fixed private information and there is an imperfect public signal; a representation result for first-order beliefs eliminates the need for higher-order beliefs. \cite{bonatti2019consumer} in turn examine two-sided signaling when firms privately observe a summary of a consumer's past behavior to price discriminate; however, via the prices they set, firms perfectly reveal their information to the consumer. Finally, private beliefs can also result from an \emph{exogenous} private signal of the sender's type, as in \cite{feltovich2002too}.

Turning to our applications, adaptation-coordination tradeoffs are a key element in recent analyses of organizations: in static, linear-quadratic, settings, see \cite{dessein2006adaptive} and \cite{rantakari2008governing} for questions of specialization and governance, respectively; our focus is instead on the dynamics of information transmission with private signals of behavior. \cite{bouvard2019horizontal} examine a model of horizontal reputation with quadratic payoffs and symmetric uncertainty; beliefs are public in the linear Markov equilibrium constructed. Lastly, \cite{yang2019back} study a two-period model in which a trader faces, in the second period, a ``backrunner'' who has observed a private signal of the former's first-period trade; there, the feedback element is absent, and so is the need for a belief representation like ours.

To conclude, this paper contributes to a growing literature using continuous-time methods to analyze dynamic incentives. \cite{sannikov2007games} examines two-player games of imperfect public monitoring; \cite{faingold2011reputation} reputation effects with behavioral types; \cite{cisternas2018two} games of ex ante symmetric incomplete information; and \cite{bergemann2015dynamic} revenue maximization with privately informed buyers. Our representation result and derivation of belief states, the distinction between private and common-value settings, and the question of existence of equilibria make these methods virtually necessary.

\section{Application: Coordinated Adaptation\label{sec:leading_by_example}}

A team consisting of a leader (she) and a follower (he) operates over a finite horizon $[0,T]$. The environment is parametrized by a state of the world $\theta$ that is normally distributed with mean $\mu\in\mathbb{R}$ and variance $\gamma^o>0$. Letting $a_t\in\mathbb{R}$ and $\hat a_t\in \mathbb{R}$ denote the leader's action and follower's action at time $t\in [0,T]$, respectively, the team's performance is given by
\begin{eqnarray}\label{eq:leader_payoff}
\int_0^T e^{-rt}\{-(a_t-\theta)^2-(a_t-\hat{a}_t)^2\}dt,
\end{eqnarray}
where $r\geq 0$ is a discount rate. The leader knows the state of the world at the outset, while the follower only knows the prior distribution (and this is common knowledge).

We assume that the leader's preferences coincide with the team's payoffs. In turn, the follower is myopic, always attempting to match the leader's action, i.e., attempting to minimize his expectation of $(a_t-\hat{a}_t)^2$ at all times. To solve this prediction problem, this player relies solely on a private signal of the leader's action that is distorted by Brownian noise:
$$
dY_t = a_tdt+\sigma_Y dZ_t^Y.
$$

Intuitively, as the leader \emph{signals}---e.g., as she takes actions intended to drive the organization in her desired direction---observing $Y$ allows the follower to gradually adjust towards taking the ``right'' action (in this case, $\theta$). But since $Y$ is private to the follower, the leader loses track of the follower's belief in the process. Attempting to adapt the organization to new economic conditions then creates two-sided uncertainty: the organization's members do not know the long-term goals behind the leadership's actions, and the leadership does not know the organization's understanding of what should be done at all instants of time. 

We are motivated by two elements that are critical to the performance of organizations. 

\vspace{-0.1in}
\begin{enumerate}
\item  \emph{Efficient adaptation.} Adjusting to the external economic environment is a key problem for organizations, requiring substantial coordination of multiple functions (\citealp{williamson1996mechanisms}; \citealp{milgrom1992economics}), which implies that (i) misaligned incentives and (ii) information frictions are key threats. The adaptation and coordination concerns are captured by $-(a_t-\theta)^2$ and $-(a_t-\hat{a}_t)^2$, respectively; in turn, the follower's preferences partially align the players' objectives to concentrate on information frictions.\footnote{Our general analysis allows for misalignments in the players' flow payoffs (see Section \ref{subsec:existence_interior}).}

\item \emph{Bounded rationality.} The barriers that people face in solving problems and processing information are at the core of every organization (\citealp{simon1957models}). As \cite{williamson1996mechanisms} further remarks, ``failures of coordination can arise because autonomous parties read and react to signals differently, even though their purpose is to achieve a timely and compatible response'' (p. 102). In our setup, $Y$ is noisy, and its noise is idiosyncratic to the follower: examples include $Y$ being linked to a cognitive process of the follower, or to a chain of imperfect transmissions that hides the final realization from the leader.

\end{enumerate}

The leader's knowledge of the state of the world is therefore understood as expertise relevant to the current economic conditions; the transmission of this knowledge is then linked to behavior, but the transfer is slow and imperfect. In this regard, our choice to shut down communication is essentially a dimensionality constraint intended to reflect situations in which the knowledge involved is substantially richer than the code available.\footnote{These features resonate with the notion of \emph{tacit} knowledge---``know-how'' that is difficult to codify and transfer. Recognized as a key input to production, \cite{garicano2000hierarchies} examines its implications on hierarchies, while \cite{grant1996toward} argues that this knowledge is ``only being observed through its application'' and \cite{nonaka1991knowledge} that it is ``rooted in action and in an individual's commitment to a specific context.''}

We now study two information structures for the leader: in the \emph{perfect-feedback} case, the leader observes the follower's action; in the \emph{no-feedback} case, she observes nothing. That is, we keep the difficulty in transferring knowledge as given (the signal $Y$ is fixed), and we vary the quality of the information fed to the leader (which is the more likely choice variable).  These are limit instances of a model we explore Sections \ref{sec:model} and \ref{sec:eqmanalysis} under general preferences.

\vspace{-0.1in}
\paragraph{Perfect feedback (``public'') case.} If the leader perfectly observes the follower's action she can potentially infer the follower's belief, in which case the latter belief becomes \emph{public}. 

In a \emph{linear Markov equilibrium} (LME), the leader chooses actions that are linear both in her type $\theta$ and in the follower's (commonly known) belief $\hat M_t:=\hat{\mathbb{E}}_t[\theta]$, where $\hat{\mathbb{E}}_t[\cdot]$ denotes the follower's expectation operator; in turn, the follower's action is his best prediction of the leader's action, and hence it is  linear in $\hat M_t$ exclusively.\footnote{This notion of LME is \emph{perfect} when $Y$ is public, but only \emph{Nash} when $Y$ is private but the follower's action is observed---our choice of exposition (observed actions as opposed to a public $Y$) stems from the form of our general model. We keep the abbreviation later on despite its subsequent ``perfection" property. \label{footnote:Public_Nash}} For consistency throughout the paper, we write $\beta_{3t}$ for the weight on the type in the leader's strategy at $t\in [0,T]$.

\begin{proposition}[LME---Public Case]\label{prop:Leading_Public_LME}
For all $r\geq 0$ and $T> 0$:
\begin{itemize}
\item [(i)] Existence of LME: There exists a unique LME. In this equilibrium, $a_t=\beta_{3t}\theta+(1-\beta_{3t})\hat M_t$ and $\hat a_t = \hat{\mathbb{E}}_t[a_t]= \hat M_t,$ where $(\beta_{3t})_{t\in [0,T]}$ is deterministic. 
\item [(ii)] Signaling and learning: $\beta_{3t}\in (1/2,1)$ for $t<T$, $\beta_{3T}=1/2$, and $\beta_3$ is strictly decreasing. Also, $\gamma_t:=\hat{\mathbb{E}}_t[(\theta-\hat M_t)^2]$ evolves according to $\dot\gamma_t=-\left(\frac{\gamma_t\beta_{3t}}{\sigma_Y}\right)^2$.
\end{itemize}
\end{proposition}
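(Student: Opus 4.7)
My plan is to treat the LME construction as a joint filtering–control problem that reduces to a small boundary-value ODE system.

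First, I conjecture a leader strategy $a_t=\beta_{3t}\theta+(1-\beta_{3t})\hat M_t$ with $(\beta_{3t})$ to be determined, and derive the follower's reply and filter. Since the follower is myopic with quadratic loss, his best reply is $\hat a_t=\hat{\mathbb{E}}_t[a_t]=\hat M_t$ under this conjecture. Applying Kalman–Bucy filtering to the signal $dY_t=a_tdt+\sigma_Y dZ^Y_t$ immediately gives the innovation-form update $d\hat M_t=(\gamma_t\beta_{3t}/\sigma_Y^2)(dY_t-\hat M_t\,dt)$ and the Riccati equation $\dot\gamma_t=-(\gamma_t\beta_{3t}/\sigma_Y)^2$ stated in part (ii), for any measurable coefficient path $(\beta_{3t})$.

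Second, since the leader observes the follower's action, she observes $\hat M_t$, and her problem becomes a standard stochastic control with state $(\hat M,\theta)$: under a deviation $a$ at time $t$, $d\hat M_t=\phi_t(a-\hat M_t)dt+\phi_t\sigma_Y dW_t$ with $\phi_t:=\gamma_t\beta_{3t}/\sigma_Y^2$ inherited from the conjecture. I posit the quadratic ansatz $V(t,\hat M,\theta)=-f(t)(\theta-\hat M)^2+g(t)$, motivated by translation invariance in $(\theta,\hat M)$. The HJB first-order condition gives $a=\tfrac12(\theta+\hat M)+\tfrac12\phi_t f(t)(\theta-\hat M)$, so equilibrium consistency requires $\beta_{3t}=\tfrac12+\tfrac12\phi_t f(t)$, i.e., $\beta_{3t}=\sigma_Y^2/(2\sigma_Y^2-f(t)\gamma_t)$. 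Substituting the maximizer back and matching the coefficient of $(\theta-\hat M)^2$ yields the scalar ODE $\dot f=rf+2\beta_{3t}^2-1$ with terminal condition $f(T)=0$, coupled to the Riccati for $\gamma$ (initial condition $\gamma^o$) and the algebraic relation among $\beta_{3t}$, $f$, $\gamma$. The constant coefficient produces a separate ODE for $g$ that does not enter equilibrium play.

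Existence and uniqueness of the LME then reduce to existence and uniqueness for this two-point boundary-value system on $[0,T]$ in the invariant region $f\gamma<\sigma_Y^2$. Local existence/uniqueness is Picard–Lindelöf; global extension uses a priori bounds—$\dot f(T)=-\tfrac12$ pins $f>0$ on a left-neighborhood of $T$, and the closed form for $\beta_{3t}$ together with positivity/monotonicity of $\gamma$ keep coefficients bounded. Part (ii) then follows directly: $f(T)=0$ gives $\beta_{3T}=1/2$; $f>0$ with $\gamma>0$ and the closed form give $\beta_{3t}\in(1/2,1)$ on $[0,T)$; and differentiating $\beta_{3t}=\sigma_Y^2/(2\sigma_Y^2-f\gamma)$ using the two ODEs yields $\dot\beta_{3t}=(\beta_{3t}^2\gamma_t/\sigma_Y^2)(rf(t)+\beta_{3t}-1)$, whose sign I track via $rf+\beta-1$—negative at $t=T$, and a bootstrap argument shows it remains negative throughout $[0,T)$. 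A standard Itô verification against admissible strategies, using polynomial growth of $V$ in Gaussian states, closes the argument. The main obstacle I anticipate is the joint invariance argument: showing that $f\gamma<\sigma_Y^2$ and $rf+\beta-1<0$ persist on $[0,T)$ uniformly in $T$ and $r\geq0$, without which the system may not be well-posed or the monotonicity may fail.
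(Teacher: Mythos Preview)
Your approach is correct and essentially parallel to the paper's, but with one genuine simplification: by invoking translation invariance in $(\theta,\hat M)$ you posit the reduced value function $V=-f(t)(\theta-\hat M)^2+g(t)$ from the outset, which collapses the analysis to the scalar pair $(f,\gamma)$ and the algebraic identity $\beta_{3t}=\sigma_Y^2/(2\sigma_Y^2-f\gamma)$. The paper instead starts from the full six-coefficient quadratic $V(\theta,m,t)$, derives an ODE system for $(\beta_0,\beta_1,\beta_3,\gamma)$, and then \emph{proves} in its Lemma~A.1 that $\beta_0\equiv 0$ and $\beta_1+\beta_3\equiv 1$; only at that point does its system reduce to yours. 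Your $\dot f=rf+2\beta_3^2-1$ and $\dot\beta_3=(\beta_3^2\gamma/\sigma_Y^2)(rf+\beta_3-1)$ are exactly the paper's forward $\beta_3$-ODE after that reduction, and your sign analysis of $rf+\beta_3-1$ is the forward version of the paper's backward argument that $H_t:=\dot\beta_{3t}/\beta_{3t}>0$.

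Two points to tighten. First, the BVP step: Picard--Lindel\"of alone does not give existence for a two-point problem with $f(T)=0$ and $\gamma(0)=\gamma^o$ coupled through $\beta_3$; you still need a shooting argument (parametrize $\gamma(T)=\gamma^F$, solve backward, and use your a priori bounds $\beta_3\in(1/2,1)$, $f\gamma\in(0,\sigma_Y^2)$ to show some $\gamma^F$ lands $\gamma(0)$ at $\gamma^o$). This is precisely what the paper does. Second, uniqueness: because you hard-wire $a_t=\beta_{3t}\theta+(1-\beta_{3t})\hat M_t$ into the ansatz, you have only shown uniqueness within that class. To get uniqueness among \emph{all} LMEs you must either argue a priori that translation invariance forces this form (a separate symmetry lemma), or start from the general quadratic as the paper does and derive $\beta_0\equiv0$, $\beta_1+\beta_3\equiv1$ from the FOC/ODE system.
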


In the LME, the leader reduces her degree of adaptation below the full-information solution $a\equiv \theta$ to coordinate with the follower---we refer to the weight $\beta_3$ on the type as the \emph{signaling coefficient}. This coefficient shapes the follower's learning captured by the posterior variance $\gamma_t$, and it remains above 1/2---the value in the static equilibrium $(\frac{1}{2}\theta+\frac{1}{2}\hat M,\hat M)$---except at the end of the game. Indeed, by more aggressively signaling her know-how, the leader can steer the follower's behavior toward the first-best action faster, effectively investing in the follower's adaptation. This incentive falls deterministically (i.e., $\beta_{3t}$ is decreasing) because of both horizon and learning effects---there is less time remaining to enjoy those benefits, and steering behavior becomes more difficult as information accumulates.\footnote{In fact, $\frac{d\hat M_t}{da_t}=\frac{\gamma_t\beta_{3t}}{\sigma_Y^2}$, so the sensitivity of the follower's belief falls with lower values of $\gamma$.}

\vspace{-0.1in}
\paragraph{No-feedback case.} Suppose now that the leader ceases to receive any information about the follower: how does she forecast $\hat M$, and how are signaling and learning affected?

To gain intuition, let us first elaborate on the form of the follower's belief when it is public. Upon conjecturing a linear Markov strategy by the leader, the follower's learning has a Gaussian structure, and so $\hat M$ is a linear function of the history $Y^t:=(Y_s:0\leq s<t)$: namely, there are deterministic $A_1(\cdot)$ and $A_2(\cdot,\cdot)$ such that
\begin{eqnarray}\label{eq:leading_Mhat_integral}
\hat M_t = A_1(t)+\int_0^t A_2(t,s)dY_s,\; t\in [0,T].
\end{eqnarray}
The leader's forecast of $\hat M$ is trivially given by the same formula, as observing $\hat a$ (or $Y$) reveals the follower's belief. That is, the leader forecasts by \emph{output}: $Y$, which reflects the consequences of her actions from the follower's perspective, fully determines her inferences.

In the absence of feedback, the signal $Y$ is not available,  making the leader's forecasting problem nontrivial. However, to the extent that $\hat M$ is as above (for potentially different $A_1$ and $A_2$), the leader can take an expectation in \eqref{eq:leading_Mhat_integral} to obtain her \emph{second-order belief}
\begin{eqnarray}\label{eq:leading_M_integral}
M_t = A_1(t) + \int_0^t A_2(t,s)a_sds,\; t\in [0,T].
\end{eqnarray}
Crucially, this belief now is a function of the leader's past actions, so the leader forecasts by \emph{input}: absent any information, the leader must reflect on her past behavior to assess how much knowledge has been transferred. As natural as it seems, however, observe that past play was completely irrelevant with perfect feedback: higher (lower) past actions only indicated that more negative (positive) shocks thwarted the leader's efforts, which is immaterial for future decision-making. The importance of forecasting by input versus output depends on the setting, but many situations will entail both; our general model captures this feature.

It is clear from \eqref{eq:leading_Mhat_integral} and \eqref{eq:leading_M_integral} that a common element in these two extreme information structures studied is that, in both, future beliefs respond to different continuation strategies linearly; through this \emph{forward-looking exercise}, the leader determines her best response to the follower's strategy. However, the follower's behavior will depend on his assessment of the informational content behind the leader's actions, and this is a \emph{backward-looking exercise}: how do different types behave at their respective histories? Whether beliefs are a function of commonly observed versus private information then introduces important differences.

Abusing notation, let us now consider a linear strategy for the leader of the form
\begin{eqnarray}\label{eq:leading_strat_NF}
a_t =\beta_{0t}\mu+\beta_{1t}M_t+\beta_{3t}\theta,
\end{eqnarray}
with the coefficients again being deterministic. It is evident that $M$ is generically private to the leader under \eqref{eq:leading_strat_NF}, as her actions carry her type---how will the follower coordinate then? Inspection of \eqref{eq:leading_M_integral} and \eqref{eq:leading_strat_NF}, however,  suggests a linear relationship between $M$ and $\theta$. Suppose then that the follower conjectures that, on path, $(M_t)_{t\in [0,T]}$ satisfies the \emph{representation}
\begin{eqnarray}\label{eq:leading_M_representation}
M_t =\left(1- \frac{\gamma_t}{\gamma^o}\right)\theta +\frac{\gamma_t}{\gamma^o}\mu,
\end{eqnarray}
where $(\gamma_t)_{t\in [0,T]}$ again denotes the follower's posterior variance but now under \eqref{eq:leading_strat_NF}--\eqref{eq:leading_M_representation}. As a proof of concept, note that setting $\gamma_0=\gamma^o$ in \eqref{eq:leading_M_representation} leads to $M_0=\mu=\hat{M}_0$, consistent with the common prior at time zero; conversely, if enough signaling has occurred, the leader thinks that the follower must have learned the state: $\gamma_t\approx 0$ leads to $M_t\approx \theta$ in the same formula.

The representation \eqref{eq:leading_M_representation} is key. First, it encodes how private monitoring alters the extent of information transmission. In fact, the new signaling coefficient---denoted $\alpha$---is obtained as the total weight on $\theta$ when inserting \eqref{eq:leading_M_representation} into the leader's strategy \eqref{eq:leading_strat_NF}, which yields
$$
\alpha:= \beta_3+\beta_1\chi,\;\; \text{ where }\;\; \chi:= 1-\frac{\gamma}{\gamma^o}.
$$

We refer to the correction term $\beta_1\chi$ stemming from \eqref{eq:leading_M_representation} as the \emph{history-inference effect} on signaling. Indeed, since the leader forecasts by input, the follower needs to infer the leader's private histories to extract the correct informational content from $Y$. From his perspective, how differently would a leader of a marginally higher type behave \emph{given} a history $Y^t$? With perfect feedback, the overall effect is $\beta_3$, as all types agree on the value that $\hat M$ takes (i.e., they pool along the belief dimension); this is not the case when there is no feedback, as their differing past actions also lead them to perceive different continuation games via $M$.

Second, the representation prevents the state space from growing: via \eqref{eq:leading_M_representation}, the follower's belief about $M$ (i.e., a third-order belief) is a function of $\hat M$, and so $\mathbb{E}_t[\hat{\mathbb{E}}_t[M_t]]$ is a function of $M_t$, and so forth. The linear-quadratic-Gaussian structure then ensures that $(\theta,M,\mu, t)$, with $M$ as in \eqref{eq:leading_M_integral}, summarizes all that is payoff-relevant for the leader after all private histories.\footnote{To conjecture the outcome of the game (i.e., \eqref{eq:leading_M_representation}), and hence to be able to interpret signals correctly, our players must generically anticipate how play unfolds when \eqref{eq:leading_M_representation} fails. Lemma \ref{lem:Belief_Rep_NF} in the Appendix proves that  \eqref{eq:leading_M_representation} holds. The method for showing existence is part of a broader approach discussed in Section \ref{subsec:existence_interior}.}

\begin{proposition}[LME---No-Feedback Case]\label{prop:Leading_NF_LME}
For all $r\geq 0$ and $T> 0$:
\begin{itemize}
\item [(i)] Existence: There exists an LME. In any such equilibrium, $\beta_0+\beta_1+\beta_3=1$; $\beta_{3t}>1/2$, $t\in [0,T)$; $\beta_{3T}=1/2$; and $\beta_1>0$ over $[0,T]$.
\item [(ii)] Signaling and learning: $\alpha:=\beta_3+\beta_1\chi$, where $\chi=1-\gamma_t/\gamma^o$, satisfies $\alpha>1/2$; $\alpha_T\to 1$ as $T\to\infty$; and $\alpha_t'\geq 0$, $t\in [0,T)$, with strict inequality if and only if $r>0$. Also, $\gamma_t:=\hat{\mathbb{E}}_t[(\theta-\hat M)^2]$ evolves as $\dot\gamma_t=-\left(\frac{\alpha_t\gamma_t}{\sigma_Y}\right)^2$.
\end{itemize}
\end{proposition}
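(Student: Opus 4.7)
The plan is to follow the paper's general recipe specialized to the no-feedback leadership model. Conjecture a linear strategy $a_t = \beta_{0t}\mu + \beta_{1t}M_t + \beta_{3t}\theta$ with deterministic coefficients, and invoke Lemma \ref{lem:Belief_Rep_NF} for the on-path representation $M_t = \chi_t\theta + (1-\chi_t)\mu$ with $\chi_t = 1-\gamma_t/\gamma^o$. Under this conjecture the follower observes $dY_t = (B_t\mu + \alpha_t\theta)dt + \sigma_Y dZ_t^Y$ with $B_t := \beta_{0t} + \beta_{1t}(1-\chi_t)$ and $\alpha_t := \beta_{3t} + \beta_{1t}\chi_t$, so his Kalman filter delivers the Riccati equation $\dot\gamma_t = -(\alpha_t\gamma_t/\sigma_Y)^2$ with $\gamma_0 = \gamma^o$ (the learning claim in (ii)) and a linear $\hat M_t$; the follower's myopic best response $\hat a_t = \hat{\mathbb{E}}_t[a_t]$ reduces to a function of $\hat M_t$ alone, placing us in a private-value setting.

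Next I would derive the leader's HJB on the private state $(M,\theta,t)$. Taking expectations of the follower's innovation conditional on $(\theta, (a_s)_{s\le t})$ gives $\dot M_t = k_t(a_t - B_t\mu - \alpha_t M_t)$ with $k_t := \alpha_t\gamma_t/\sigma_Y^2$, so $M$ evolves deterministically from the leader's vantage. Her expected flow cost is $(a-\theta)^2 + (a-M)^2$ up to a deterministic variance correction, so the value function $V(t,M,\theta)$ is quadratic in $(M,\theta)$ and the FOC is $a^\ast = \tfrac{1}{2}(M+\theta) - \tfrac{1}{4}k_t V_M$, linear in $(\mu,M,\theta)$. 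Matching coefficients with the conjecture yields a coupled ODE system for the entries of $V$'s quadratic form, equivalently for $(\beta_0,\beta_1,\beta_3)$. The terminal condition $V(T,\cdot,\cdot)\equiv 0$ forces the static-game best response $\beta_{0T} = 0$ and $\beta_{1T} = \beta_{3T} = 1/2$, which already delivers $\beta_{3T} = 1/2$ and $\alpha_T = \tfrac{1}{2}(1+\chi_T)$ from (ii). The invariant $\beta_0+\beta_1+\beta_3 = 1$ is a prior-translation identity (shift both $\theta$ and $M$ by $\mu$) and is also a conserved quantity of the ODE system pinned at $T$ by $0 + \tfrac{1}{2}+\tfrac{1}{2}=1$; this pins down $\beta_0$ once $\beta_1,\beta_3$ are known.

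The main obstacle is existence for the resulting two-point BVP: the forward learning ODE carries $\chi_0 = 0$ while the backward strategy ODEs carry terminal data at $T$. Because the leadership setting is of private value, I would invoke Lemma \ref{lem:GammaChiRelationship} to collapse the shooting to a single dimension: parameterize candidate solutions by the unknown $\alpha_T\in[1/2,1]$ (equivalently $\chi_T\in[0,1]$), integrate backward to $t=0$, and seek a value with $\chi(0)=0$. Continuity of the resulting shooting map in $\alpha_T$, together with its opposite-signed values at the two endpoints (at $\chi_T=0$ one obtains $\chi(0)>0$, while at $\chi_T=1$ the forward trajectory undershoots) let the intermediate value theorem produce a fixed point. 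A priori bounds on the $\beta_i$ (read off the FOC and the constraint $\beta_0+\beta_1+\beta_3 = 1$) keep the ODEs non-singular over arbitrary $[0,T]$, which is what allows the statement to hold for all $T>0$ rather than only for short horizons.

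With existence in hand, the remaining claims reduce to ODE analysis. $\beta_{3t}>1/2$ on $[0,T)$ follows from $\beta_{3T}=1/2$ plus the sign of the signaling-incentive term in the $\beta_3$-ODE in backward time, and $\beta_{1t}>0$ is analogous starting from $\beta_{1T}=1/2$. For $\alpha_t := \beta_{3t}+\beta_{1t}\chi_t$: $\alpha_T = \tfrac{1}{2}(1+\chi_T)>1/2$ and the bound propagates back through the ODE, giving $\alpha>1/2$ throughout. Differentiating $\alpha$ along the system produces $\dot\alpha_t = r\cdot(\text{strictly positive terms})$, so $\dot\alpha\equiv 0$ when $r=0$ and $\dot\alpha>0$ when $r>0$: discounting breaks an ex-ante symmetry that would otherwise spread signaling intensity uniformly over time. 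Finally, $\alpha_T\to 1$ as $T\to\infty$ is an asymptotic statement: on long horizons the leader's cumulative signaling drives $\gamma_T\to 0$ and $\chi_T\to 1$, so $\alpha_T = \tfrac{1}{2}(1+\chi_T)\to 1$.
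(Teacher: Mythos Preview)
Your overall architecture (HJB with a quadratic value function, backward ODEs for $(\beta_0,\beta_1,\beta_3,\gamma)$, one-dimensional shooting via the $\gamma$--$\chi$ link, then comparison arguments) matches the paper. However, there is a concrete misstep in the flow payoff that propagates into the terminal conditions and the $\alpha$-ODE.

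The follower's myopic best reply is $\hat a_t=\hat{\mathbb E}_t[a_t]=B_t\mu+\alpha_t\hat M_t$ with $B_t:=\beta_{0t}+\beta_{1t}(1-\chi_t)$; it is \emph{not} $\hat M_t$. Hence the leader's expected flow cost is $(a-\theta)^2+(a-B_t\mu-\alpha_t M_t)^2+\alpha_t^2\gamma_t\chi_t$, not $(a-\theta)^2+(a-M)^2$. The static fixed point at $T$ therefore gives $\beta_{3T}=1/2$ and $\beta_{1T}=\tfrac12\alpha_T$, which together with $\alpha_T=\beta_{3T}+\beta_{1T}\chi_T$ yields
\[
\beta_{1T}=\frac{1}{2(2-\chi_T)},\qquad \alpha_T=\frac{1}{2-\chi_T},
\]
not your $\beta_{1T}=\tfrac12$ and $\alpha_T=\tfrac12(1+\chi_T)$. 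This matters for the monotonicity of $\alpha$: the paper obtains (in backward time) $\dot\alpha_t=r\alpha_t[1-\alpha_t(2-\chi_t)]$ and proves $\alpha_t\geq \tfrac{1}{2-\chi_t}$ by comparison, using precisely the terminal equality $\alpha_T=\tfrac{1}{2-\chi_T}$ to launch the argument. With your incorrect terminal value the comparison does not start at equality, and the derived ODEs would differ, so your claim ``$\dot\alpha_t=r\cdot(\text{strictly positive})$'' is not established.

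Two smaller points. First, the shooting description is garbled: since $\gamma$ is decreasing (equivalently $\chi$ increasing) in forward time, ``$\chi_T=0$ gives $\chi(0)>0$'' cannot hold. The paper instead parametrizes the backward IVP by $\gamma^F:=\gamma_T\in[0,\gamma^o)$, shows the solutions are uniformly bounded while $\gamma\in[0,\gamma^o]$, and then uses continuity in $\gamma^F$ to hit $\gamma_0=\gamma^o$. Second, you need not invoke Lemma~\ref{lem:GammaChiRelationship} here (it is stated for $\sigma_X\in(0,\infty)$); the relation $\chi_t=1-\gamma_t/\gamma^o$ in the no-feedback case is delivered directly by Lemma~\ref{lem:Belief_Rep_NF}.
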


From part (ii) we see that private monitoring overturns strictly decreasing signaling effects expected to arise under the traditional logic of public beliefs: $\alpha$ is non-decreasing.

\begin{figure}[htbp]
\begin{subfigure}{.5\linewidth}
\centering
\includegraphics[height=1.5in]{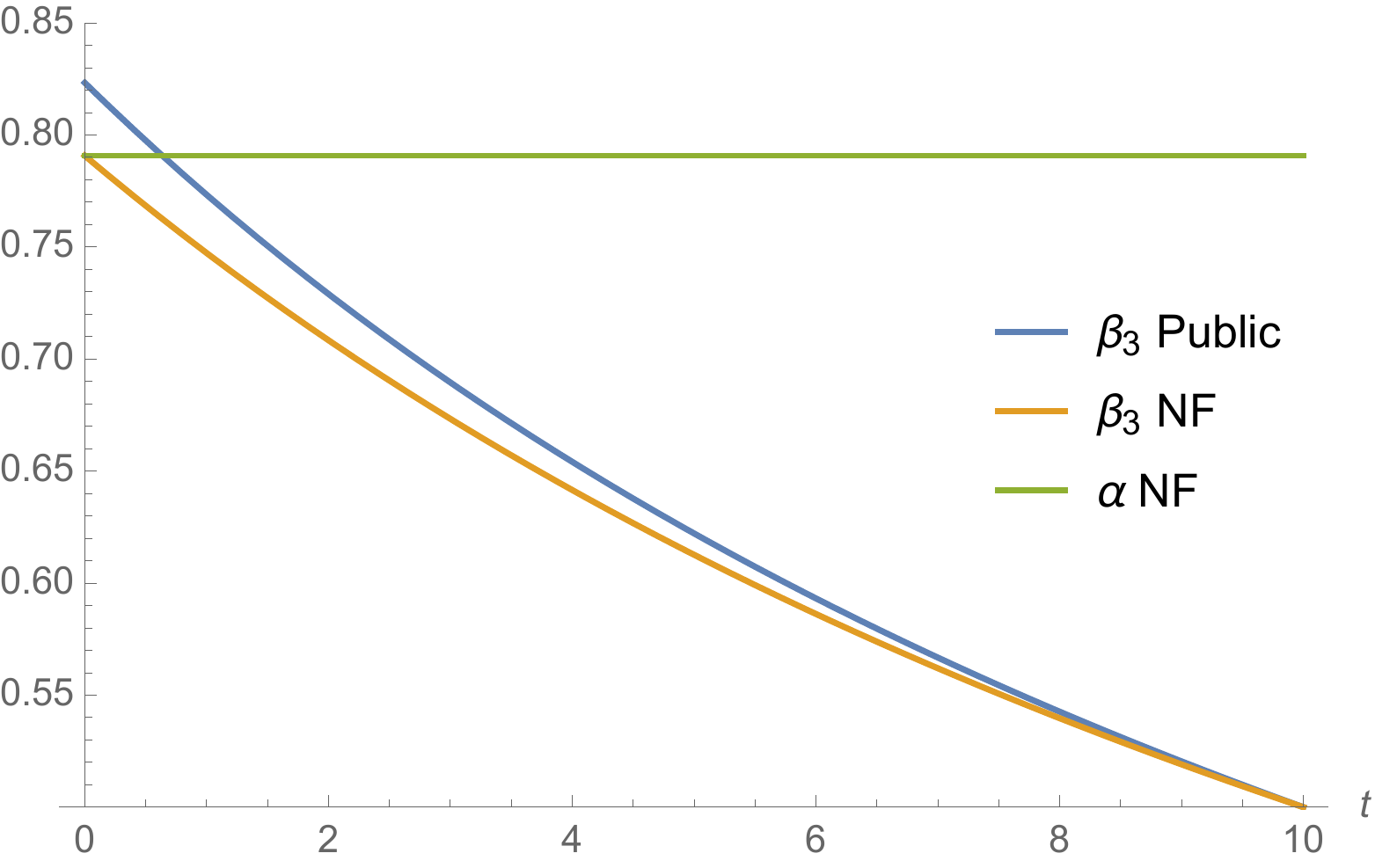}
%\label{fig:sub1}
\end{subfigure}%
\begin{subfigure}{.5\linewidth}
\centering
\includegraphics[height=1.5in]{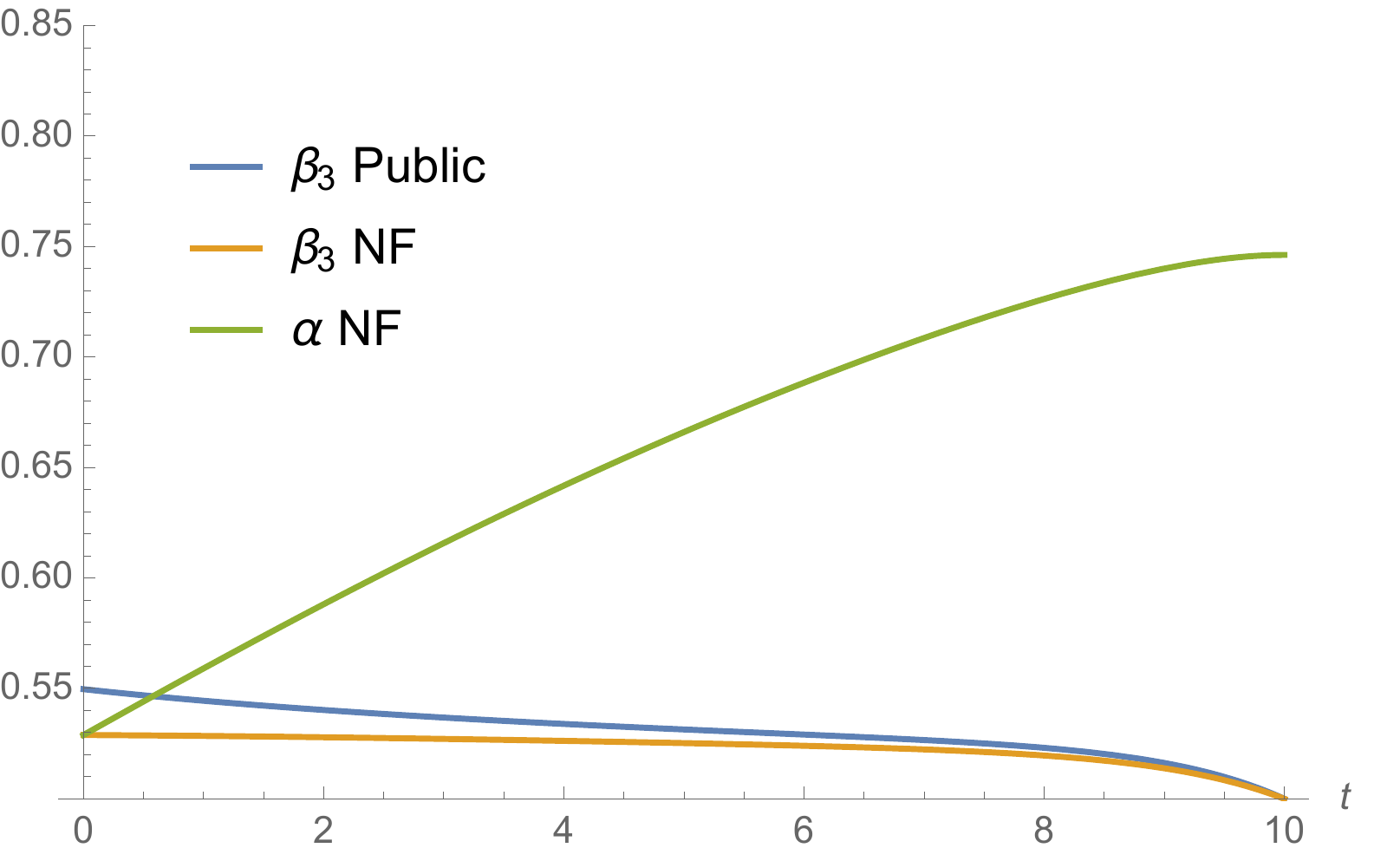}
%\label{fig:sub2}
\end{subfigure}
  \caption{Left $r=0$; Right: $r=1$. Other parameter values: $\gamma^o=1$, $\sigma_Y=1.5$, $T=10$.}\label{f1}
\end{figure}

\vspace{-0.3in}

 \paragraph{Comparison.} Figure 1 plots the signaling \emph{coefficients} in each LME. In the no-feedback case, $\beta_3$ is decreasing, so a nondecreasing $\alpha$ implies that the history-inference effect \emph{increases} over time. Indeed, because higher types take higher actions holding everything else fixed, they will expect their followers to have higher beliefs: this effect then grows---reflected in $M$ attaching an increasing weight $\chi$ to $\theta$ in \eqref{eq:leading_M_representation}---as past play acquires more relevance for predicting the continuation game as time progresses. With a positive coordination motive ($\beta_1>0$) that also strengthens with time, higher types gradually take even higher actions via this second-order belief channel, enhancing the informational content of the leader's action.

Being forced to rely on her past actions to forecast the follower's understanding essentially imposes discipline on the leader: she does not cater to the follower's belief as she would in the public case. In turn, this suggests that more knowledge is transferred to the follower. To assess the validity of this conjecture, we take advantage of the model's analytic solutions in the patient ($r=0$) and myopic ($r=\infty$) cases. Let $\gamma^{\textit{Pub}}$ and $\gamma^{\textit{NF}}$ denote the follower's posterior variance in the public and no-feedback cases, respectively. 

\begin{proposition}[Total learning]\label{prop:Leading_Learning_Comp}
(i) If $r=0$, $\beta_{30}^{\textit{Pub}}> \alpha_0$ and $\gamma_T^{\textit{Pub}}>\gamma_T^{\textit{NF}}$, all $T>0$; (ii) Given $T>0$ and $\delta\in (0,T)$, $\gamma_t^{\textit{Pub}}>\gamma_t^{\textit{NF}}$ for $t\in [T-\delta, T]$ if $r$ is large enough.
\end{proposition}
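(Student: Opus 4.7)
The plan is to exploit closed-form structure available in the two limiting regimes $r=0$ and $r\to\infty$, reducing each inequality to a scalar comparison.

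For part (i), I first derive in each information structure a scalar equation pinning down the initial signaling coefficient. In the public case, the leader's HJB with the quadratic ansatz $V=-\frac{1}{2}w(t)(\theta-\hat{M})^{2}+D(t)$ gives the FOC $\beta_{3}=1/2+w\gamma\beta_{3}/(4\sigma_{Y}^{2})$ and the envelope ODE $\dot{w}=4\beta_{3}^{2}-2$; combining these yields $\dot{\beta}_{3}=\gamma\beta_{3}^{2}(\beta_{3}-1)/\sigma_{Y}^{2}$, which paired with $\dot{\gamma}=-\gamma^{2}\beta_{3}^{2}/\sigma_{Y}^{2}$ admits the conservation law $(1-\beta_{3})\gamma\equiv K$. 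Hence $\gamma_{T}^{\textit{Pub}}=2\gamma^{o}(1-\beta_{30}^{\textit{Pub}})$, and integrating the reduced equation $\dot{\beta}_{3}=-K\beta_{3}^{2}/\sigma_{Y}^{2}$ over $[0,T]$ delivers $\beta_{30}^{\textit{Pub}}(1-\beta_{30}^{\textit{Pub}})\gamma^{o}T/\sigma_{Y}^{2}=2\beta_{30}^{\textit{Pub}}-1$. In the no-feedback case, Proposition \ref{prop:Leading_NF_LME} gives $\alpha$ constant on $[0,T)$ when $r=0$; the myopic terminal play $a_{T}=(\theta+M_{T})/2$ yields $\beta_{3T}=\beta_{1T}=1/2$, so $\alpha_{0}=\alpha_{T}=(1+\chi_{T})/2=1-\gamma_{T}^{\textit{NF}}/(2\gamma^{o})$, and combining with the integrated Riccati $1/\gamma_{T}^{\textit{NF}}-1/\gamma^{o}=\alpha_{0}^{2}T/\sigma_{Y}^{2}$ gives $2\alpha_{0}^{2}(1-\alpha_{0})\gamma^{o}T/\sigma_{Y}^{2}=2\alpha_{0}-1$.

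The second step uses the identical-form relation $\gamma_{T}=2\gamma^{o}(1-s_{0})$ obtained in both cases to show that the two inequalities in part (i) are equivalent and reduce to a single scalar root comparison. Setting $c:=\gamma^{o}T/\sigma_{Y}^{2}$, both $\beta_{30}^{\textit{Pub}}$ and $\alpha_{0}$ are characterized as the unique root in $(1/2,1)$ of $F_{P}(\beta):=c\beta(1-\beta)-(2\beta-1)$ and $F_{N}(\alpha):=2c\alpha^{2}(1-\alpha)-(2\alpha-1)$, respectively. Since $F_{P}(\beta)-F_{N}(\beta)=c\beta(1-\beta)(1-2\beta)$ has a definite sign on $(1/2,1)$, a standard root-comparison argument, together with the monotonicity of $F_{P}$ and $F_{N}$ at their roots, delivers the ordering.

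For part (ii), the plan is a myopic-limit ODE-comparison argument. As $r\to\infty$ the public leader's signaling incentive vanishes and $\beta_{3}^{\textit{Pub}}\to 1/2$ uniformly on $[0,T]$; in the no-feedback case, the myopic FOC gives $\beta_{3}=\beta_{1}=1/2$, $\beta_{0}=0$, so $\alpha\to 1-\gamma/(2\gamma^{o})$, which strictly exceeds $1/2$ once $\gamma<\gamma^{o}$, i.e., for any $t>0$. Both Riccati equations $\dot{\gamma}=-(\mathrm{signaling})^{2}\gamma^{2}/\sigma_{Y}^{2}$ start from the common initial value $\gamma^{o}$; the strict pointwise dominance of the no-feedback signaling for $t>0$ then yields $\gamma_{t}^{\textit{NF}}<\gamma_{t}^{\textit{Pub}}$ on $(0,T]$ in the limit. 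Continuity of the equilibrium ODE system in $r$, supplied by the existence machinery developed later in the paper, preserves the strict inequality at large-but-finite $r$ uniformly on any $[T-\delta,T]$ with $\delta\in(0,T)$.

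The principal obstacle is the scalar root-comparison in part (i), which requires careful sign-tracking of $F_{P}-F_{N}$ together with local monotonicity at each root; for part (ii), the delicate step is justifying continuity of the equilibrium in $r$, which rests on the boundary-value problem existence results developed elsewhere in the paper.
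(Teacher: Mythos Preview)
Your overall strategy---exploiting the closed-form structure at $r=0$ and $r\to\infty$ and reducing to scalar root comparisons---is the natural one and matches what the paper does in its online appendix. However, there is a computational error in the no-feedback terminal condition that breaks part~(i). You assert that myopic terminal play is $a_T=(\theta+M_T)/2$, hence $\beta_{1T}=1/2$. This conflates $\mathbb{E}_T[\hat a_T]$ with $M_T$. In fact $\hat a_T=(1-\alpha_T)\mu+\alpha_T\hat M_T$ (using $\beta_0+\beta_1+\beta_3=1$), so $\mathbb{E}_T[\hat a_T]=(1-\alpha_T)\mu+\alpha_T M_T$; solving the resulting fixed point gives $\beta_{1T}=\tfrac{1}{2(2-\chi_T)}$ and $\alpha_T=\tfrac{1}{2-\chi_T}$, exactly the terminal value recorded in the proof of Proposition~\ref{prop:Leading_NF_LME}. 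The correct relations are therefore $\gamma_T^{\textit{NF}}=\gamma^o(1-\alpha_0)/\alpha_0$ (not $2\gamma^o(1-\alpha_0)$) and $c\alpha_0^2(1-\alpha_0)=2\alpha_0-1$ (no factor of~2).

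This error has two consequences. First, with your $F_N$ the difference $F_P-F_N=c\beta(1-\beta)(1-2\beta)$ is \emph{negative} on $(1/2,1)$, and since $F_P$ is strictly decreasing there your root comparison would actually yield $\beta_{30}^{\textit{Pub}}<\alpha_0$, the opposite of the claim. With the correct $F_N$ one gets $F_P-F_N=c\beta(1-\beta)^2>0$, and then the comparison delivers $\beta_{30}^{\textit{Pub}}>\alpha_0$ as desired. Second, because the terminal-variance formulas are no longer of identical form, the two inequalities in~(i) are \emph{not} equivalent; $\gamma_T^{\textit{Pub}}>\gamma_T^{\textit{NF}}$ requires a separate argument (e.g., comparing $\int_0^T\beta_{3t}^2\,dt$ against $\alpha_0^2 T$, or direct manipulation of the two scalar equations). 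The same slip reappears in part~(ii), where the myopic-limit signaling coefficient is $1/(2-\chi)$ rather than $(1+\chi)/2$; fortunately both exceed $1/2$ for $\chi>0$, so your qualitative comparison for~(ii) survives.
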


Consequently, when the leader is either patient or very impatient, in the no-feedback case the follower always has learned more by the end of the interaction. To show that this result is nontrivial, part (i) states that in the beginning, a patient leader always signals less aggressively in the no-feedback case---this is due to the anticipation of the history-inference effect at play later. Conversely, when the leader is myopic, the previous inter-temporal substitution effect disappears, so the only difference between the signaling coefficients in the no-feedback and public cases is the history-inference effect that arises in the former. As argued previously, this effect is always positive; part (ii) then follows by uniform convergence.\footnote{The ranking of terminal learning appears to hold for all $r>0$. See Figure 1 in the online appendix.}

We conclude with a discussion on how payoffs and information transmission connect:

\begin{proposition}[Team's payoff]\label{prop:Leading_Payoff_Comp} (i) If $r=0$, the team's ex ante payoff is larger in the public case; (ii) $\forall r \geq  0$, ex ante \emph{undiscounted} coordination costs equal $\sigma_Y^2\log\left(\frac{\gamma^o}{\gamma_T}\right)$ in each case.
\end{proposition}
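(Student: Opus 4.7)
The plan is to treat Part (ii) as a direct computation exploiting the common structure of the two learning ODEs, and Part (i) via a decomposition of the team cost into adaptation and coordination components, with the latter controlled by Part (ii) together with Proposition \ref{prop:Leading_Learning_Comp}.

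For Part (ii), I start by computing the equilibrium coordination error $a_t - \hat a_t$ in each case. In the public case, Proposition \ref{prop:Leading_Public_LME} immediately gives $a_t - \hat a_t = \beta_{3t}(\theta - \hat M_t)$. In the no-feedback case, I substitute the representation \eqref{eq:leading_M_representation} into $\hat a_t = \hat{\mathbb{E}}_t[a_t]$, using $\hat{\mathbb{E}}_t[\theta] = \hat M_t$, to obtain $\hat{\mathbb{E}}_t[M_t] = \chi_t\hat M_t + (1-\chi_t)\mu$, and hence $a_t - \hat a_t = (\beta_{3t} + \beta_{1t}\chi_t)(\theta - \hat M_t) = \alpha_t(\theta - \hat M_t)$. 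Taking ex ante expectations and using that under Gaussian filtering the deterministic posterior variance coincides with the unconditional mean-squared error, $\mathbb{E}[(\theta - \hat M_t)^2] = \gamma_t$, the undiscounted coordination cost in each case is $\int_0^T c_t^2 \gamma_t\,dt$ with $c_t \in \{\beta_{3t}, \alpha_t\}$. The shared learning ODE $\dot\gamma_t = -c_t^2 \gamma_t^2/\sigma_Y^2$ from Propositions \ref{prop:Leading_Public_LME}(ii) and \ref{prop:Leading_NF_LME}(ii) rearranges to $c_t^2 \gamma_t = -\sigma_Y^2 \frac{d}{dt}\log\gamma_t$, and integration yields $\sigma_Y^2 \log(\gamma^o/\gamma_T)$ regardless of $r$.

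For Part (i), with $r=0$ the leader's objective equals the team's ex ante payoff, so her LME strategy solves the leader's planning problem subject to her information constraint. I decompose the expected team cost as $\int_0^T \mathbb{E}[(a_t-\theta)^2]dt + \int_0^T \mathbb{E}[(a_t-\hat a_t)^2]dt$. Part (ii) pins down the coordination costs as $\sigma_Y^2\log(\gamma^o/\gamma_T^{\textit{Pub}})$ and $\sigma_Y^2\log(\gamma^o/\gamma_T^{\textit{NF}})$, and Proposition \ref{prop:Leading_Learning_Comp}(i) gives $\gamma_T^{\textit{Pub}} > \gamma_T^{\textit{NF}}$, so the public case has strictly smaller coordination cost. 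For the adaptation costs, Proposition \ref{prop:Leading_Public_LME} yields $\int_0^T (1-\beta_{3t}^{\textit{Pub}})^2 \gamma_t^{\textit{Pub}}dt$ in the public case, and Proposition \ref{prop:Leading_NF_LME}(ii) implies that at $r=0$ the coefficient $\alpha$ is constant, so the on-path action is $a_t^{\textit{NF}} = (1-\alpha)\mu + \alpha\theta$, giving adaptation cost $(1-\alpha)^2 \gamma^o T$. Combining these with the $r=0$ closed forms for $\alpha$ and the ODE characterization of $\beta_3^{\textit{Pub}}$ and $\gamma^{\textit{Pub}}$ closes the strict inequality.

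The main obstacle is the adaptation-cost comparison in Part (i). A naive value-of-information heuristic suggests that the larger information set in the public case should yield a weakly larger payoff since NF-feasible strategies form a subset of public-feasible ones, but executing this inside a Nash equilibrium is subtle: the follower's filter gain depends on his conjecture of the leader's strategy, so a ``mimic'' deviation in the public case would perturb the filter off the NF equilibrium path and the induced payoff need not equal $V^{\textit{NF}}$. For this reason I favor the direct route outlined above, which sidesteps the commitment issue by reducing the problem to inequalities between explicit integrals: the coordination part is handled by Part (ii) plus Proposition \ref{prop:Leading_Learning_Comp}(i), and the adaptation part becomes tractable because at $r=0$ the no-feedback signaling coefficient is a single constant $\alpha$, while in the public case $\beta_{3t}^{\textit{Pub}} \in (1/2,1)$ with $\gamma_t^{\textit{Pub}} \le \gamma^o$, so both factors in the public adaptation cost admit the sharp bounds needed to dominate $(1-\alpha)^2 \gamma^o T$ once the coordination savings are accounted for.
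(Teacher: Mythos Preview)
Your argument for Part (ii) is correct and essentially the only natural one: write $a_t-\hat a_t=c_t(\theta-\hat M_t)$ with $c_t\in\{\beta_{3t}^{\textit{Pub}},\alpha_t\}$, use $\mathbb{E}[(\theta-\hat M_t)^2]=\gamma_t$, and integrate $c_t^2\gamma_t=-\sigma_Y^2\,d(\log\gamma_t)/dt$. This is the paper's route as well.

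For Part (i), however, the proposal has a genuine gap at exactly the point you flag as ``closes the strict inequality.'' The coordination-cost half of the decomposition is fine: Part (ii) and Proposition~\ref{prop:Leading_Learning_Comp}(i) indeed give a strict advantage to the public case. But the adaptation-cost half is only asserted. The ``sharp bounds'' you invoke---$\beta_{3t}^{\textit{Pub}}\in(1/2,1)$ and $\gamma_t^{\textit{Pub}}\le\gamma^o$---yield at best $\int_0^T(1-\beta_{3t}^{\textit{Pub}})^2\gamma_t^{\textit{Pub}}\,dt\le \tfrac14\gamma^oT$, which is \emph{weaker} than the target $(1-\alpha)^2\gamma^oT$ because $\alpha>1/2$ forces $(1-\alpha)^2<1/4$. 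Worse, a pointwise comparison of $(1-\beta_{3t}^{\textit{Pub}})^2$ with $(1-\alpha)^2$ cannot work either: Proposition~\ref{prop:Leading_Learning_Comp}(i) gives $\beta_{30}^{\textit{Pub}}>\alpha$, while $\beta_{3T}^{\textit{Pub}}=1/2<\alpha$, so the curves cross and the sign of the pointwise adaptation-cost gap flips over $[0,T]$. Hence neither a crude bound nor a pointwise argument delivers the claimed domination, even after netting out the coordination savings.

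To make your direct route go through at $r=0$ you actually need the closed forms, not just their existence. In the public case the identity $d\beta_3/d\gamma=(1-\beta_3)/\gamma$ (divide the two ODEs) integrates to $(1-\beta_{3t}^{\textit{Pub}})\gamma_t^{\textit{Pub}}=\gamma_T^{\textit{Pub}}/2$ for all $t$, which reduces the public adaptation integral to an explicit function of $\gamma_T^{\textit{Pub}}$ and $T$; on the no-feedback side, $\alpha=\gamma^o/(\gamma^o+\gamma_T^{\textit{NF}})$ together with $1/\gamma_T^{\textit{NF}}-1/\gamma^o=\alpha^2T/\sigma_Y^2$ pins down both $\alpha$ and $\gamma_T^{\textit{NF}}$. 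The comparison then becomes an explicit algebraic inequality in $(\gamma^o,\sigma_Y,T)$ that must be verified---this is the missing step, and it is not short. As written, your Part (i) is a plan rather than a proof.
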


The direct effect of shutting down the feedback channel is increased coordination costs: holding everything else fixed, the leader is now uncertain about a concave payoff. To understand the \emph{strategic} effect, part (ii) is key: in both cases, the extent of the follower's learning is a measure of \emph{total coordination costs}. Consider the perfect-feedback case: if the leader chooses an action that the follower can match, no coordination costs are created, but this necessarily implies that the leader is neglecting her type. Consequently, the leadership transmits its knowledge only when it introduces changes to which the organization does not perfectly know how to respond, generating transient miscoordination in the process.

From this perspective, private monitoring exacerbates such costs by making the follower's actions more \emph{volatile} in response to more informative, yet stable, behavior by the leader; in particular, the team is worse off in the no-feedback case when the leader is patient (part (i)). This setting of action-based information transmission then reveals that an organization's better understanding of its leadership's goals need not be indicative of better past, or even future, performance: it can be reflective of the organization's painful struggle to coordinate.\footnote{\cite{marschak1955elements} stresses the importance of incorporating action-based information transmission in the analysis of organizations: ``A realistic theory of teams would be dynamic. It takes time to process and pass messages along a chain of members; and messages must include not only information on external variables, but also information on what has been \emph{done} [emphasis added] by other members in the team'' (p. 137). See \cite{hermalin1998toward} for a theory of leadership featuring noiseless signaling, albeit in a static setting.}

The next sections develop a framework for general quadratic preferences and partially informative public feedback channels. This generality is important not only for the breadth of applications that can be explored but also because information channels often have intermediate quality. For organizations in particular, the value of the analysis is clear: improvements in quality can sometimes be prohibitively costly, and even when partial improvements are feasible, cost-benefit analyses require assessing payoffs under varying levels of uncertainty.

\vspace{-0.1in}
\section{General Model\label{sec:model}}

We consider two-player, linear-quadratic-Gaussian games with private information and private monitoring. Extensions of the baseline setup are presented in Section \ref{sec:applications} via applications.

\vspace{-0.1in}\paragraph{Players, Actions and Payoffs.}

A forward-looking \textit{long-run} player (she) and a myopic counterpart (he) interact in a repeated game that is played continuously over a time interval $[0,T]$, $T<\infty$. At each instant $t\in [0,T]$, the long-run player chooses an action $a_t$, while the myopic player chooses $\hat a_t$, both taking values over the real line. If at any instant the profile of actions chosen is $(a,\hat a)$, the long-run player's and myopic player's flow payoffs are 
\begin{eqnarray}\label{eq:p1utility}
U(a,\hat{a},\theta)\;\;\; \text{and}\;\;\; \hat U(a,\hat{a},\theta),
\end{eqnarray}
respectively, where $U$ and $\hat U$ are quadratic functions. In \eqref{eq:p1utility}, $\theta$ denotes a normally distributed random variable with mean $\mu\in\mathbb{R}$ and variance $\gamma^o>0$. The long-run player discounts the future at a rate $r>0$, while the myopic player cares only about his flow payoff at all times.

To state our main assumptions on the functions $U$ and $\hat U$, we introduce the scalars 
$$
u_{xy}:= \frac{\partial^2 U/\partial x\partial y}{|\partial^2 U/\partial a^2|}\;\; \text{ and }\;\; \hat{u}_{xy}:= \frac{\partial^2 \hat U/\partial x\partial y}{|\partial^2 \hat U/\partial \hat a^2|}, \; \text{ for }\; x,y\in \{a,\hat a,\theta\}.
$$
Indeed, if the players' flow payoffs are concave in their respective actions, best responses will exhibit denominators like the above, allowing us to state our conditions in normalized form:
\begin{assu}\label{assu:static}\strut Flow payoffs satisfy (i) $u_{aa}=\hat u_{\hat a\hat a}=-1$ (strict concavity); (ii) $u_{a\theta}(u_{a\theta}+u_{a\hat a}\hat u_{\hat a\theta})>0$ (nontrivial signaling); (iii) $|\hat u_{\hat a\theta}|+|\hat u_{a\hat a}|\neq 0$ and $|u_{a\hat a}|+|u_{\hat a\hat a}|\neq 0$ (second-order inferences); and (iv) $u_{a\hat a}\hat u_{a\hat a}<1$ (myopic best-replies intersect).
\end{assu}

Our first requirement is that $\theta$ be strategically relevant for the long-run player (i.e., $u_{a\theta}\neq 0$), which is implied by (ii). Part (iii) then invokes the use of higher-order inferences. Specifically, the first condition states that the myopic player's first-order belief influences his behavior, either because he cares about $\theta$ directly ($\hat u_{\hat a\theta}$ term) or indirectly via the long-run player's action ($\hat u_{a\hat a}$ term); in turn, the second condition forces the long-run player to forecast the myopic player's belief, due to either an interaction term ($u_{a\hat a}$) or a nonlinear effect ($u_{\hat a\hat a}$). (Clearly, (iii) is a choice to focus on the most interesting cases rather than a limitation.)

The remaining conditions are used to find equilibria in linear strategies. The  concavity of the players' objectives with respect to their own actions (part (i)) gives rise to linear best responses. Coupled with (iv), the static game of two-sided private information that arises at the end of the interaction will admit a Nash equilibrium; part (ii) then ensures that this equilibrium entails type dependence. We will revisit these assumptions in Section \ref{sec:eqmanalysis}.

\vspace{-0.1in}
\paragraph{Information.}  The long-run player observes the value of $\theta$ before play begins, while the myopic player only knows its distribution $\theta\sim\mathcal{N}(\mu,\gamma^o)$ (and this is common knowledge). There are also two signals with full support due to the presence of Brownian noise: 
\begin{eqnarray}\label{eq:XYsignals}
dX_{t}=\hat a_{t}dt+\sigma_X dZ^X_{t}\;\;\; \text{ and }\;\;\; dY_{t}= {a}_{t}dt+\sigma_Y dZ^Y_{t},
\end{eqnarray}
where $Z^X$ and $Z^Y$ are orthogonal and the volatility parameters $\sigma_Y$ and $\sigma_X$ strictly positive.

Our key departure from existing analyses is to make $Y$---which carries information about the long-run player's actions---privately observed by the myopic player; instead, the signal $X$ carrying the latter player's action remains public. This mixed private-public information structure is important for our construction, but it is also natural for analyzing sender-receiver games: it makes the departure minimal while still economically relevant for applications.

In what follows, we let $\mathbb{E}_t[\cdot]$ denote the long-run player's conditional expectation operator, which can condition on the histories $(\theta,a_{s},X_s:0\leq s\leq t)$ and on her conjecture of the myopic player's play. Similarly, $\hat{\mathbb{E}}_t[\cdot]$ denotes the myopic player's analog, which conditions on $(\hat a_{s},X_s,Y_s:0\leq s\leq t)$ and on his belief about the long-run player's strategy, $t\geq 0$.\footnote{In particular, flow payoffs do not convey any additional information to the players (i.e.,  payoffs accrue after time $T$, or they can be written in terms of the actions and signals observed by each player).}

\vspace{-0.1in}
\paragraph{Strategies and Equilibrium Concept.} With full-support monitoring, the only off-path histories for each player are those in which that player herself/himself has deviated. Thus, we use the Nash equilibrium concept for defining the equilibrium of the game, as imposing full sequential rationality places no additional restrictions on the set of equilibrium outcomes.

From this perspective, an admissible strategy for the long-run player is any square-integrable real-valued process $(a_t)_{t\in [0,T]}$ that is progressively measurable with respect to the filtration generated by $(\theta,X)$. The analogous notion for the myopic player involves the identical integrability condition, but the measurability restriction is with respect to $(X,Y)$.\footnote{Square integrability is in the sense of the time-zero expectations of $\int_0^T a_t^2dt$ and $\int_0^T \hat a_t^2dt$ being finite. This ensures that a strong solution to \eqref{eq:XYsignals} exists and thus that the outcome of the game is well defined.}

\begin{definition}[Nash equilibrium.]\label{def:eqbm} An admissible pair $(a_t,\hat{a}_t)_{t\geq 0}$ is a Nash equilibrium if: (i) the process $(a_t)_{t\in [0,T]}$ maximizes $\mathbb{E}_0\left[\int_0^T e^{-rt} U(a_{t},\hat{a}_{t},\theta) dt\right] $; and (ii) for each $t\in [0,T]$, $\hat{a}_t$ maximizes $\hat{\mathbb{E}}_t[\hat U(a_t,\hat{a}_t,\theta)]$ when $(\hat{a}_s)_{s<t}$ has been followed.
\end{definition}

In the next section, we characterize Nash equilibria supported by strategies that are fully sequentially rational, thereby specifying behavior after deviations. The equilibria studied generalize that of Section \ref{sec:leading_by_example} for the no-feedback case $\sigma_X=\infty$ to the whole range $0<\sigma_X\leq \infty$.

\begin{remark}[Extensions] Our methods can accommodate various extensions, including: (i) a terminal payoff for the long-run player $\Psi(\hat a_T)$, where $\Psi$ is quadratic; and (ii)  the drift of $X$ in \eqref{eq:XYsignals} taking the form $\hat{a}_{t}+\nu a_{t}$, where $\nu \in [0,1]$ is a scalar. See Section \ref{subsec:political} for a reputation model featuring (i) and Section \ref{subsec:insidertrading} for an insider trading model featuring (ii) (that also accommodates $\partial^2 U/\partial a^2=0$, as is traditional in that literature).\end{remark}

\vspace{-0.2in}
\section{Equilibrium Analysis: Linear Markov Equilibria\label{sec:eqmanalysis}}

We construct linear Markov (perfect) equilibria using the players' beliefs as the relevant states. Indeed, the quadratic payoffs and linear signals open the possibility for quadratic value functions that are supported by strategies that are linear in some state variables. For this to work, however, the states themselves must be linear in the signals available. With a Gaussian information structure, it is then natural to appeal to belief-based states.

The appeal of such equilibria is twofold. First, the Markov restriction captures that behavior depends only on the aspects of the histories that the players perceive to be payoff-relevant. Second, in equilibrium, the players' actions are linear in the signals observed, which generalizes traditional linear equilibria widely employed in static applied-theory work.

\subsection{Belief States and Representation Lemma\label{subsec:beliefdecomposition}}

Specifically, we characterize equilibria in which, after their corresponding (on- or off-path) private histories, the long-run player and the myopic counterpart play according to
\begin{eqnarray}
a_t&=&\beta_{0t}+\beta_{1t}M_t+\beta_{2t}L_t+\beta_{3t}\theta \label{eq:P1StrategyGeneral}\\
\hat{a}_t&=&\delta_{0t}+\delta_{1t}\hat M_t+\delta_{2t}L_t\label{eq:P2StrategyGeneral}.
\end{eqnarray}
Here, $\hat M_t:=\hat{\mathbb{E}}_t[\theta]$ is the myopic player's first-order belief, $M_t:= \mathbb{E}_t[\hat M_t]$ the long-run player's second-order counterpart, and $L_t:=\mathbb{E}[\theta|\mathcal{F}_t^X]$ is the belief about $\theta$ using the \emph{public information exclusively}; the coefficients $\beta_{it}$ and $\delta_{jt}$, $i=0,1,2,3$ and $j=0,1,2$, are deterministic.

Intuitively, because the long-run player conditions her actions on her type ((ii) in Assumption \ref{assu:static}), the myopic player's belief $(\hat M_t)_{[0,T]}$ is a relevant state (first part in (iii), Assumption \ref{assu:static}). However, this implies that the long-run player must forecast the myopic player's belief to determine her best response (second part in (iii), Assumption  \ref{assu:static}), which makes $(M_t)_{t\in [0,T]}$ payoff-relevant. The appearance of $L_t$ is in turn linked to the nature of $M_t$, as follows.

As is traditional, the long-run player will use $X$ to forecast $\hat M$ since this signal carries the myopic player's action. The novelty is that, due to the private monitoring, she will also use the history of her past actions in this forecasting exercise. Intuitively, as long as the public signal is imperfect (i.e., $\sigma_X>0$), the long-run player does not perfectly know the inferences made by the myopic player, so higher action profiles become statistically informative of higher private observations by the myopic player, and vice versa. The explicit dependence of the long-run player's second-order belief $M$ on her past actions---as occurred in \eqref{eq:leading_M_integral}---makes this state a private one even in equilibrium, as those actions depend on her actual type. The myopic player is then forced to make an inference about this belief (and so forth).

Along the path of play of any \emph{pure} strategy, however, the outcome of the game should depend only on $(\theta,X,Y)$. In particular, $M$ must be a function of the tuple $(\theta,X)$, which is the long-run player's only source of information. The Gaussian structure then suggests the existence of process $(L_t)_{t\in [0,T]}$ depending only on the public information, and a deterministic function $\chi$, such that, under the linear profile \eqref{eq:P1StrategyGeneral}--\eqref{eq:P2StrategyGeneral} carrying this public process,
\begin{eqnarray}\label{eq:BeliefDecomposition}
M_t=\chi_t\theta+(1-\chi_t)L_t.
\end{eqnarray}
The representation \eqref{eq:BeliefDecomposition} is at the core of our analysis. First, if true, it shows that the ``beliefs about beliefs'' problem is manageable: as the myopic player uses \eqref{eq:BeliefDecomposition} to forecast $M$, all higher-order expectations can be written in terms of the aforementioned belief states. (Clearly, in this third-order inference step by the myopic player, $L$ becomes payoff-relevant.)

Second, the representation encodes the separation that occurs via the second-order belief channel. Indeed, \eqref{eq:BeliefDecomposition} captures how, under linear strategies, the long-run player balances her past play ($\chi\theta$ term) and the public signal ($(1-\chi)L$ term) when forecasting the myopic player's belief. Inserting \eqref{eq:BeliefDecomposition} into the long-run player's strategy \eqref{eq:P1StrategyGeneral} yields the action process
\begin{align}
a_t= \underbrace{\beta_{0t}}_{=:\alpha_{0t}}+\underbrace{(\beta_{2t}+\beta_{1t}(1-\chi_t))}_{=:\alpha_{2t}}L_t+\underbrace{(\beta_{3t}+\beta_{1t}\chi_t)}_{=:\alpha_{3t}}\theta,\label{eq:P1StrategyOnPath}
\end{align}
from which the signaling coefficient is $\alpha_{3t}:=\beta_{3t}+\beta_{1t}\chi_t$. The term $\beta_1\chi_t$ encodes the aforementioned \emph{history-inference effect}: different types take different actions in equilibrium partly because their differing past actions have lead them to hold different beliefs today.

Our approach for characterizing \eqref{eq:BeliefDecomposition} is constructive. To state the formal result, we omit the hat symbol for convenience and denote the myopic player's posterior variance simply by
$$
\gamma_t:=\hat{\mathbb{E}}_t[(\theta-\hat{M}_t)^2].
$$

\begin{lemma}[Second-order belief representation]\label{lem:BeliefDecomp} Suppose that $(X,Y)$ is driven by \eqref{eq:P1StrategyGeneral}--\eqref{eq:P2StrategyGeneral} and the myopic player believes that  \eqref{eq:BeliefDecomposition}, with $(L_t)_{t\in[0,T]}$ a process that depends only on the public information, holds. Then \eqref{eq:BeliefDecomposition} holds at all times (path-by-path of $X$), if and only if
\begin{eqnarray}\label{eq:gammadot}
\dot\gamma_t &=&-\frac{\gamma_{t}^2(\beta_{3t}+\beta_{1t}\chi_{t})^2}{\sigma_Y^2}, \hspace{3.9cm}\gamma_0=\gamma^o,\\\label{eq:chidot}
\dot\chi_t &=&  \frac{\gamma_{t}(\beta_{3t}+\beta_{1t}\chi_t)^2(1-\chi_t)}{\sigma_Y^2}-\frac{\gamma_{t}\chi_t^2\delta_{1t}^2}{\sigma_X^2},\hspace{1cm} \chi_0=0,\\\label{eq:dLonpath}
dL_t&=&(l_{0t}+l_{1t}L_t)dt+B_t dX_t, \hspace{3cm}L_0=\mu,
\end{eqnarray}
with $(l_{0t},l_{1t},B_t)$ given in \eqref{eq:l0_l1_B} deterministic. Also, $L_t=\mathbb{E}[\theta|\mathcal{F}_t^X]$ and $\gamma_t\chi_t=\mathbb{E}_t[(M_t-\hat M_t)^2]$.
\end{lemma}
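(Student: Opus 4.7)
The strategy is to derive explicit dynamics for $\gamma_t$, $L_t$, and $M_t$ from Kalman--Bucy filtering and then to match the filter-derived $dM_t$ against the It\^o differential of the candidate representation $\chi_t\theta+(1-\chi_t)L_t$. Under the strategies \eqref{eq:P1StrategyGeneral}--\eqref{eq:P2StrategyGeneral} and the myopic player's belief that the representation holds, the drift of $Y$ rewrites as $\alpha_{0t}+\alpha_{2t}L_t+\alpha_{3t}\theta$ with $\alpha_{3t}=\beta_{3t}+\beta_{1t}\chi_t$; since the drift of $X$ equals $\hat a_t$, which the myopic player himself chooses, $X$ carries no information about $\theta$ for him. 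Scalar Kalman--Bucy on $Y$ then yields \eqref{eq:gammadot} with $\gamma_0=\gamma^o$. For $L_t$, the tower property gives $\mathbb{E}[\hat M_t|\mathcal{F}^X_t]=\mathbb{E}[\theta|\mathcal{F}^X_t]=L_t$ (since $\mathcal{F}^X_t$ is contained in the myopic player's filtration), hence the $\mathcal{F}^X_t$-drift of $X$ equals $\delta_{0t}+(\delta_{1t}+\delta_{2t})L_t$, and the innovation representation of $L_t$ returns the linear form \eqref{eq:dLonpath} with $L_0=\mu$.

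\textbf{Matching $dM_t$.} The long-run player knows $\theta$ and observes $X$; she filters the unobserved $\hat M_t$, whose myopic-filter dynamics are $d\hat M_t=(\gamma_t\alpha_{3t}^2/\sigma_Y^2)(\theta-\hat M_t)\,dt+(\gamma_t\alpha_{3t}/\sigma_Y)\,dZ^Y_t$, with observation $dX_t=(\delta_{0t}+\delta_{2t}L_t+\delta_{1t}\hat M_t)\,dt+\sigma_X\,dZ^X_t$. Writing $\Sigma_t:=\mathbb{E}_t[(\hat M_t-M_t)^2]$, Kalman--Bucy yields
\[ dM_t=\frac{\gamma_t\alpha_{3t}^2}{\sigma_Y^2}(\theta-M_t)\,dt+\frac{\Sigma_t\delta_{1t}}{\sigma_X^2}\bigl[dX_t-(\delta_{0t}+\delta_{2t}L_t+\delta_{1t}M_t)\,dt\bigr], \]
together with $\dot\Sigma_t=-2\gamma_t\alpha_{3t}^2\Sigma_t/\sigma_Y^2+\gamma_t^2\alpha_{3t}^2/\sigma_Y^2-\Sigma_t^2\delta_{1t}^2/\sigma_X^2$ and $\Sigma_0=0$. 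On the other hand, differentiating the candidate representation and using \eqref{eq:dLonpath} gives $dM_t=\dot\chi_t(\theta-L_t)\,dt+(1-\chi_t)\,dL_t$. Substituting $M_t=\chi_t\theta+(1-\chi_t)L_t$ into the filter expression and matching the coefficient on $dX_t$ yields $(1-\chi_t)B_t=\Sigma_t\delta_{1t}/\sigma_X^2$, while matching the coefficient on $\theta$ in the drift gives $\dot\chi_t=\gamma_t\alpha_{3t}^2(1-\chi_t)/\sigma_Y^2-\Sigma_t\delta_{1t}^2\chi_t/\sigma_X^2$; the remaining drift components pin down $l_{0t},l_{1t}$ consistently with the Kalman filter for $L_t$.

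\textbf{Variance identity, iff, and main obstacle.} To convert the above into \eqref{eq:chidot}, I would verify $\Sigma_t=\gamma_t\chi_t$ by substituting this ansatz into the Riccati for $\Sigma_t$ and computing $(\gamma_t\chi_t)'$ from \eqref{eq:gammadot} and the candidate $\dot\chi_t$: the two expressions agree, and with common initial value $\Sigma_0=\gamma_0\chi_0=0$ (using $\chi_0=0$, forced by $M_0=L_0=\mu$), uniqueness of ODE solutions gives $\Sigma_t\equiv\gamma_t\chi_t$, which delivers the stated identity $\gamma_t\chi_t=\mathbb{E}_t[(M_t-\hat M_t)^2]$ and, upon substitution, the ODE \eqref{eq:chidot}. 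The iff then follows: under \eqref{eq:gammadot}--\eqref{eq:dLonpath}, the processes $M_t$ and $\chi_t\theta+(1-\chi_t)L_t$ satisfy the same SDE driven by $X$ with common initial value $\mu$, so they coincide pathwise in $X$; conversely, the representation forces the coefficient-matching above and hence the ODEs. The main obstacle is the self-referential coupling: $\dot\gamma_t$ depends on $\chi_t$ through $\alpha_{3t}$, the $L_t$-filter presupposes $\mathbb{E}[\hat M_t|\mathcal{F}^X_t]=L_t$, and the matching closes only via $\Sigma_t=\gamma_t\chi_t$, so the three components must be handled as a coupled system rather than verified sequentially.
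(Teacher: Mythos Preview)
Your proposal is correct and follows essentially the same route as the paper: both derive the myopic player's Kalman filter for $\theta$ from $Y$ (giving \eqref{eq:gammadot}), the long-run player's Kalman filter for $\hat M$ from $X$ (giving $M$ and its conditional variance $\Sigma$), and then close the system by imposing the representation $M_t=\chi_t\theta+(1-\chi_t)L_t$ and verifying $\Sigma_t=\gamma_t\chi_t$. The only organizational difference is that the paper first integrates the $M$-SDE explicitly via a resolvent $R(t,s)$ and then matches the resulting integral expression against $\chi_t\theta+(1-\chi_t)L_t$, whereas you match directly at the level of the It\^o differentials; your way is slightly more streamlined. Your tower-property argument for $L_t=\mathbb{E}[\theta\mid\mathcal{F}^X_t]$ is also more direct than the paper, which defers that identification to its online appendix and instead obtains $L$ purely from the coefficient matching.
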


In light of the lemma, the representation \eqref{eq:BeliefDecomposition} reads $M_t= \frac{\mathrm{Var}_t}{\widehat{\mathrm{Var}}_t}\theta + \left(1- \frac{\mathrm{Var}_t}{\widehat{\mathrm{Var}}_t}\right)\mathbb{E}[\theta|\mathcal{F}_t^X]$, where we have used the notation $\widehat{\mathrm{Var}}_t:=\hat{\mathbb{E}}_t[(\theta-\hat{M}_t)^2]$ and $\mathrm{Var}_t:=\mathbb{E}_t[(M_t-\hat M_t)^2]$, the latter measuring the long-run player's uncertainty about $\hat M$. Indeed, in forecasting $\hat M$, the only informational advantage that the long-run player has relative to an outsider who observes $X$ exclusively is that she knows what actions she has taken, and such actions carry her type. Under linear strategies, learning is Gaussian, so (i) $M_t$ is a linear combination of $\theta$ and $\mathbb{E}[\hat M_t|\mathcal{F}_t^X]$, and (ii) the weights are deterministic; the representation then follows from $\mathbb{E}[\hat M_t|\mathcal{F}_t^X]=\mathbb{E}[\theta|\mathcal{F}_t^X]$.  Observe also that the linearity of $\mathbb{E}[\theta|\mathcal{F}_t^X]$ in the history $(X_s:0\leq s<t)$ can be deduced from the linearity of \eqref{eq:dLonpath} both in $L$ and in the increments of $X$.\footnote{The resulting expression for $L$ holds irrespective of the past private histories of the players: this is because deviations are hidden and hence each player thinks the counterparty has constructed $L$ using \eqref{eq:dLonpath}.}

The $\chi$-ODE \eqref{eq:chidot} quantifies the dynamics of the relevance of past behavior in the previous forecasting exercise. Indeed, by the common prior, $\mathrm{Var}_0=0$ and $\mathbb{E}[\theta|\mathcal{F}_0^X]=\mu$; thus, $M_0=\mu$ in the display above, and so the $\chi$-ODE must start at zero. As signaling progresses, however, the long-run player loses track of $\hat M$ (i.e., $\mathrm{Var}_t>0$), forcing her to rely on her past behavior: this is captured in $\dot\chi>0$ as soon as $\alpha_3>0$ in \eqref{eq:chidot}. The positivity of $\chi$ then simply reflects that the long-run player expects $\hat M$ to gradually incorporate her type via this channel.

The relative importance of past play will naturally depend on the quality of the public information. Consider the last term $\gamma_{t}\chi_t^2\delta_{1t}^2/\sigma_X^2$ in \eqref{eq:chidot}. If $\sigma_X=\infty$ or $\delta_1\equiv 0$, the public signal is uninformative: indeed, $L_t=L_0=\mu$ and $\chi_t=1-\gamma_t/\gamma^o$ in both cases, as in the no-feedback analysis of Section \ref{sec:leading_by_example}.\footnote{In \eqref{eq:chidot}, $\delta_1/\sigma_X\equiv 0$ leads to the $\chi$-ODE in the no-feedback case (see the proof of Lemma \ref{lem:Belief_Rep_NF}). } Apart from these cases, the public information is always useful. In particular, as $\delta_1^2/\sigma_X^2$ grows, more downward pressure is exerted on the growth of $\chi$, reflecting a weaker dependence on past play as the quality of $X$ improves, all else being equal; in the limit, $\chi\equiv 0$ and $L\to \hat M$, and so $M\to \hat M$ (i.e., the environment becomes public). The no-feedback case then maximizes the potential amplitude of the history-inference effect.

Our subsequent analysis takes the system of ODEs for $(\gamma,\chi)$ as an input, so we require \eqref{eq:gammadot}--\eqref{eq:chidot} to have a unique solution to ensure that the ODE-characterization is valid. Note that the weight on $\hat M$ in the myopic player's best reply is given by $\hat u_{\hat a\theta}+\hat u_{\hat a a} [\beta_{3t}+\beta_{1t}\chi_t]$.

\begin{lemma}[Learning ODEs]\label{lem:UniqueLearningSol}
Suppose $(\beta_1,\beta_3)$ is continuous and $\delta_{1t}=\hat u_{\hat a\theta}+\hat u_{\hat a a} [\beta_{3t}+\beta_{1t}\chi_t]$. Then \eqref{eq:gammadot}--\eqref{eq:chidot} has a unique solution and $0<\gamma_t\leq\gamma^o$ and $0\leq\chi_t<1$ for all $t\in [0,T]$. If, moreover, $\beta_{30}\neq 0$, then these inequalities are strict over $(0,T]$.
\end{lemma}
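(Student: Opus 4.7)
The plan is to treat \eqref{eq:gammadot}--\eqref{eq:chidot}, with $\delta_{1t}$ substituted by $\hat u_{\hat a\theta}+\hat u_{\hat a a}[\beta_{3t}+\beta_{1t}\chi_t]$, as a planar ODE system in $(\gamma,\chi)$ driven by continuous exogenous inputs, and to confine the trajectory to the prescribed box by three standard barrier arguments.

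\emph{Local existence and uniqueness.} Once $\delta_{1t}$ is eliminated, the right-hand side $F(t,\gamma,\chi)$ is polynomial in $(\gamma,\chi)$ with coefficients that are continuous in $t$ (because $\beta_1,\beta_3$ are). Hence $F$ is continuous in $t$ and locally Lipschitz in $(\gamma,\chi)$, so Picard--Lindel\"of yields a unique maximal $C^1$ solution on some interval $[0,T^*)\subseteq[0,T]$.

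\emph{A priori bounds.} On this maximal interval I would prove the three weak inequalities using variation of constants. For $\gamma$: writing $c(t):=(\beta_{3t}+\beta_{1t}\chi_t)^2/\sigma_Y^2\geq 0$, the $\gamma$-equation integrates explicitly to $\gamma_t=\gamma^o/\bigl(1+\gamma^o\int_0^t c(s)\,ds\bigr)$, which gives $0<\gamma_t\leq\gamma^o$. For $\chi<1$: the auxiliary $g_t:=1-\chi_t$ satisfies a linear equation $\dot g_t=-A(t)g_t+B(t)$ with $A(t):=\gamma_t(\beta_{3t}+\beta_{1t}\chi_t)^2/\sigma_Y^2\geq 0$ and $B(t):=\gamma_t\chi_t^2\delta_{1t}^2/\sigma_X^2\geq 0$, so $g_t\geq g_0\,e^{-\int_0^t A(s)\,ds}>0$. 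For $\chi\geq 0$: expanding $(\beta_{3t}+\beta_{1t}\chi_t)^2=\beta_{3t}^2+\chi_t(2\beta_{3t}\beta_{1t}+\beta_{1t}^2\chi_t)$ and substituting lets me rewrite the $\chi$-equation as $\dot\chi_t=P(t)+Q(t)\chi_t$ with $P(t):=\gamma_t\beta_{3t}^2(1-\chi_t)/\sigma_Y^2\geq 0$ and $Q(t)$ continuous and bounded on the box; since $\chi_0=0$, variation of constants gives $\chi_t=\int_0^t e^{\int_s^t Q(u)\,du}P(s)\,ds\geq 0$. Combining the three bounds, the solution remains in the compact rectangle $(0,\gamma^o]\times[0,1)$, $F$ stays uniformly bounded, and the standard no-blow-up criterion extends the solution to all of $[0,T]$.

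\emph{Strict inequalities when $\beta_{30}\neq 0$.} Continuity of $\beta_3$ yields some $\epsilon>0$ on which $\beta_{3s}^2>0$; hence $P(s)>0$ on $[0,\epsilon]$, and the integral representation above implies $\chi_t>0$ for every $t\in(0,T]$. Likewise $c(0)=\beta_{30}^2/\sigma_Y^2>0$, so $\dot\gamma_0<0$, giving $\gamma_t<\gamma^o$ on some $(0,\epsilon)$, and then on all of $(0,T]$ by the monotonicity $\dot\gamma\leq 0$. The only real bookkeeping nuisance, rather than a substantive obstacle, is the implicit dependence of $\delta_{1t}$ on $\chi_t$; this is harmless because $\delta_{1t}$ enters the $\chi$-ODE only through $\delta_{1t}^2\geq 0$ and preserves the polynomial-in-$(\gamma,\chi)$ structure used both for Lipschitzness and for each barrier inequality.
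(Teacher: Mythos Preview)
Your argument is correct. The overall architecture---local existence via Picard--Lindel\"of, a priori confinement of the trajectory to a box, then extension to $[0,T]$ by boundedness, and finally strict inequalities from $\beta_{30}\neq 0$---matches the paper's. The difference is in how the box bounds are obtained. The paper lifts to the three-dimensional system $(\gamma_1,\gamma_2,\chi)$ derived in the proof of Lemma~\ref{lem:BeliefDecomp}, uses the comparison theorem (Teschl) against the constant solutions $0$, $\gamma^o$, and $1$, and in the process verifies the filtering identity $\chi=\gamma_2/\gamma_1$. You instead stay with the two-dimensional system and exploit that, along the realized trajectory, each barrier reduces to a scalar linear ODE with nonnegative forcing: explicit integration for $\gamma$, and variation of constants for $1-\chi$ and for $\chi$. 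Your route is more self-contained and avoids both the auxiliary third coordinate and the comparison lemma; the paper's route has the side benefit of establishing $\gamma_t\chi_t=\mathbb{E}_t[(M_t-\hat M_t)^2]$, which is used elsewhere. One small point worth making explicit in your write-up: the coefficients $P(t)$ and $Q(t)$ in your decomposition $\dot\chi_t=P(t)+Q(t)\chi_t$ are evaluated along the realized solution (so they depend on $\chi_t$ implicitly), and the bootstrapping order matters---you need $\gamma>0$ before $\chi<1$, and both before $\chi\geq 0$, which you do observe.
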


The filtering equations are valid under weak integrability conditions on the coefficients in \eqref{eq:P1StrategyGeneral}--\eqref{eq:P2StrategyGeneral}, from which $\gamma_t=\hat{\mathbb{E}}_t[(\theta-\hat{M}_t)^2]$ and $\chi_t=\mathrm{Var}_t/\widehat{\mathrm{Var}}_t=\mathbb{E}_t[(M_t-\hat M_t)^2]/\gamma_t$ must solve the system; a mild strengthening of the conditions ensures that no other solutions exist, and if $\beta_{30}\neq 0$ (a property our equilibria satisfy), some information indeed gets transmitted.

Our derivation of the representation \eqref{eq:BeliefDecomposition} exploits the tractability of the Gaussian filtering under linear strategies. Due to the full-support monitoring, the myopic player expects  $a_t=\alpha_{0t}+ \alpha_{2t}L_t+\alpha_{3t}\theta$ defined in  \eqref{eq:P1StrategyOnPath}, and $L$ is public. The myopic player's learning problem of filtering $\theta$ from $Y$ is thus (conditionally) Gaussian, so his belief is characterized by a stochastic mean $(\hat M_t)_{t\in [0,T]}$ and the deterministic variance $(\gamma_t)_{t\in [0,T]}$. But the linearity of the signal structure renders the pair $(\hat M, X)$ (conditionally) Gaussian too. The long-run player's filtering then yields a second mean-variance pair, with $M_t$ now an explicit linear function of her past actions. One can insert the linear strategy \eqref{eq:P1StrategyGeneral} into $M_t$ to pin down $(\chi,L)$.

The representation \eqref{eq:BeliefDecomposition} then relies on the long-run player following the linear strategy  \eqref{eq:P1StrategyGeneral}. In particular, that $M$ is spanned by $\theta$ and $L$ is no longer true after deviations, and such deviations are needed to evaluate the candidacy of \eqref{eq:P1StrategyGeneral} as an LME. This brings us to the third property behind \eqref{eq:BeliefDecomposition}: it captures a divergence in the game's structure at on- versus off-path histories, a well known feature of games with private monitoring. The next result introduces the law of motion of $M$ and $L$ for an arbitrary strategy of the long-run player.

\begin{lemma}[Controlled dynamics]\label{lem:M and L laws of motion}
Suppose that the myopic player follows \eqref{eq:P2StrategyGeneral} and believes that \eqref{eq:P1StrategyGeneral} and \eqref{eq:BeliefDecomposition}  hold. Then, if the long-run player follows $(a_t')_{t\in [0,T]}$, from her perspective
\begin{eqnarray}\label{eq:dM_off_path}
dM_t&=&\frac{\gamma_t\alpha_{3t}}{\sigma_Y^2}(a'_{t}-[\alpha_{0t}+\alpha_{2t}L_t+\alpha_{3t}M_t])dt+\frac{\chi_t\gamma_t\delta_{1t}}{\sigma_X}dZ_t\\\label{eq:dL_off_path}
dL_t&=& \frac{\chi_{t}\gamma_{t}\delta_{1t}}{\sigma_X^2(1-\chi_t)}[\delta_{1t}(M_t-L_t)dt+\sigma_X dZ_t],
\end{eqnarray}
with $Z_t:=\frac{1}{\sigma_X}[X_t-\int_0^t(\delta_{0s}+\delta_{1s}M_s+\delta_{2s}L_s)ds]$ being a Brownian motion.
\end{lemma}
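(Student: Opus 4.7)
The plan is to apply one-dimensional Kalman--Bucy filtering from the long-run player's perspective, treating $\hat M_t$ as a hidden scalar state driven by $Z^Y$ and $X$ as the observation equation. The representation \eqref{eq:BeliefDecomposition}---which the myopic player uses in forming his conjecture about the long-run player---enters only through the myopic player's fixed updating rule for $\hat M$; the actual action $a'$ of the long-run player drives $Y$ under her own measure.

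I would first derive the SDE for $\hat M$. Under his conjecture \eqref{eq:P1StrategyGeneral} plus the representation \eqref{eq:BeliefDecomposition}, the myopic player forecasts $a_t$ as $\alpha_{0t}+\alpha_{2t}L_t+\alpha_{3t}\hat M_t$ (cf.\ \eqref{eq:P1StrategyOnPath}), and standard Gaussian filtering with the deterministic variance $\gamma_t$ from Lemma \ref{lem:BeliefDecomp} gives
\begin{equation*}
d\hat M_t = \frac{\gamma_t\alpha_{3t}}{\sigma_Y^2}\bigl[dY_t - (\alpha_{0t}+\alpha_{2t}L_t+\alpha_{3t}\hat M_t)\,dt\bigr].
\end{equation*}
The myopic player applies this rule mechanically, so it holds even after a deviation. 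Substituting the true dynamics $dY_t = a'_t\,dt+\sigma_Y\,dZ^Y_t$ recasts this as a linear SDE in $\hat M$ with $a'$, $\theta$, and $L$ as known inputs and $Z^Y$ as the sole hidden driver.

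Next I would apply the Kalman--Bucy filter to $M_t=\mathbb{E}_t[\hat M_t]$, treating $dX_t=(\delta_{0t}+\delta_{1t}\hat M_t+\delta_{2t}L_t)\,dt+\sigma_X\,dZ^X_t$ (with $Z^X\perp Z^Y$) as the observation. The filter drift is obtained by replacing $\hat M_t$ by $M_t$ in the drift of $d\hat M_t$, while the diffusion equals (posterior variance)$\cdot\delta_{1t}/\sigma_X$ times the innovation $dZ_t=\sigma_X^{-1}[dX_t-(\delta_{0t}+\delta_{1t}M_t+\delta_{2t}L_t)\,dt]$, which is a Brownian motion in the long-run player's filtration by L\'evy's characterization. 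Since the posterior variance solves a deterministic Riccati equation whose coefficients depend only on the (unchanged) sensitivities of the state and observation equations, the identity $\mathbb{E}_t[(\hat M_t-M_t)^2]=\chi_t\gamma_t$ from Lemma \ref{lem:BeliefDecomp} remains valid off-path, producing \eqref{eq:dM_off_path}.

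For \eqref{eq:dL_off_path} I would substitute $dX_t=(\delta_{0t}+\delta_{1t}M_t+\delta_{2t}L_t)\,dt+\sigma_X\,dZ_t$ into the public recursion \eqref{eq:dLonpath}, which is defined pathwise in $X$ and therefore unaffected by a deviation. The coefficients $(l_{0t},l_{1t},B_t)$ in \eqref{eq:l0_l1_B} are pinned down by requiring $L_t=\mathbb{E}[\theta|\mathcal{F}^X_t]$ on path: since $\hat M_t$ projects to $L_t$ on the public filtration by the tower property, the outsider's expected drift of $X$ is $\delta_{0t}+(\delta_{1t}+\delta_{2t})L_t$, forcing $l_{0t}=-B_t\delta_{0t}$ and $l_{1t}=-B_t(\delta_{1t}+\delta_{2t})$, with $B_t\sigma_X=\chi_t\gamma_t\delta_{1t}/[\sigma_X(1-\chi_t)]$. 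Cancellation then collapses the deterministic part of the drift to $B_t\delta_{1t}(M_t-L_t)$, yielding \eqref{eq:dL_off_path}. The main obstacle is justifying off-path validity: both the Riccati variance $\chi_t\gamma_t$ and the coefficients \eqref{eq:l0_l1_B} are derived under the conjectured linear strategy but are needed after arbitrary deviations; the resolution is that both are deterministic time-functions whose defining equations depend only on the linear sensitivities in the state and observation equations (fixed by the players' conjectures), not on the realized signal paths, so only the trajectories of $M$, $L$, and the innovation differ after a deviation---exactly as \eqref{eq:dM_off_path}--\eqref{eq:dL_off_path} describe.
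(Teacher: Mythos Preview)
Your proposal is correct and follows essentially the same approach as the paper: derive $d\hat M_t$ from the myopic player's Kalman filter, apply the Kalman--Bucy filter to obtain $dM_t$ with posterior variance $\chi_t\gamma_t$ and innovation $dZ_t$, and substitute $dX_t=(\delta_{0t}+\delta_{1t}M_t+\delta_{2t}L_t)\,dt+\sigma_X\,dZ_t$ into the public recursion \eqref{eq:dLonpath} using the coefficients \eqref{eq:l0_l1_B}. Your treatment of off-path validity---that the Riccati variance and the coefficients in \eqref{eq:l0_l1_B} are deterministic functions of the conjectured linear sensitivities and hence unaffected by deviations---mirrors the paper's concluding observations on the separation principle and the strategy-independence of $\gamma_{2t}=\chi_t\gamma_t$.
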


The dynamic \eqref{eq:dM_off_path} illustrates how the long-run player expects her future choices to affect her future beliefs. In particular, she will revise her belief upward when $a_t'>\mathbb{E}_t[\alpha_{0t}+\alpha_{2t}L_t+\alpha_{3t}\hat{M}_t]$, i.e., when she expects to beat the myopic player's expectation of her own behavior. The intensity of the revision is given by $\gamma_t\alpha_{3t}/\sigma_Y^2$: it increases with both the myopic player's  uncertainty ($\gamma$) and his conjecture of the long-run player's strength of signaling ($\alpha_3$). Further, it is clear that $M$ is deterministic only if $\delta_1/\sigma_X\equiv 0$, exactly as in Section \ref{sec:leading_by_example}.\footnote{Our choice of dynamic programming over optimal control in Section \ref{sec:leading_by_example} is not only due to the deterministic property being nongeneric: the latter approach obscures how the history-inference effect shapes signaling.}

The appearance of $M$ in the drift of \eqref{eq:dL_off_path} shows that the long-run player expects to influence $L$, despite her actions not entering the public signal: this happens through her influence on the myopic player's behavior. Consequently, a \emph{signal-jamming effect} arises: the incentive to influence a public belief (albeit only indirectly), with such incentives being perfectly accounted for in equilibrium---this effect is obviously absent in the no-feedback case ($\sigma_X=\infty$). The drift of \eqref{eq:dL_off_path} also shows that $L$ chases $M$ on average, reflecting that someone who only observes $X$ is able to gradually learn the long-run player's type over time.

Finally, observe that the pair $(\gamma,\chi)$ appears explicitly in the evolution of $(M,L)$. Indeed, this is because of the role of $(\gamma,\chi)$ in the myopic player's learning process: since deviations are hidden, this player always assumes that \eqref{eq:BeliefDecomposition} holds when constructing his belief.

\begin{remark}[$M$ as a function of past actions]\label{remark:M:pastplay}
In Lemma \ref{lem:M and L laws of motion}, insert the definition of $Z_t$ into \eqref{eq:dM_off_path} to solve for $M_t$ as a linear function of $(a_s,L_s,X_s)_{s<t}$, where $L_s$ is a linear function of $(X_\tau)_{\tau<s}$ via \eqref{eq:dLonpath}. The resulting expression generalizes \eqref{eq:leading_M_integral} for the no-feedback case.\end{remark}

\subsection{Dynamic Programming and the Boundary-Value Problem\label{subsec:dyn_prog_bvp}}

\paragraph{The long-run player's best-response problem.} Given a conjecture $\vec\beta:=(\beta_0,\beta_1,\beta_2,\beta_3)$ by the myopic player, $\vec\delta:=(\delta_{0},\delta_{1},\delta_{2})$ is found by matching coefficients in
\begin{eqnarray}\label{eq:myopicBR}
\hat a_{t}:=\delta_{0t}+\delta_{1t}\hat M_t+\delta_{2t}L_t=\arg\max\limits_{\hat{a}'}\hat{\mathbb{E}}_t[\hat U(\alpha_{0t}+\alpha_{2t}L_t+\alpha_{3t}\theta,\hat{a}',\theta)],
\end{eqnarray}
with $\vec\alpha:=(\alpha_0,\alpha_2,\alpha_3)$ as in \eqref{eq:P1StrategyOnPath}. Since the flow $U(a_t,\hat a_t,\theta)$ is quadratic and $M_t:=\mathbb{E}_t[\hat M_t]$, we can write the long-run player's total payoff as a function of $(M_t)_{t\in[0,T]}$ as follows:
\begin{eqnarray}\label{eq:LRpprogram} 
\mathbb{E}_0\bigg[\int_0^T e^{-rt} U(a_{t},\delta_{0t}+\delta_{1t} M_t+\delta_{2t}L_t,\theta) dt\bigg]+\frac{1}{2}\frac{\partial^2 U}{\partial \hat a^2}\int_0^T e^{-rt}\delta_{1t}^2\gamma_t\chi_t dt.
\end{eqnarray}
Indeed, in writing $\mathbb{E}_t[\hat M_t^2]=M_t^2+\mathbb{E}_t[(M_t-\hat M_t)^2]$, the Gaussian learning structure guarantees that the variances $\mathbb{E}_t[(M_t-\hat M_t)^2]$ are independent of the long-run player's actual behavior, determined instead by the candidate equilibrium profile; by Lemma \ref{lem:BeliefDecomp}, their value is $\chi_t\gamma_t$. From here, it is clear that $(t,\theta,L,M)$ is a sufficient statistic for the long-run player, with the time variable capturing time-horizon effects and the learning effects encoded in $(\gamma,\chi)$.

The long-run player's problem can then be stated as maximizing \eqref{eq:LRpprogram} subject to the dynamics \eqref{eq:dM_off_path}--\eqref{eq:dL_off_path} of $(M, L)$, which depend on $(\gamma,\chi)$ satisfying \eqref{eq:gammadot}--\eqref{eq:chidot}.\footnote{The long-run player's problem is, in practice, one of optimally controlling an \emph{unobserved state}. We are allowed to filter first and then optimize, because the \emph{separation principle} applies. See the proof of Lemma \ref{lem:M and L laws of motion}.} To tackle this best response problem, we postulate a quadratic value function
\begin{align}
V(\theta,m,\ell,t)=v_{0t}+v_{1t}\theta+v_{2t}m+v_{3t}\ell+v_{4t}\theta^2+v_{5t}m^2+v_{6t}\ell^2+v_{7t}\theta m+v_{8t}\theta\ell+v_{9t}m\ell,\notag%\label{eq:ValueFnGeneral}.
\end{align}
where $v_{i\cdot}$, $i=0,...,9$ depend on time only. The Hamilton-Jacobi-Bellman (HJB) equation is
\begin{align*}
r V=\sup_{a'} \left\lbrace \tilde U(a',\mathbb{E}_t[\hat{a}_t],\theta)+V_t+ \mu_M(a') V_m+\mu_L V_\ell+\frac{\sigma_M^2}{2} V_{mm}+\sigma_M\sigma_L V_{m\ell}+\frac{\sigma_L^2}{2}V_{\ell\ell}\right\rbrace,
\end{align*}
where $\tilde U:=U+\frac{1}{2}\frac{\partial^2 U}{\partial \hat a^2}\delta_{1t}^2\gamma_t\chi_t$, $\mu_M(a')$ and $\mu_L$ (respectively, $\sigma_M$ and $\sigma_L$) denote the drifts (respectively, volatilities) in  \eqref{eq:dM_off_path} and \eqref{eq:dL_off_path}, and $\hat a_t$ is determined via \eqref{eq:myopicBR}.

A Nash equilibrium in linear Markov strategies immediately follows when $\beta_{0t}+\beta_{1t}M+\beta_{2t}L+\beta_{3t}\theta$ is an optimal policy for the long-run player. Indeed, along the path of play of such a policy, the representation \eqref{eq:BeliefDecomposition} holds by construction, and so the long-run player's behavior is given by $a_t=\alpha_{0t}+\alpha_{2t}L_t+\alpha_{3t}\theta$, where $(L_t)_{t\in [0,T]}$ follows \eqref{eq:dLonpath} in Lemma \ref{lem:BeliefDecomp}; i.e., actions are a function of $(t,\theta,X)$ exclusively. However, conditioning differently on $L$ and $M$ is profitable after deviations---the policy $\beta_{0t}+\beta_{1t}M+\beta_{2t}L+\beta_{3t}\theta$ then specifies how to behave at such off-path histories, effectively inducing an LME that is also \emph{perfect}.\footnote{The myopic player's behavior is specified in \eqref{eq:myopicBR}, where $(t,L,\hat M)$ is the relevant state (\eqref{eq:dhatM} shows how $\hat M$ evolves). While deviations by this player do affect $L$, it is clear that no additional states are needed for our players after deviations. Also, all the payoff-relevant histories are reachable on path, so the sequential rationality requirement is trivial for this player in an LME. All this is true if this player is forward looking.}

\vspace{-0.1in}
\paragraph{The boundary-value problem (BVP).} We briefly explain how to obtain a system of ordinary differential equations (ODEs) for $\vec \beta$. Letting $a(\theta,m,\ell,t)$ denote the maximizer of the right-hand side in the HJB equation, the first-order condition (FOC) reads
\begin{eqnarray}\label{eq:FOCHJB}
\frac{\partial U}{\partial a}(a(\theta,m,\ell,t),\delta_{0t}+\delta_{1t} m+\delta_{2t}\ell,\theta)+\underbrace{\frac{\gamma_t\alpha_{3t}}{\sigma_Y^2}}_{dM_t/da_t}\underbrace{[v_{2t}+2v_{5t}m+v_{7t}\theta+v_{9t}\ell]}_{V_m(\theta,m,\ell,t)}=0.
\end{eqnarray}

Solving for $a(\theta,m,\ell,t)$ in \eqref{eq:FOCHJB}, the equilibrium condition becomes $a(\theta,m,\ell,t)=\beta_{0t}+\beta_{1t}m+\beta_{2t}\ell+\beta_{3t}\theta$, which is a linear equation. We can then solve for $(v_2,v_5,v_7,v_9)$ as a function of the coefficients $\vec\beta$ and insert the resulting expressions into the HJB equation along with $a(\theta,m,\ell,t)=\beta_{0t}+\beta_{1t}m+\beta_{2t}\ell+\beta_{3t}\theta$, to obtain a system of ODEs for the $\vec\beta$ coefficients. Critically, the system is coupled with the ODEs that $v_6$ and $v_8$ satisfy (and that are readily obtained from the HJB equation): since $M$  feeds into $L$ (see \ref{eq:dL_off_path}), the envelope condition with respect to the controlled state $M$ cannot deliver a self-contained system for the optimal policy. Finally, because the pair $(\gamma,\chi)$ affects the law of motion of $(M,L)$, it also influences $(\vec\beta,v_6,v_8)$, and so the ODEs \eqref{eq:gammadot}--\eqref{eq:chidot} must be included.

%\footnote{The application studied in Section \ref{subsec:political} relaxes this assumption.} 

This procedure leads to a system of ODEs for $(\beta_0,\beta_1,\beta_2,\beta_3,v_6,v_8,\gamma,\chi)$; we also need the \emph{boundary conditions}. First, there are the exogenous initial conditions that $\gamma$ and $\chi$ satisfy, i.e., $\gamma_0=\gamma^o>0$ and $\chi_0=0$. Second, there are terminal conditions $v_{6T}=v_{8T}=0$ due to the absence of a lump-sum terminal payoff in the long-run player's problem. Third, and more interesting, there are \emph{endogenous} terminal conditions that are determined by the static (Bayes) Nash equilibrium that arises from myopic play at time $T$. In fact, letting $u_0:=\frac{\partial U/\partial a}{|\partial^2 U/\partial a^2|}(0,0,0)$ and $\hat u_0:=\frac{\partial \hat U/\partial \hat a}{|\partial^2 \hat U/\partial \hat a^2|}(0,0,0)$ denote the intercepts of the players' static best responses, it is easy to verify that this equilibrium entails the coefficients
\begin{eqnarray}
\beta_{0T}=\frac{u_0+u_{a\hat a}\hat u_0}{1-u_{a\hat a}\hat u_{\hat a a}},\; \beta_{1T}=\frac{u_{a\hat a}[u_{a\theta}\hat u_{\hat a a}+\hat u_{\hat a\theta}]}{1-u_{a\hat a}\hat u_{\hat a a}\chi_T},\; \beta_{2T}=\frac{u_{a\hat a}^2\hat u_{\hat a a}[u_{a\theta}\hat u_{\hat a a}+\hat u_{\hat a\theta}](1-\chi_T)}{(1-u_{a\hat a}\hat u_{\hat a a})(1-u_{a\hat a}\hat u_{\hat a a}\chi_T)},\; \beta_{3T}=u_{a\theta}.\notag
\end{eqnarray}
%\footnote{{\color{orange}If both players are myopic, a LME then exists for all $T>0$ and the coefficients above carry $\chi_t$, $t\in [0,T]$.}}
By part (iv) in Assumption \ref{assu:static} and the fact that $\chi_T\in (0,1)$, the previous denominators never vanish, so the equilibrium indeed always exists.\footnote{If both players are myopic, the strategies carry coefficients with $\chi_t$ as above, $t\in[0,T]$. By the arguments used for Lemma \ref{lem:UniqueLearningSol}, the induced system in $(\gamma,\chi)$ has a unique solution, so there is a unique LME for all $T>0$.} Moreover, the terminal signaling coefficient $\alpha_{3T} = \beta_{3T}+\beta_{1T}\chi_T$ is proportional to $u_{a\theta}+u_{a\hat a}\hat u_{\hat a \theta}\chi_T$, which, by part (ii) in Assumption \ref{assu:static}, never vanishes either. This latter property is sufficient for the dynamic equilibria that we construct to always exhibit nontrivial signaling throughout the game.\footnote{This requirement at time $T$ can be relaxed, but it is beyond our scope of interest. Also, it is easy to see that $\delta_{0T}=\hat u_0+\hat u_{\hat a a} \beta_{0T}$, $\delta_{1T}=\hat u_{\hat a\theta}+\hat u_{\hat a a}[\beta_{3T}+\beta_{1T}\chi_T]$ and $\delta_{2T}=\hat u_{\hat a a} [\beta_{2T}+\beta_{1T}(1-\chi_T)]$. }

We conclude that $\mathbf{b}:=(\beta_0,\beta_1,\beta_2,\beta_3,v_6,v_8,\gamma,\chi)^\mathsf{T}$ satisfies a BVP of the form
\begin{eqnarray}\label{eq:BVPgeneral}
\dot{\mathbf{b}_t}=\mathrm{f}(\mathbf{b}_t),\; \text{ s.t. }\; \mathrm{D}_0 \mathbf{b}_0+\mathrm{D}_T \mathbf{b}_T = (\mathbf{B}(\chi_T)^\mathsf{T},\gamma^o,0)^\mathsf{T}
\end{eqnarray}

where $\mathrm f: \mathbb{R}^6\times \mathbb{R}_+\times [0,1)\to\mathbb{R}^8$; $\mathrm{D_0}:=\diag(0,0,0,0,0,0,1,1)$ and $\mathrm{D_T}:=\diag(1,1,1,1,1,1,0,0)$ are diagonal matrices;
and $\mathbf{B}(\chi):[0,1]\to\mathbb{R}^6$ carries the terminal conditions via
\begin{eqnarray}\label{eq:B_terminal}
\mathbf{B}(\chi):=\left(\frac{u_0+u_{a\hat a}\hat u_0}{1-u_{a\hat a}\hat u_{\hat a a}}, \frac{u_{a\hat a}[u_{a\theta}\hat u_{\hat a a}+\hat u_{\hat a\theta}]}{1-u_{a\hat a}\hat u_{\hat a a}\chi},\frac{u_{a\hat a}^2\hat u_{\hat a a}[u_{a\theta}\hat u_{\hat a a}+\hat u_{\hat a\theta}](1-\chi)}{(1-u_{a\hat a}\hat u_{\hat a a})(1-u_{a\hat a}\hat u_{\hat a a}\chi)},u_{a\theta},0,0\right)^\mathsf{T} \in \mathbb{R}^6.
\end{eqnarray}
The general expression that $\mathrm f(\cdot)$ in \eqref{eq:BVPgeneral} takes for a generic pair $(U,\hat U)$ satisfying Assumption \ref{assu:static} is long, and can be found in \texttt{spm.nb} on our websites. (There, to simplify notation, we work with normalized payoffs $U/|\partial^2 U/\partial a^2|$ and $\hat U/|\partial^2 \hat U/\partial \hat a^2|$.) In the next subsection, we provide examples that exhibit all the relevant properties that any such $\mathrm f(\cdot)$ can satisfy.

The task of finding an LME is then reduced to solving the BVP \eqref{eq:BVPgeneral} (and checking that the rest of the coefficients in the value function are well defined, which is a simpler task).

\subsection{Existence of Linear Markov Perfect Equilibria\label{subsec:existence_interior}}

In this section, we present two existence results for LME. Behind these results are two approaches to separately address \emph{common} and \emph{private-value} environments, as the corresponding BVPs have a different structure linked to the extent of asymmetry between the players' signaling rates that arise in each case. For the sake of exposition, we state the theorems for variations of the leadership game of Section \ref{sec:leading_by_example}. The ``common-value" method is, nevertheless, fully general and can be exported to other asymmetric settings.

\vspace{-0.1in}
\paragraph{The shooting problem.} Establishing the existence of a solution to the BVP \eqref{eq:BVPgeneral} is complex because there are multiple ODEs in both directions: $(\vec \beta,v_6,v_8)$ is traced backward from its terminal values, while $(\gamma,\chi)$ is traced forward using its initial values---see Figure \ref{fig:BVP1}. This means that some notion of ``shooting'' must be applied: say, to construct a backward \emph{initial value problem} (IVP) in which $(\gamma,\chi)$ has a parametrized initial condition at $T$, and be able to ensure that the terminal value (now, at 0) exactly matches $(\gamma^o,0)$. Attempting to use traditional continuity arguments widely used for one-dimensional problems---i.e., tracing the initial condition over an interval so that the target is hit by continuity---is hopeless:  accurate knowledge of the relationship between $\gamma$ and $\chi$ at $T$ for all possible coefficients $\vec\beta$ would be required to find the right ``tracing path'' in a multidimensional domain. 
\begin{figure}[htbp]
  \centering
  \includegraphics[keepaspectratio, height=1.5in]{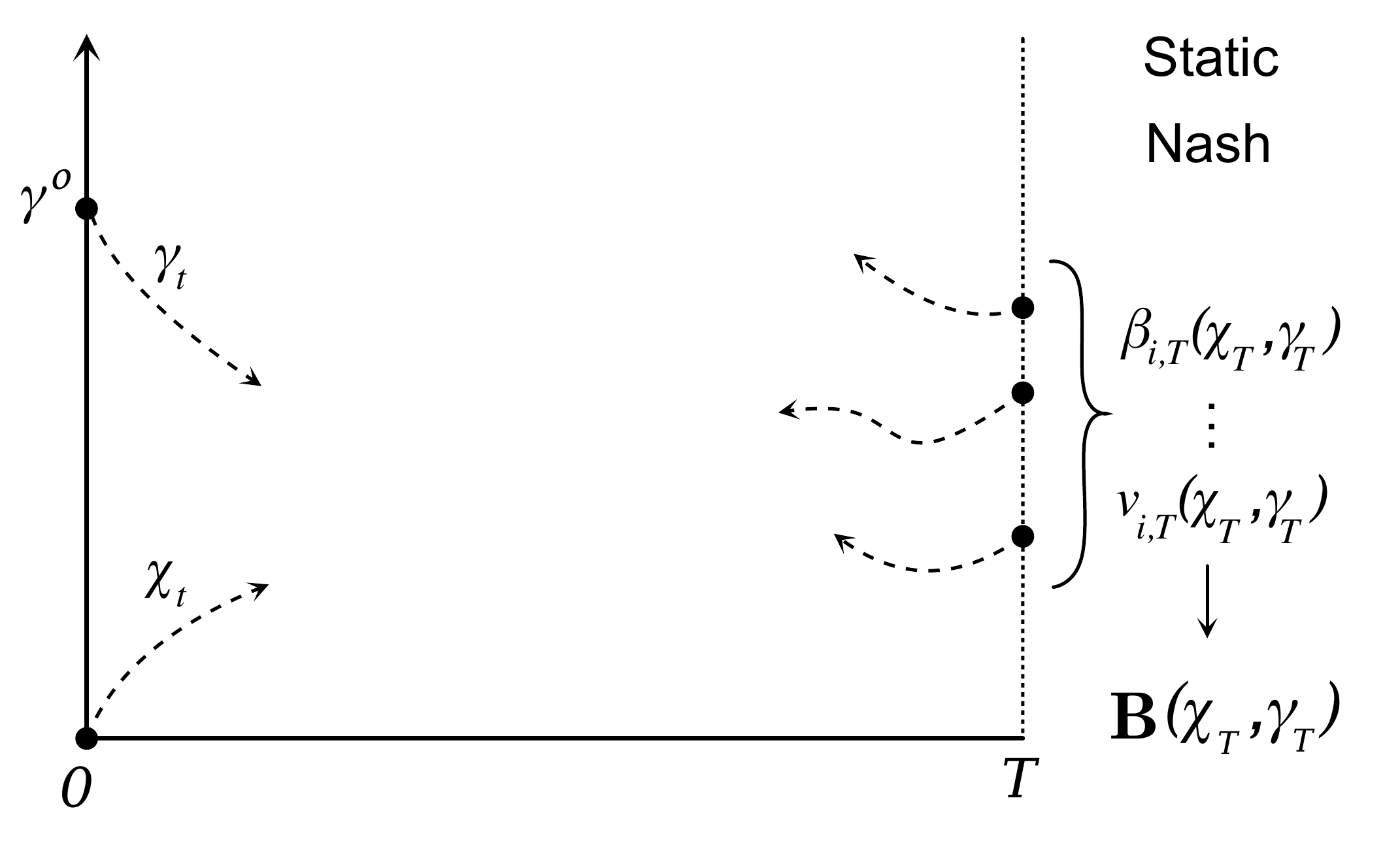}
  \vspace{-0.1in}
  \caption{The tuple $(\gamma,\chi)$ has initial conditions, while $(\vec\beta,v_6,v_8)$ has terminal ones. We have allowed for non-zero $v$'s and for a dependence on $\gamma_T$, as this can occur if terminal payoffs are allowed.}\label{fig:BVP1}
\end{figure}

\vspace{-0.12in}
The reason behind this dimensionality problem is the asymmetry in the environment: the rate at which the long-run player signals, $\alpha_3:= \beta_3+\beta_1\chi$, can be very different from the myopic player's counterpart, $\delta_1$. When this is the case, a nontrivial history dependence between $\gamma$ and $\chi$---reflected in the coupled system of ODEs they satisfy---ensues. Two questions naturally arise: first, under what conditions can such history dependence be simplified; second, how can one tackle the issue of existence of LME when a simplification is not feasible?

\vspace{-0.1in}
\paragraph{Private values: one-dimensional shooting.} We say that the environment is one of \emph{private values} if the myopic player's flow utility satisfies $\hat u_{\hat a\theta}=0$, i.e., his best reply does not directly depend on his belief about $\theta$, but only indirectly via the long-run player's action. Otherwise, the setting is one of \emph{common values} (despite the long-run player knowing $\theta$).

In a private-value setting, the players signal to each other at rates that are proportional. Indeed, the weight attached to $\hat M$ in the myopic player's best response becomes $\delta_1=\hat u_{\hat a a}\alpha_3$.

\begin{lemma}[One-to-one mapping]\label{lem:GammaChiRelationship}
Suppose that $\beta_1$ and $\beta_3$ are continuous and that $\delta_1=\hat u_{\hat a a}\alpha_3$. If $\hat u_{\hat a a}\neq 0$, there are positive constants $c_1, c_2$ and $d$ independent of $\gamma^o$ such that
$$
\chi_t= \frac{c_1c_2(1-[\gamma_t/\gamma^o]^d)}{c_1+c_2[\gamma_t/\gamma^o]^d}.
$$
Moreover, (i) $0\leq \chi_t< c_2<1$ for all $t\in [0,T]$ and (ii) $c_2\to 0$ as $\sigma_X\to0$ and $c_2\to 1$ as $\sigma_X\to\infty$. If instead $\hat u_{\hat a a}=0$ or $\sigma_X=\infty$, $\chi_t=1-\gamma_t/\gamma^o$.

\end{lemma}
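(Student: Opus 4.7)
The plan is to exploit the private-value restriction $\delta_{1t}=\hat u_{\hat a a}\alpha_{3t}$ to collapse the coupled system \eqref{eq:gammadot}--\eqref{eq:chidot} into a single autonomous ODE in the $(\gamma,\chi)$-plane. Substituting this expression for $\delta_{1t}$ into \eqref{eq:chidot}, both learning ODEs share a common factor $\alpha_{3t}^2$:
\begin{equation*}
\dot\gamma_t = -\frac{\gamma_t^2\alpha_{3t}^2}{\sigma_Y^2}, \qquad \dot\chi_t = \frac{\gamma_t\alpha_{3t}^2}{\sigma_Y^2}\bigl[(1-\chi_t) - k\chi_t^2\bigr], \qquad k:=\frac{\sigma_Y^2\hat u_{\hat a a}^2}{\sigma_X^2}.
\end{equation*}
Using that $\gamma_t\in(0,\gamma^o]$ is non-increasing on $[0,T]$ (Lemma~\ref{lem:UniqueLearningSol}), I would regard $\chi$ as a function of $\gamma$ on each interval of strict monotonicity; on intervals where $\alpha_{3t}\equiv 0$, both variables are constant, which will remain consistent with the final formula by continuity. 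Dividing $\dot\chi_t$ by $\dot\gamma_t$ cancels the $\alpha_3^2$-factor and produces the separable autonomous ODE
\begin{equation*}
\gamma\,\frac{d\chi}{d\gamma} = k\chi^2 + \chi - 1.
\end{equation*}

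Next I would factor the quadratic as $k(\chi - c_2)(\chi + c_1)$, with $c_2 = \tfrac{-1+\sqrt{1+4k}}{2k}$ and $c_1 = \tfrac{1+\sqrt{1+4k}}{2k}$, both positive and depending only on $k$ (hence independent of $\gamma^o$). The algebraic identities $c_1c_2=1/k$ and $k(c_1+c_2)=\sqrt{1+4k}=:d$ streamline the partial fraction decomposition
\begin{equation*}
\frac{1}{k(\chi-c_2)(\chi+c_1)} = \frac{1}{d}\left[\frac{1}{\chi-c_2}-\frac{1}{\chi+c_1}\right].
\end{equation*}
Integrating from $(\gamma^o,0)$ to $(\gamma_t,\chi_t)$, with the sign bookkeeping $\chi_t\in[0,c_2)$, $\chi_t-c_2<0$, $-c_2<0$, $\chi_t+c_1>0$, yields
\begin{equation*}
\frac{(c_2-\chi_t)\,c_1}{c_2\,(\chi_t+c_1)}=\left(\frac{\gamma_t}{\gamma^o}\right)^{d},
\end{equation*}
which I would then invert as a linear equation in $\chi_t$ to recover the stated closed-form expression.

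Finally, I would verify the asserted properties. For (i), the map $u\mapsto c_1c_2(1-u)/(c_1+c_2u)$ is strictly decreasing on $(0,1]$ from $c_2$ down to $0$, so with $u_t:=(\gamma_t/\gamma^o)^d\in(0,1]$ I obtain $0\le\chi_t<c_2$; the strict bound $c_2<1$ is equivalent to $\sqrt{1+4k}<1+2k$, i.e., $0<4k^2$. For (ii), asymptotic analysis of $c_2=(-1+\sqrt{1+4k})/(2k)$ gives $c_2\sim 1/\sqrt{k}\to 0$ as $\sigma_X\to 0$ ($k\to\infty$), while a Taylor expansion $\sqrt{1+4k}=1+2k+O(k^2)$ gives $c_2\to 1$ as $\sigma_X\to\infty$ ($k\to 0$). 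The degenerate case $k=0$ (either $\hat u_{\hat a a}=0$ or $\sigma_X=\infty$) reduces the ODE to $\gamma\,d\chi/d\gamma = \chi-1$, which integrates directly to $\chi_t=1-\gamma_t/\gamma^o$.

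The essential content of the lemma is the $\alpha_3^2$-cancellation produced by the private-value assumption, which converts a coupled pair of equilibrium-dependent ODEs into a textbook separable equation; I anticipate no substantive obstacle beyond the routine sign-tracking in the partial-fraction integration and the justification (by continuity in $t$) that the formula extends across any nullset of times where $\alpha_{3t}=0$.
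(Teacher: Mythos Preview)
Your proposal is correct and matches the paper's approach: the paper invokes uniqueness from Lemma~\ref{lem:UniqueLearningSol} and then verifies that the stated formula (with exactly your constants $c_1,c_2,d$, expressed in terms of $k=\sigma_Y^2\hat u_{\hat a a}^2/\sigma_X^2$) solves the system, deferring the separable-ODE construction you carry out to its online appendix. Closing via verification-plus-uniqueness, as the paper does, also cleanly sidesteps your concern about times where $\alpha_{3t}=0$, since one simply checks that $\tilde\chi_t:=f(\gamma_t)$ satisfies the $\chi$-ODE directly.
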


Private-value settings, by inducing proportional signaling rates, create useful symmetry: while the players' posterior variances are not proportional, there is a decreasing relationship between $\chi$ and $\gamma$ at all times. By (i), $\chi$ is uniformly below 1 if the public signal is informative, reflecting that the scope for the history-inference effect falls relative to the no-feedback case. By (ii), the public and no-feedback cases are recovered as we take limits; further, the characterization of $\chi$ obtained in the latter case is recovered when, in addition, $\hat u_{\hat a a}=0$.

This result enables the use of standard one-dimensional shooting arguments, making our leadership application of Section \ref{sec:leading_by_example} a valid laboratory:  the game has private values because the follower wants to match the leader's action. Below is the corresponding BVP for $\sigma_X\in (0,\infty)$ and, for simplicity, for $r=0$; we omit the ODE for $\beta_0$ because it is uncoupled and linear:
\begin{eqnarray*}
\dot{v}_{6t}&=&\beta_{2t}^2+2\beta_{1t}\beta_{2t}(1-\chi_t)-\beta_{1t}^2(1-\chi_t)^2+\frac{2  v_{6t}\alpha_{3t}^2\gamma_t\chi_t}{\sigma_X^2(1-\chi_t)}\label{eq:leadexampleBVP1}\\
\dot{v}_{8t}&=&-2\beta_{2t}-2(1-2\alpha_{3t})\beta_{1t}(1-\chi_t)-4\beta_{1t}^2\chi_t(1-\chi_t)+\frac{ v_{8t}\alpha_{3t}^2\gamma_t\chi_t}{\sigma_X^2(1-\chi_t)}\label{eq:leadexampleBVP2}\\
\dot{\beta}_{1t}&=&\frac{\alpha_{3t}\gamma_t}{2\sigma_X^2\sigma_Y^2(1-\chi_t)}\left\lbrace 2\sigma_X^2(\alpha_{3t}-\beta_{1t})\beta_{1t}(1-\chi_t)-\alpha_{3t}^2\beta_{1t}\gamma_t\chi_t v_{8t}\right.\notag\\
&& \left.-2\sigma_Y^2 \alpha_{3t}\chi_t(\beta_{2t}-\beta_{1t}[1-\chi_t-2\beta_{2t}\chi_t])\right\rbrace\label{eq:leadexampleBVP3}\\
\dot{\beta}_{2t}&=&\frac{\alpha_{3t}\gamma_t}{2\sigma_X^2\sigma_Y^2(1-\chi_t)}\left\lbrace 2\sigma_X^2\beta_{1t}^2(1-\chi_t)^2+2\sigma_Y^2 \alpha_{3t}\beta_{2t}\chi_t^2 (1-2\beta_{2t})-\alpha_{3t}^2\gamma_t\chi_t (2 v_{6t}+\beta_{2t} v_{8t})\right\rbrace \label{eq:leadexampleBVP4}\\
\dot{\beta}_{3t}&=&\frac{\alpha_{3t}\gamma_t}{2\sigma_X^2\sigma_Y^2(1-\chi_t)}\left\lbrace -2\sigma_X^2 \beta_{1t}(1-\chi_t)\beta_{3t}+2\sigma_Y^2\alpha_{3t}\beta_{2t}\chi_t^2(1-2\beta_{3t})-\alpha_{3t}^2\beta_{3t}\gamma_t\chi_t v_{8t}\right\rbrace \label{eq:leadexampleBVP5}\\
\dot{\gamma}_t&=&-\frac{\gamma_t^2\alpha_{3t}^2}{\sigma_Y^2},\label{eq:leadexampleBVP6}
\end{eqnarray*}

with boundary conditions $v_{6T}=v_{8T}=0$, $\beta_{1T}=\frac{1}{2(2-\chi_T)}$, $\beta_{2T}=\frac{1-\chi_T}{2(2-\chi_T)}$, $\beta_{3T}=\frac{1}{2}$ and $\gamma_{0}=\gamma^o$, and where $\alpha_3:=\beta_3+\beta_1\chi$ and $\chi_t$ is as in the previous lemma. We have the following:

\begin{theorem}[Existence of LME---private values]\label{thm:LeadingInteriorExistence}
Let $\sigma_X\in(0,\infty)$ and $r=0$. There exists a strictly positive $T(\gamma^o)\in O(1/\gamma^o)$ such that, for all $T<T(\gamma^o)$, there is an LME based on the solution to the previous BVP that satisfies $\beta_{0t} = 0$, $\beta_{1t}+\beta_{2t}+\beta_{3t}=1\; \text{ and }\; \alpha_{3t}>0,\;  t\in [0,T].$ 
\end{theorem}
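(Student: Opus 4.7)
The plan is to exploit the one-to-one mapping $\chi = \chi(\gamma)$ from Lemma~\ref{lem:GammaChiRelationship} to collapse the BVP into a scalar shooting problem, parametrized by the terminal value $g := \gamma_T \in (0,\gamma^o]$, and then close the loop via the intermediate value theorem. The first simplifications are structural: in the leadership application, $U(0,0,0)=0$ and $\hat U(0,0,0)=0$, so $u_0=\hat u_0=0$ and the terminal condition $\beta_{0T}=0$, together with the linearity/homogeneity of the $\beta_0$-ODE in $\beta_0$, force $\beta_{0t}\equiv 0$ along any solution. Similarly, I would check by direct computation on the terminal conditions $\beta_{1T}+\beta_{2T}+\beta_{3T}=\tfrac{1}{2(2-\chi_T)}+\tfrac{1-\chi_T}{2(2-\chi_T)}+\tfrac{1}{2}=1$ and verify that the invariant $\beta_1+\beta_2+\beta_3=1$ is preserved by the vector field (by adding the three $\dot\beta$ ODEs and confirming the sum vanishes on the set where it holds); this reduces the number of effective ODEs. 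After invoking Lemma~\ref{lem:GammaChiRelationship} to substitute $\chi_t$ by the explicit function of $\gamma_t$, the $\chi$-ODE is removed and the BVP reduces to a system in $(v_6,v_8,\beta_1,\beta_2,\beta_3,\gamma)$ (indeed, in $(v_6,v_8,\beta_1,\beta_3,\gamma)$ after using the sum constraint).

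Next, I would set up the backward initial value problem. Fix $g\in(0,\gamma^o]$ and set $\chi_T=\chi(g)$ via the Lemma, from which the terminal vector $\mathbf{B}(\chi_T)=(0,\beta_{1T}(g),\beta_{2T}(g),\beta_{3T}(g),0,0)^{\mathsf T}$ is a smooth function of $g$. The right-hand side $\mathrm{f}$ of the BVP \eqref{eq:BVPgeneral} is smooth on the open region where $\chi<1$ and $\sigma_X>0$; by Lemma~\ref{lem:GammaChiRelationship}(i), $\chi_t\le c_2<1$ uniformly, so $\mathrm{f}$ is locally Lipschitz at the terminal data and a unique backward solution $\mathbf{b}_t(g)$ exists on a maximal interval $(T_*(g),T]$. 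I would then establish an a priori bound ensuring $T_*(g)<0$ whenever $T$ is small enough: starting from the compact set of terminal data $\{\mathbf{B}(\chi(g)):g\in[\gamma^o/2,\gamma^o]\}$, continuous dependence on initial data and Gronwall's inequality give a uniform $\tau_0(\gamma^o)>0$ such that $\mathbf{b}_t(g)$ stays in a bounded neighborhood of its terminal value for $t\in[T-\tau_0,T]$; since the dominant time scale in the $\gamma$-equation is $\gamma^2/\sigma_Y^2$, this yields $\tau_0(\gamma^o)\in O(1/\gamma^o)$, which is the source of the claimed bound $T(\gamma^o)\in O(1/\gamma^o)$.

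Having constructed solutions on $[0,T]$ for each $g$, I would perform the one-dimensional shooting argument. Define $\Phi(g):=\gamma_0(g)-\gamma^o$. Since $\dot\gamma\le 0$, we have $\gamma_0(g)\ge g$ with equality only if $\alpha_3\equiv 0$; moreover, by continuity of the ODE flow in terminal data, $g\mapsto \gamma_0(g)$ is continuous on the admissible range, and $\Phi(\gamma^o)\ge 0$ (with a strict sign when the signaling strength at $T$ is nontrivial). At the other end, taking $g$ small makes $\gamma_0(g)$ small as well for $T$ in the uniform existence window, so that $\Phi(g)<0$. The intermediate value theorem then produces a $g^\star$ with $\gamma_0(g^\star)=\gamma^o$, matching the initial condition in~\eqref{eq:BVPgeneral}. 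The sign condition $\alpha_{3t}>0$ on $[0,T]$ follows by continuity from $\alpha_{3T}=\beta_{3T}+\beta_{1T}\chi_T=\tfrac12+\tfrac{\chi_T}{2(2-\chi_T)}>0$, provided $T$ is chosen small enough that the solution remains in the neighborhood of the terminal values where $\alpha_3>0$. Finally, I would verify that the remaining value-function coefficients $(v_0,\ldots,v_5,v_7,v_9)$ are well-defined (solve uncoupled linear ODEs driven by the already-constructed $\vec\beta,v_6,v_8,\gamma,\chi$), certifying that the candidate LME is indeed supported by an admissible quadratic value function.

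The \emph{main obstacle} will be the uniform a priori estimate that guarantees the backward IVP extends to $t=0$ for a range of terminal parameters $g$ wide enough to straddle $\gamma^o$. The coefficients $\dot\beta_1,\dot\beta_2,\dot\beta_3$ carry a singular factor $1/[\sigma_X^2(1-\chi_t)]$, and although Lemma~\ref{lem:GammaChiRelationship} keeps $1-\chi_t\ge 1-c_2>0$ uniformly, the bound on the relevant window $\tau_0(\gamma^o)$ and the continuity of $g\mapsto\gamma_0(g)$ must be controlled jointly. Tracking the explicit $\gamma^o$-dependence of the Lipschitz constants through Gronwall is what yields both the length bound $T(\gamma^o)\in O(1/\gamma^o)$ and the nonvacuous image of $\Phi$ required for the shooting step to close.
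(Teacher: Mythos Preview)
Your proposal is correct and follows essentially the same strategy as the paper: exploit Lemma~\ref{lem:GammaChiRelationship} to eliminate $\chi$, reduce via the invariants $\beta_0\equiv 0$ and $\beta_1+\beta_2+\beta_3=1$, and then run a one-dimensional backward shooting in the terminal value of $\gamma$, with uniform a~priori bounds delivering the horizon threshold $T(\gamma^o)\in O(1/\gamma^o)$.

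The execution differs only in minor technical choices. The paper pushes the reduction one step further by observing that $v_6$ and $\beta_2$ can be written in \emph{closed form} as functions of the remaining variables, so the backward IVP is posed in $(v_8\gamma,\beta_1,\beta_3,\gamma)$ alone; you keep $v_6$ in the system and remove $\beta_2$ via the sum constraint, which is fine but slightly less economical. For the shooting itself, the paper argues by contradiction on the supremum of admissible terminal values $\gamma^F$ (if no solution with $\gamma^F$ in the maximal set hits $\gamma^o$, then $\gamma_t\in[0,\gamma^o]$ throughout, which yields uniform bounds and contradicts maximality), whereas you fix a compact parameter window $[\gamma^o/2,\gamma^o]$ up front and apply the intermediate value theorem directly; both work, and your version is arguably more transparent. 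Finally, for $\alpha_3>0$ the paper invokes the comparison theorem (the $\alpha$-ODE has $\alpha\equiv 0$ as a solution, so $\alpha_T>0$ propagates to all $t$), which gives the sign on the whole interval without further shrinking $T$; your small-$T$ continuity argument suffices for the statement but is less sharp.
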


The key step in the proof is to show that $(\beta_1,\beta_2,\beta_3,v_6,v_8,\gamma)$ can be bounded uniformly over $[0,T(\gamma^o))$, for some $T(\gamma^o)>0$, when $\gamma_t\in [0,\gamma^o]$ at all times. This implies that tracing the (parametrized) initial condition of $\gamma$ in the backward IVP from $0$ upwards will lead to at least one $\gamma$-path landing at $\gamma^o$ due to the continuity of the solutions with respect to the initial conditions, while the rest of the ODEs still admit solutions.

Figure \ref{fig:LeadingSignalingComp} below illustrates the signaling coefficient $\alpha_3$ for various values of $\sigma_X$: as the latter increases, the dashed lines rotate counterclockwise from the public to the no-feedback case, justifying our earlier focus on $\sigma_X\in\{0,\infty\}$. Interestingly, when $r>0$ and $\sigma_X<\infty$, $\alpha_3$ is \emph{nonmonotonic}. Intuitively, a partially informative signal combines the increasing history-inference effect of the no-feedback case with the decreasing signaling motive driving the public case. Discounting weakens the latter, while the former grows over time even with a myopic leader. As $\sigma_X$ increases, moreover, the history-inference effect gains strength and the maximum of $\alpha_3$ shifts to the right. Only a fully dynamic model can uncover such effects.\footnote{Existence in the discounted case can be shown with identical methods. For sharper visual effects, we are potentially plotting beyond the interval of existence ensured by the theorem (which is a crude lower bound).}\\

\vspace{-0.1in}
\begin{figure}[htbp]
\begin{subfigure}{.5\linewidth}
\centering
\includegraphics[height=1.65in]{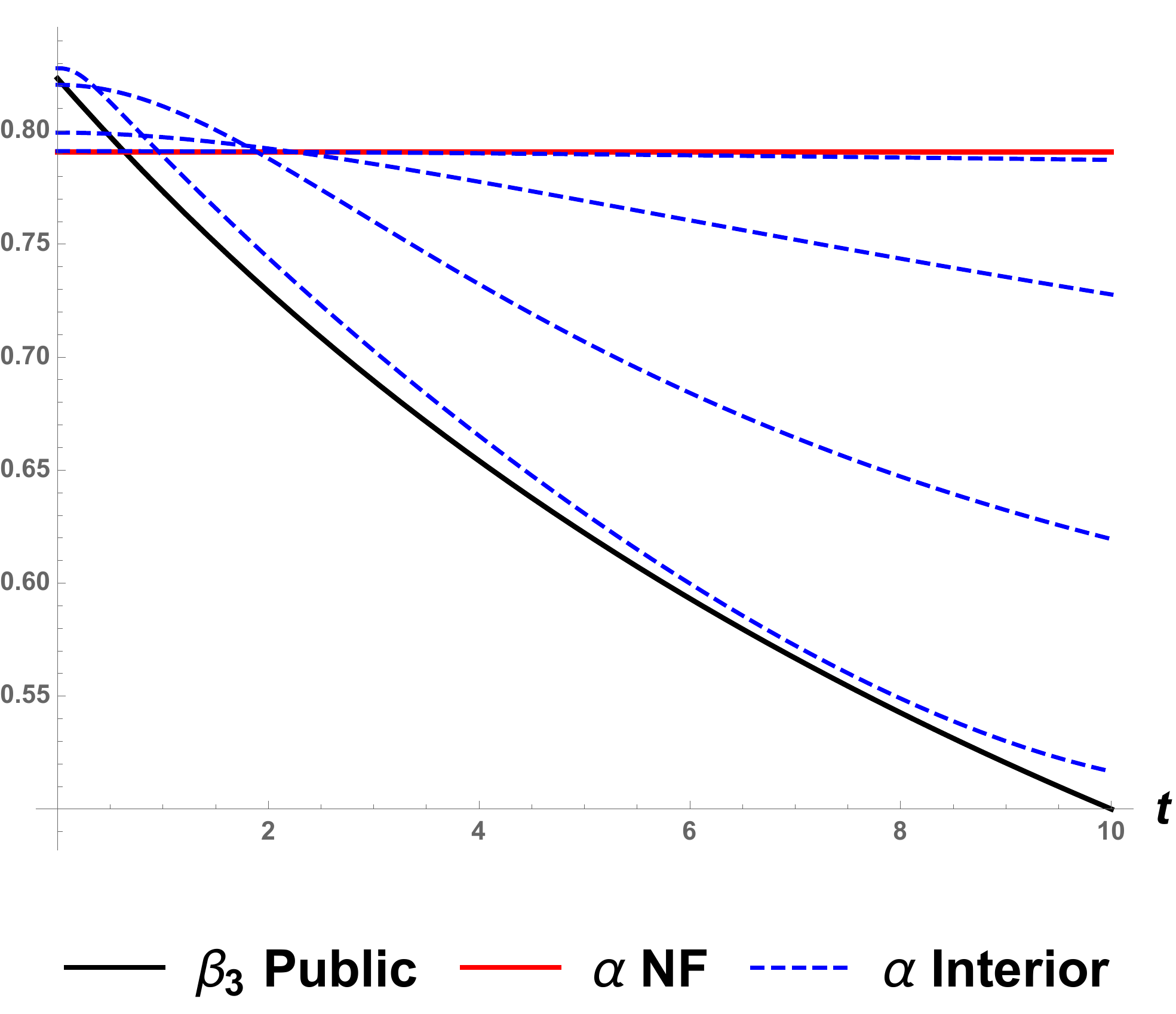}
\caption{$r=0$}
%\label{fig:sub1}
\end{subfigure}%
\begin{subfigure}{.5\linewidth}
\centering
\includegraphics[height=1.65in]{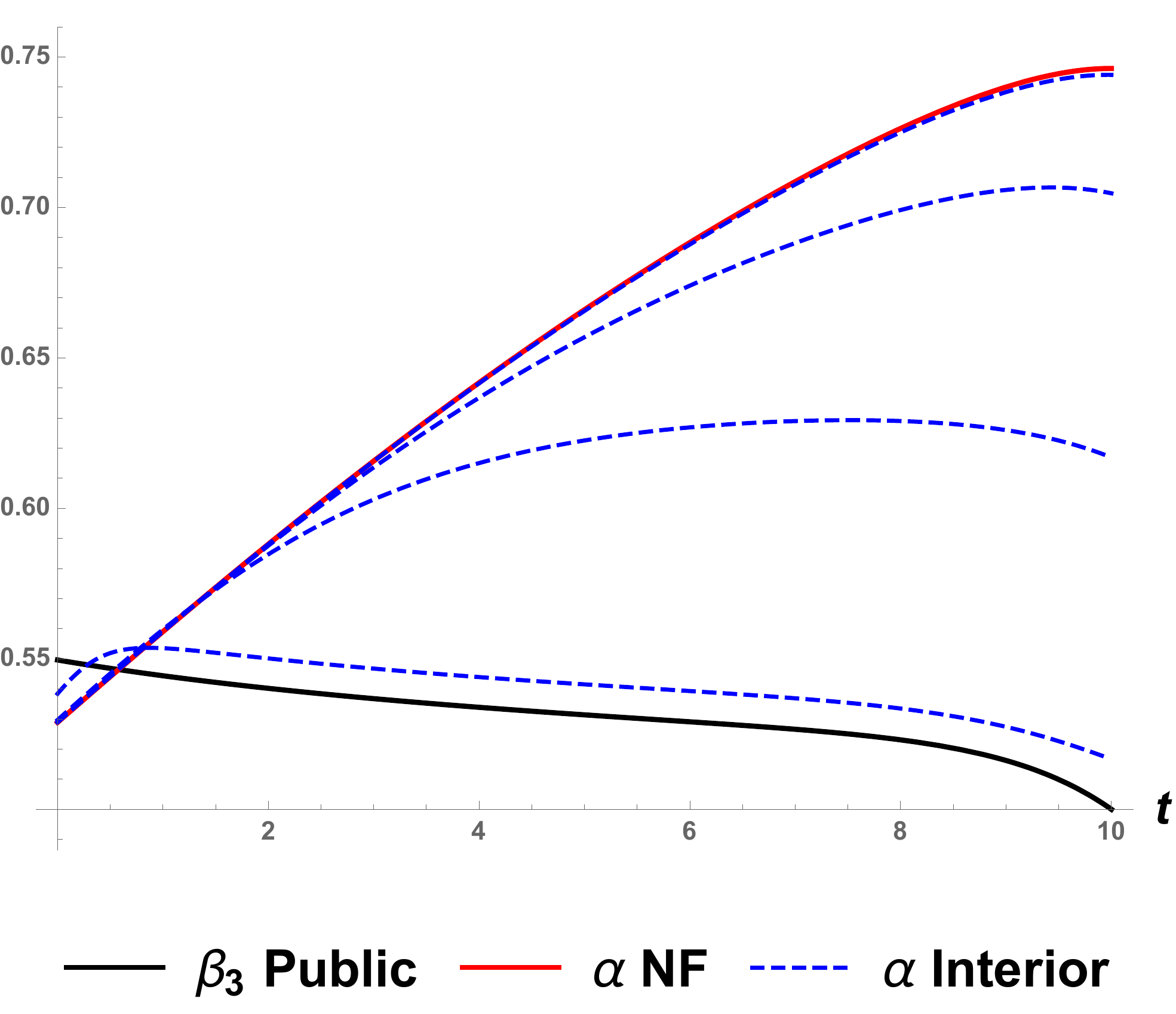}
\caption{$r=1$}
%\label{fig:sub2}
\end{subfigure}
  \caption{Signaling coefficients for $\sigma_X\in \{0,.1,.75,2,10,+\infty\}$; ``$\alpha$ Interior" denotes $\alpha_3$. }\label{fig:LeadingSignalingComp}
\end{figure}

Theorem 1 can be easily generalized to accommodate a conflict of interest between the players. Indeed, the one-to-one mapping between $\gamma$ and $\chi$ still holds for any best response of the myopic player that is a \emph{time-independent affine function} of his expectation of the long-run player's action. We defer a discussion of this topic to the common-value setting, where we address asymmetries at a more general level.

%\vspace{-0.1in}
\paragraph{Common-value settings: fixed-point methods.} When $\alpha_3$ and $\delta_1$ are not proportional, $\chi$ can depend on both current and past values of $\gamma$---the dimensionality problem resurfaces. 

Our key observation is that finding a solution to any given instance of the BVP \eqref{eq:BVPgeneral} is, mathematically, a \emph{fixed-point problem}. Specifically, note that the static Nash equilibrium at time $T$ depends on the value that $\chi$ takes at that point. The latter value, however, depends on how much signaling has taken place along the way, i.e., on values of the coefficients $\vec\beta$ at times prior to $T$. Those values, in turn, depend on the value of the equilibrium coefficients at $T$ by backward induction---thus, we are back to the same point where we started.

Our approach therefore applies a fixed-point argument adapted from the literature on BVPs with \emph{intertemporal linear constraints} \citep{keller1968numerical} to our problem with \emph{intratemporal nonlinear constraints}. Because the method is novel and has the generality required to become useful in other settings, we briefly elaborate on how it works.

In essence, we will be ``shooting'' six ODEs \emph{forward}. Specifically, let $t\mapsto \mathbf{b}_t(s,\gamma^o,0)$ denote a solution to the forward IVP version of \eqref{eq:BVPgeneral} when the initial condition is $(s,\gamma^o,0)$, with $s\in \mathbb{R}^6$ parametrizing the initial value of $(\vec\beta, v_6,v_8)$. From Lemma \ref{lem:UniqueLearningSol}, the last two components of $\mathbf{b}$, i.e., $\gamma$ and $\chi$, admit solutions as long as the others do; moreover, there are no constraints on their terminal values. Thus, for our fixed-point argument, we can focus on the first six components in $\mathbf{b}:=(\beta_0,\beta_1,\beta_2,\beta_3,v_6,v_8,\gamma,\chi)^\mathsf{T}$ by defining the \emph{gap} function 
$$
g(s)= \mathbf{B}(\chi_T(s,\gamma^o,0)) - \mathrm{D}_T\int_0^T \mathrm{f}(\mathbf{b}_t(s,\gamma^o,0))dt,
$$
where $\mathrm{D}_T:=\diag(1,1,1,1,1,1,0,0)$. This function measures the distance between the total growth of $(\vec\beta,v_6,v_8)$ (last term), and its target value, $\mathbf{B}(\chi_T(s,\gamma^o,0))$. By \eqref{eq:B_terminal}, $\mathbf{B}(\chi)$ is \emph{nonlinear}: the static equilibrium imposes nonlinear relationships across variables at time $T$.

By definition, $\mathbf{b}_0(s,\gamma^o,0)=s$. Consequently, it follows that
$$
g(s)=s\; \iff\; \mathbf{B}(\chi_T(s,\gamma^o,0)) = s+ \mathrm{D}_T\int_0^T \mathrm{f}(\mathbf{b}_t(s,\gamma^o,0))dt = \mathrm{D}_T\mathbf{b}_T(s,\gamma^o,0),
$$
where the last equality follows from the definition of the ODE-system that $\mathrm{D}_T\mathbf{b}$ satisfies. Thus, the shooting problem of finding $s\in \mathbb{R}^6$ such that $\mathbf{B}(\chi_T(s,\gamma^o,0))=\mathrm{D}_T\mathbf{b}_T(s,\gamma^o,0)$ can be restated as one of finding a fixed point of the function $g$.\footnote{A BVP with intertemporal linear constraints \citep{keller1968numerical} differs from ours in that $\mathrm{D}_0 \mathbf{b}_0+\mathrm{D}_T \mathbf{b}_T = (\mathbf{B}(\chi_T)^\mathsf{T},\gamma^o,0)^{\mathsf{T}}$ becomes $A\mathbf{b_0}+B \mathbf{b_T} = \zeta$, where $\zeta$ is a constant column vector and $A$ and $B$ are general matrices. On the one hand, since $A$ and $B$ are not necessarily diagonal matrices, one may not be able to dispense with a subset of the system. On the other hand, our version of $\zeta$ is a nonlinear function of a subset of (endogenous) components of $\mathbf{b}_T$, which makes the fixed-point argument more involved.}

The goal is then to find a time $T(\gamma^o)$ and a compact set $\mathcal{S}$ such that (i) for all $s\in \mathcal{S}$, a unique solution to the aforementioned IVP over $[0,T(\gamma^o)]$ exists, and (ii) $g$ is continuous from $\mathcal{S}$ to itself. The natural choice for $\mathcal{S}$ is a ball with center $s_0:=\mathbf{B}(0)$, the terminal condition of the trivial game with $T=0$; we then apply Brouwer's fixed-point theorem.

We can now establish our main existence result for a variation of the leadership application in which the follower's best response is of the form $\hat a_t = \hat u_{\hat a \theta} \hat{\mathbb{E}}_t[\theta] +\hat u_{\hat a a}\hat{\mathbb{E}}_t[a_t]$ for $(\hat u_{\hat a \theta},\hat u_{\hat a a})$ as in Assumption \ref{assu:static}; in particular, the myopic player's signaling coefficient is $\delta_{1t} = \hat u_{\hat a\theta}+\hat u_{\hat a a}\alpha_{3t}$.\footnote{Since $u_{a\hat a}=u_{a\theta}=1/2$ for the leader, signaling is nontrivial if $\hat u_{\hat a \theta}>-1$; best responses intersect if $\hat u_{\hat a a}<2$; and second-order inferences arise if $(\hat u_{\hat a \theta},\hat u_{\hat a a})\neq (0,0)$. Intercepts can be easily incorporated.} The BVP is stated in \eqref{eq:LeadingIntAsymChangev6}-\eqref{eq:LeadingIntAsymChi} in the Appendix.

\begin{theorem}[Existence of LME---common values]\label{thm:Leading_LME_BVP}
Set $\sigma_X\in (0,\infty)$ and $r=0$ in the leadership model, and let $(\hat u_{\hat a \theta},\hat u_{\hat a a})\in \mathbb{R}^2$ satisfy Assumption \ref{assu:static}. There is a strictly positive function $T(\gamma^o)\in O(1/\gamma^o)$ such that if $T<T(\gamma^o)$, there exists an LME based on the BVP \eqref{eq:LeadingIntAsymChangev6}-\eqref{eq:LeadingIntAsymChi}. In such an equilibrium, $\alpha_{3}>0$.
\end{theorem}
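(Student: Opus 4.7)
The plan is to implement the fixed-point strategy sketched in the excerpt. Define $s_0 := \mathbf{B}(0) \in \mathbb{R}^6$ and take the candidate domain to be a closed ball $\mathcal{S} := \bar{B}(s_0, R) \subset \mathbb{R}^6$ of radius $R$ (to be fixed later); the plan is to produce a horizon $T(\gamma^o) = O(1/\gamma^o)$ such that the gap function $g$ defined in the excerpt is a continuous self-map of $\mathcal{S}$, and then invoke Brouwer's fixed-point theorem. The resulting $s^*$ with $g(s^*)=s^*$ gives a solution to the BVP \eqref{eq:LeadingIntAsymChangev6}--\eqref{eq:LeadingIntAsymChi}, from which the LME is constructed by standard verification using the value function $V$ of Section \ref{subsec:dyn_prog_bvp} (with $(v_0,v_1,v_2,v_3,v_4,v_5,v_7,v_9)$ recovered from the HJB equation after plugging in the optimal policy).

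First I would establish uniform existence for the forward IVP $\dot{\mathbf{b}}_t = \mathrm{f}(\mathbf{b}_t)$ with $\mathbf{b}_0 = (s,\gamma^o,0)$ for $s \in \mathcal{S}$. The right-hand side $\mathrm{f}$ is a rational function whose only singularities on the domain of interest are at $\chi = 1$ and (trivially) at $\sigma_X = 0$, both of which we can stay away from. Lemma \ref{lem:UniqueLearningSol} gives the a priori confinement $0 < \gamma_t \le \gamma^o$ and $0 \le \chi_t < 1$ while a solution exists; combined with a Gr\"onwall estimate on the first six components, this implies that starting from any $s \in \mathcal{S}$, the trajectory stays in a fixed compact set $\mathcal{K}$ on which $\mathrm{f}$ is Lipschitz, provided the horizon is short. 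Quantifying this yields a time bound of order $1/\gamma^o$: the ODEs for $\beta_i$, $v_6$, $v_8$, and $\chi$ all carry a multiplicative factor $\gamma_t\alpha_{3t}^2/\sigma_Y^2$ (or similar), so rescaling time by $\gamma^o$ is the natural normalization and yields $T(\gamma^o) \in O(1/\gamma^o)$.

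Next I would show $g(\mathcal{S}) \subset \mathcal{S}$ by decomposing
\[
g(s) - s_0 \;=\; \bigl[\mathbf{B}(\chi_T(s,\gamma^o,0)) - \mathbf{B}(0)\bigr] \;-\; \mathrm{D}_T \int_0^T \mathrm{f}(\mathbf{b}_t(s,\gamma^o,0))\,dt.
\]
The first bracket is small because $\mathbf{B}$ is continuous at $0$ and $\chi_T(s) \to 0$ uniformly in $s \in \mathcal{S}$ as $T \to 0$ (since $\chi_0 = 0$ and $|\dot\chi|$ is uniformly bounded on $\mathcal{K}$). The second term is bounded by $T \cdot \sup_{\mathcal{K}}\|\mathrm{D}_T\mathrm{f}\|$. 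Choosing $T < T(\gamma^o)$ small enough, both contributions can be made at most $R/2$, giving the self-map property. Continuity of $g$ then follows from continuous dependence of ODE solutions on initial data (on the Lipschitz region $\mathcal{K}$) together with continuity of $\mathbf{B}$. Brouwer's theorem yields the fixed point.

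The main obstacle will be Step 1, i.e., obtaining uniform bounds that simultaneously (a) keep $\chi_t$ bounded away from $1$, (b) keep $(v_6,v_8)$ from blowing up through their coupled ODEs (which carry $1/[\sigma_X^2(1-\chi_t)]$ factors), and (c) keep the candidate signaling coefficient $\alpha_{3t} = \beta_{3t} + \beta_{1t}\chi_t$ well-behaved. These estimates are cleanest when both $T$ and $R$ are small, but $R$ must be at least large enough to absorb the worst-case variation of $\mathbf{B}(\chi_T)$ as $s$ ranges over $\mathcal{S}$; the order $T(\gamma^o) \in O(1/\gamma^o)$ arises precisely when matching these two constraints. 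Finally, the claim $\alpha_3 > 0$ on $[0,T]$ follows because at the fixed point $\alpha_{3T}$ is proportional to $u_{a\theta} + u_{a\hat a}\hat u_{\hat a\theta}\chi_T$, which is positive by part (ii) of Assumption \ref{assu:static} and $\chi_T \in [0,1)$; continuity of $\alpha_3$ and smallness of $T$ then preserve positivity throughout $[0,T]$.
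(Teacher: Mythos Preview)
Your proposal follows essentially the same Brouwer-fixed-point strategy as the paper: parametrize the forward IVP by the initial value $s$ of the $(\vec\beta,v_6,v_8)$ block, show the gap function $g$ is a continuous self-map of a closed ball around $s_0=\mathbf{B}(0)$ for $T<T(\gamma^o)\in O(1/\gamma^o)$, and apply Brouwer. Two technical points where the paper is sharper than your outline are worth noting. First, the system \eqref{eq:LeadingIntAsymChangev6}--\eqref{eq:LeadingIntAsymChi} referenced in the theorem is \emph{already} written in the rescaled variables $\tilde v_6=v_6\gamma/(1-\chi)^2$, $\tilde v_8=v_8\gamma/(1-\chi)$, $\tilde\beta_2=\beta_2/(1-\chi)$, which eliminates the $1/(1-\chi)$ denominators you identify as the main obstacle; the only remaining denominator is $1+\hat u_{\hat a\theta}\chi$, bounded below by $\min\{1,1+\hat u_{\hat a\theta}\}>0$ under Assumption~\ref{assu:static}(ii), so no separate argument keeping $\chi$ uniformly away from $1$ is needed beyond Lemma~\ref{lem:UniqueLearningSol}. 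Second, for $\alpha_3>0$ the paper does not rely on smallness of $T$ as you do: it observes that the $\alpha$-ODE carries $\alpha$ as a multiplicative factor, so $\alpha\equiv 0$ is a solution, and once $\alpha_T=(1+\hat u_{\hat a\theta}\chi_T)/(2-\hat u_{\hat a a}\chi_T)>0$ is verified from the terminal Nash condition, the comparison theorem (equivalently, uniqueness of ODE solutions) forces $\alpha_t>0$ on all of $[0,T]$, independent of the horizon.
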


There are three observations from this theorem. First, the time for which an LME is ensured to exist grows without bound as $\gamma^o\searrow 0$. Indeed, $\mathrm{f}(\cdot)$ naturally scales with this parameter, so the solutions converge to the full-information benchmark $(\beta_0,\beta_1,\beta_2,\beta_3,v_6,v_8,\chi,\gamma)=(0,u_{a\hat a}[u_{a\theta}\hat u_{\hat a a}+\hat u_{\hat a\theta}],\frac{u_{a\hat a}^2\hat u_{\hat a a}[u_{a\theta}\hat u_{\hat a a}+\hat u_{\hat a\theta}]}{1-u_{a\hat a}\hat u_{\hat a a}},u_{a\theta},0,0,0,0)$,
which is defined for all $T>0$. 

Second, while the self-map condition does not affect the order of $T(\gamma^o)$ relative to Theorem \ref{thm:LeadingInteriorExistence}, it is not vacuous either. In fact, since $s_0=\mathbf{B}(0)$ is the center of $\mathcal{S}$, we have that
$$
g(s)-s_0= \mathbf{B}(\chi_T(s,\gamma^o,0)) - \mathbf{B}(0) - \mathrm{D}_T\int_0^T \mathrm{f}(\mathbf{b}_t(s,\gamma^o,0))dt.
$$
Bounding $\mathbf{B}(\chi_T(s,\gamma^o,0)) - \mathbf{B}(0)$ therefore imposes an additional constraint relative to those that ensure that the system is uniformly bounded (which in turn bound the last integral). 

Finally, the bound $T(\gamma^o)$ is obtained under minimal knowledge of the system: it is the outcome of bounds that only exploit the degree of the polynomials in $\mathrm{f}(\mathbf{b})$ and hence that do not exploit any relationship between the equilibrium coefficients. Thus, the proof technique is (i) fully general and (ii) improvable provided more is known about the system at hand.

Appendix \ref{appendix_sketch} sketches how the proof of Theorem \ref{thm:Leading_LME_BVP} applies to the whole class of games satisfying Assumption \ref{assu:static}. Moreover, observe that this method, by being able to handle multiple ODEs in each direction, has the power to be applied to other asymmetric games of learning beyond the class under study (see the concluding remarks for more on this topic).

\vspace{-0.1in}
\paragraph{Asymmetric games.} Let us briefly elaborate on how Theorems \ref{thm:LeadingInteriorExistence} and \ref{thm:Leading_LME_BVP} enable us to explore natural settings in which the players rates of signaling are inevitably different.

The simplest common-value version of the leadership game is one where the follower only wants to match the state of the world. In this case, the equilibrium obtained in Proposition \ref{prop:Leading_Public_LME} for the public case still goes through, but the equilibrium for $\sigma_X>0$ changes. Specifically, since the follower now behaves according to $\hat a= \hat M$ as opposed to $\hat a= \alpha_3 \hat M$, his actions are more sensitive to his private information. This, in turn, magnifies the leader's signaling in two ways: the leader has a stronger incentive to steer the follower's behavior (i.e., $\beta_3$ increases), and due to the imperfect learning, the leader relies more on $M$ (at the expense of $L$) to coordinate (i.e., $\beta_1$ also increases).  This results in more signaling and learning, also compounded by an overall higher $\chi$ in the history-inference effect (despite the negative direct impact that a more informative $X$ has on the reliance on past play); the left and center panels in Figure \ref{fig:asy} illustrate these forces. It is noteworthy that this example must utilize our most  general Theorem \ref{thm:Leading_LME_BVP}, in spite of its a priori simplicity stemming from the follower's myopic best reply being independent of the leader's strategy.

\begin{figure}[htbp]
\centering
\includegraphics[height=1.5in]{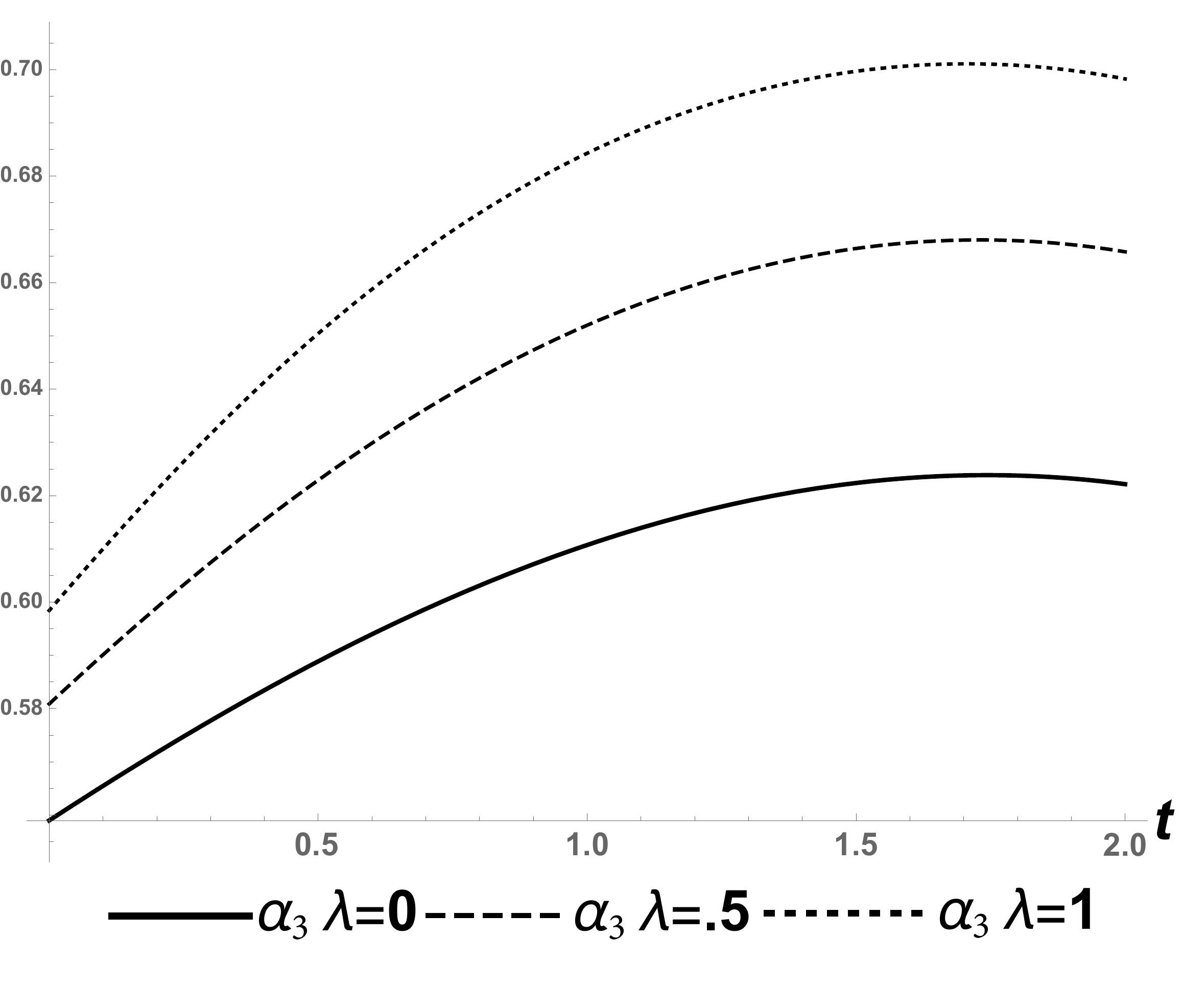}
\quad \includegraphics[height=1.5in]{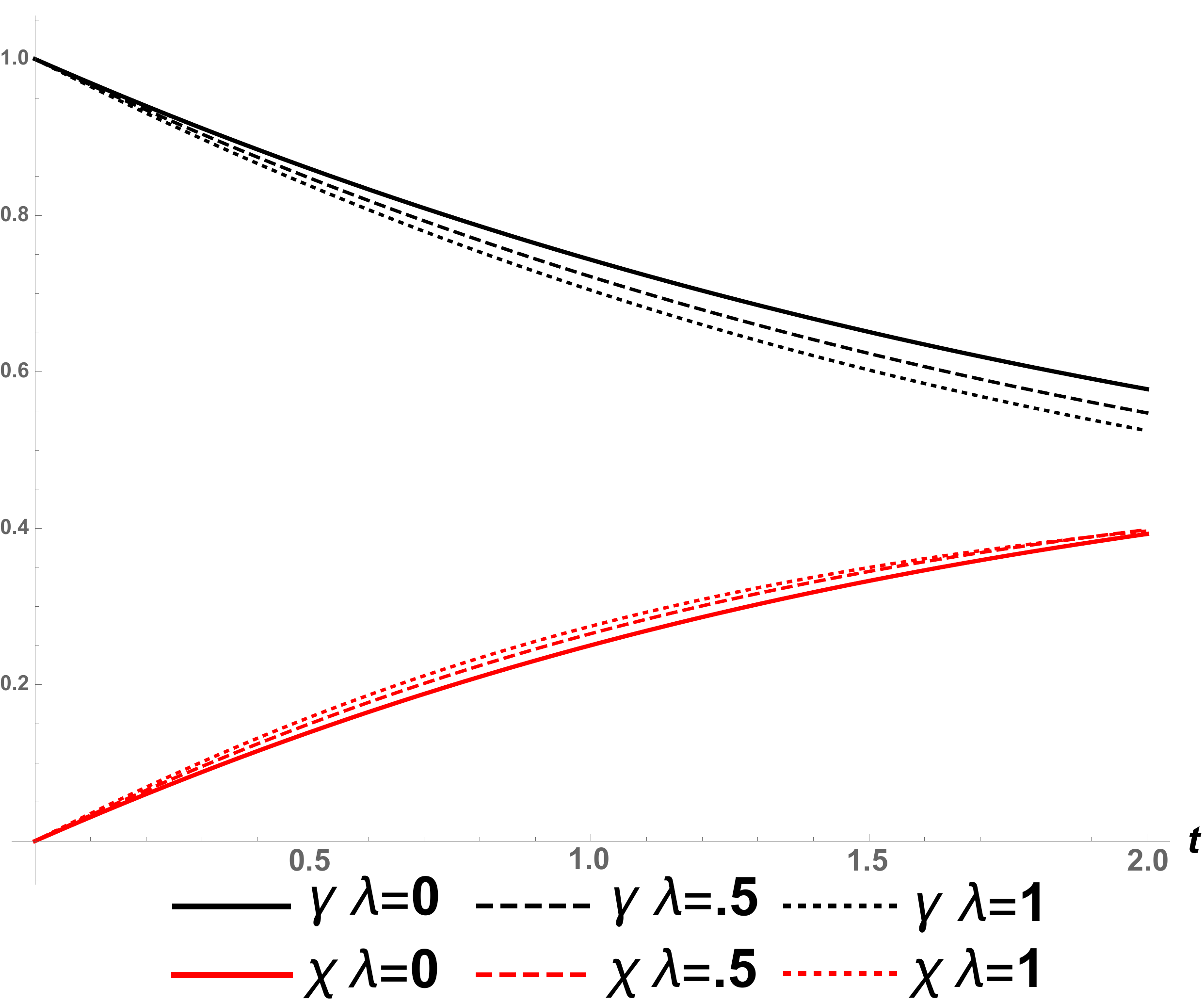}
  \quad \includegraphics[height=1.5in]{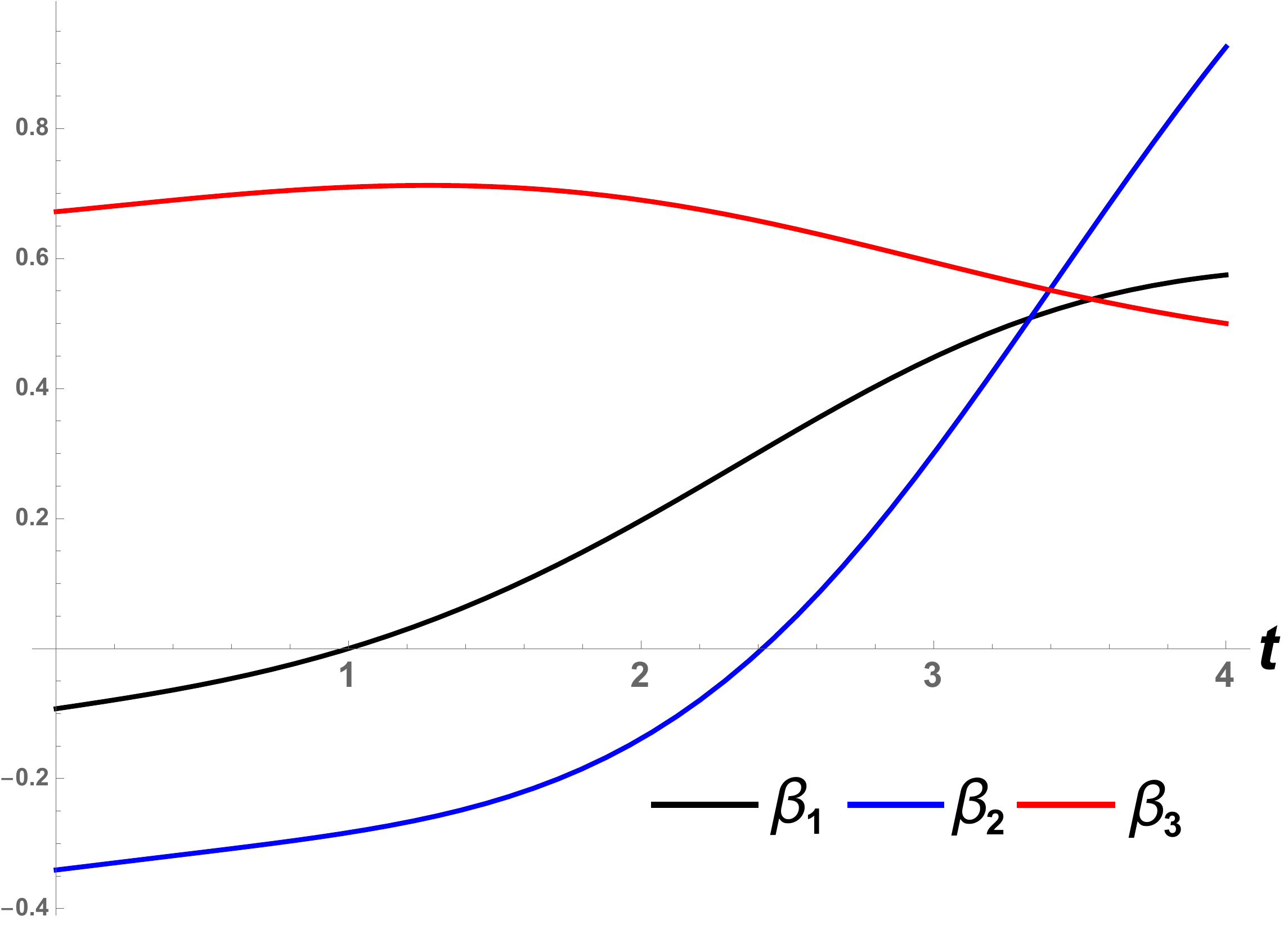}
\caption{Signaling (left) and learning (center) for $\hat U(a,\hat a,\theta)=-\lambda(\hat a-\theta)^2-(1-\lambda)(\hat a-a)^2$, $\lambda\in \{0, 0.5,1\}$. Right: Leader's strategy coefficients for follower payoffs $-\left(\hat a-(3/2)a\right)^2$. }\label{fig:asy}
\end{figure}

Asymmetries also naturally arise from a conflict of interest between the players. Let us now use Theorem \ref{thm:LeadingInteriorExistence}. In the leadership application, fixing the leader's payoffs, incentives are misaligned if $\hat u_{\hat a a}+\hat u_{\hat a \theta}\neq 1$ on the follower's side. Suppose then that $\hat u_0=\hat u_{\hat a \theta}=0$ and $\hat u_{\hat a a}>1$, i.e., the follower overreacts to the leader's action. If the horizon is sufficiently long, the leader has an initial incentive to invest in mitigating the follower's reaction by shrinking the latter's belief, so $\beta_1$ and $\beta_2$ (the weights on $M$ and $L$, respectively) are negative. Furthermore, the right panel in Figure \ref{fig:asy} shows that $\beta_2$ is the main component in this attempt: the manipulation occurs largely via the leader \emph{jamming} the public belief $L$, as her direct incentives to coordinate soon are strong. Finally, as the time to enjoy the benefits of such manipulation shrinks, the leader accommodates the follower, and these effects reverse. 

We have chosen to continue with the leadership application for expositional reasons. In the next section, we explore other applications based on extensions of our baseline model.

\section{Extensions\label{sec:applications}}

We extend our model to allow (i) a quadratic terminal payoff in a career-concerns model, and (ii) the long-run player affecting the public signal in a trading model a la \cite{kyle1985continuous}.

\subsection{Reputation for Neutrality\label{subsec:political}}

Suppose that the long-run player is now an expert or politician with career concerns. This agent has a hidden ideological bias $\theta$ and takes repeated actions---for example, adopting positions on critical issues or making campaign promises. The mean of the prior distribution denotes the unbiased type---without loss, let us use the normalization $\mu=0$.

We interpret the myopic player as a news outlet that always attempts to report on the true bias, i.e., that maximizes $-\hat{\mathbb{E}}_t[(\hat{a}_t-\theta)^2]$ at all times. In turn, the politician's payoff is
\begin{align*}
-\int_0^T (a_t-\theta)^2 dt-\psi \hat{a}_T^2,
\end{align*}
with $\psi>0$. Given the myopic player's preferences, the termination payoff takes the form $-\psi \hat M_T^2$, and so the politician has career concerns: she wants to appear as unbiased at the end of the horizon. But this long-term goal conflicts with her short-term ideological desires: in her flow payoff, she benefits from taking actions that conform to her bias.

The private nature of $dY=a_tdt+\sigma_Y dZ_t^Y$ is understood as the outlet having access to imperfect private sources regarding the politician's actions. In turn, $dX_t=\hat M_tdt+\sigma_X dZ_t^X$ is the outlet's \emph{news} process: the (public) reporting on the bias is fair on average, but imperfect. When does the politician fare better? In settings where the reporting is precise---i.e., low $\sigma_X$---and hence she can tailor her actions to her reputation? Clearly, noisier environments entail a direct cost: they introduce increased uncertainty over a concave objective. The next result shows that increasing an agent's uncertainty over her own reputation, thereby undermining her ability to take appropriate actions, can be beneficial:

\begin{proposition}\label{prop:PCCcomparison}
\begin{itemize}
\item [(i)] Suppose that $\sigma_X\in \{0,+\infty\}$. Then, for all $\psi, T>0$ there exists an LME. Moreover, if $\psi<\sigma_Y^2/\gamma^o$, the LME is unique, and learning is lower and ex ante payoffs higher in the no-feedback case.
\item [(ii)] If $\sigma_X\in (0,\infty)$, there exists $T(\gamma^o)\in O(1/\gamma^o)$ s.t. an LME exists for all $T<T(\gamma^o)$.
\end{itemize}
\end{proposition}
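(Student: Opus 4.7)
The plan is to specialize the framework of Section \ref{sec:eqmanalysis} to this reputation setting---where $U(a,\hat a,\theta)=-(a-\theta)^2$ yields $u_{a\hat a}=0$, $u_{a\theta}=1$, and $\hat U(a,\hat a,\theta)=-(\hat a-\theta)^2$ forces $\hat a_t=\hat M_t$ so that $\delta_{1t}\equiv 1$ and $\delta_{0t}=\delta_{2t}=0$---while incorporating the quadratic terminal payoff through a modified terminal condition on the leader's value function. Using $\mathbb{E}_T[\hat M_T^2]=M_T^2+\gamma_T\chi_T$ from Lemma \ref{lem:BeliefDecomp}, the quadratic ansatz terminates at $V(\theta,m,\ell,T)=-\psi m^2-\psi\gamma_T\chi_T$, yielding the endpoint $v_{5T}=-\psi$ (with the other $v_{iT}$ pinned down accordingly) and, through the FOC \eqref{eq:FOCHJB}, equilibrium terminal strategy weights $\beta_{3T}=1$, $\beta_{1T}=-\psi\gamma_T/\sigma_Y^2$, $\beta_{0T}=\beta_{2T}=0$.

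For part (i), because $\delta_1\equiv 1$ is exogenously constant, the $\chi$-ODE decouples from the $\beta$-system: Lemma \ref{lem:UniqueLearningSol} pins it down as $\chi\equiv 0$ when $\sigma_X=0$ and $\chi_t=1-\gamma_t/\gamma^o$ when $\sigma_X=\infty$. The resulting BVP collapses to a one-dimensional shooting problem parametrized by the endogenous terminal variance $\gamma_T\in(0,\gamma^o]$: trace $v$ and $\vec\beta$ backward from $\gamma_T$ and match the forward condition $\gamma_0=\gamma^o$. Global existence (all $T>0$) follows from the uniform bound $\gamma_t\in(0,\gamma^o]$, a priori control of the $\beta$'s in terms of $\gamma_T$, and continuity of the induced shooting map combined with an intermediate-value argument. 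Under $\psi<\sigma_Y^2/\gamma^o$, the terminal slope $|\beta_{1T}|=\psi\gamma_T/\sigma_Y^2<1$ holds uniformly in $\gamma_T\leq\gamma^o$, which I plan to use to render the shooting map strictly monotone and thereby deliver uniqueness via a contraction argument on the backward flow.

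The comparisons then follow from the induced signaling coefficients. In each LME, $\dot\gamma_t=-\gamma_t^2\alpha_{3t}^2/\sigma_Y^2$ with $\alpha_3=\beta_3$ in the public case and $\alpha_3=\beta_3+\beta_1\chi$ in the no-feedback case. Because the politician seeks a reputation for neutrality, higher perceived $M$ pushes her to \emph{offset} by lowering her action; this direction is seeded by $\beta_{1T}<0$ and propagated backward in time. Combined with $\chi>0$ on $(0,T]$, this yields $\alpha_3^{\textit{NF}}<\beta_3^{\textit{Pub}}$ path-by-path, hence $\gamma_T^{\textit{NF}}>\gamma_T^{\textit{Pub}}$ by ODE comparison. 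For payoffs, I would evaluate ex ante welfare $V(\mu,\mu,0,0)$ via the quadratic ansatz and use the Gaussian-projection identity $\mathbb{E}_0[\hat M_T^2]=\gamma^o-\gamma_T$. Slower learning in the no-feedback case implies a smaller expected terminal reputational loss $-\psi(\gamma^o-\gamma_T)$, whereas the only new direct cost is the filtering term $-\psi\gamma_T\chi_T$; the hypothesis $\psi<\sigma_Y^2/\gamma^o$ is what keeps this trade-off strictly favorable to the no-feedback case.

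For part (ii), the interior regime $\sigma_X\in(0,\infty)$ is formally a common-value problem, but with the simplifying feature that $\delta_{1t}\equiv 1$ is independent of $\vec\beta$. The fixed-point method of Theorem \ref{thm:Leading_LME_BVP} therefore applies essentially verbatim once the gap function $g$ is redefined to reflect the terminal-payoff-induced endpoint conditions; the self-map property on a small ball around the $T=0$ limit, together with Brouwer's theorem, then delivers existence on any horizon $T<T(\gamma^o)\in O(1/\gamma^o)$. The main obstacle across the whole proof will be the payoff comparison in part (i): closed forms are unavailable for general $\psi,T$, so the inequality must be extracted by integrating delicate bounds along the equilibrium path while leveraging $\psi<\sigma_Y^2/\gamma^o$ precisely where it keeps the $\beta_1$ feedback subcontractive and thus preserves the ordering of $\alpha_3$'s globally.
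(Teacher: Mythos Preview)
Your overall plan is aligned with the paper's: specialize the Section~\ref{sec:eqmanalysis} machinery, absorb the terminal payoff into the boundary data of the value function, shoot in one dimension for $\sigma_X\in\{0,\infty\}$, and apply the fixed-point argument of Theorem~\ref{thm:Leading_LME_BVP} for $\sigma_X\in(0,\infty)$. The identity $\mathbb{E}_0[\hat M_T^2]=\gamma^o-\gamma_T$ is exactly what the paper uses for the payoff side.

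There is, however, a concrete mistake in your terminal condition. You write $\beta_{1T}=-\psi\gamma_T/\sigma_Y^2$, but the sensitivity in the FOC \eqref{eq:FOCHJB} is $dM_t/da_t=\gamma_t\alpha_{3t}/\sigma_Y^2$ with $\alpha_{3T}=\beta_{3T}+\beta_{1T}\chi_T$, not $\beta_{3T}$. Solving the resulting fixed point at $t=T$ gives
\[
\beta_{1T}=-\frac{\psi\gamma_T}{\sigma_Y^2+\psi\gamma_T\chi_T},
\]
which the paper states explicitly in the text after the proposition. Your formula is correct only when $\chi_T=0$ (the public case); in the no-feedback and interior cases it is off, and since the entire backward flow and the self-map $g$ in part (ii) are seeded from this endpoint, the error propagates. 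Fortunately the qualitative bounds you rely on survive---under $\psi<\sigma_Y^2/\gamma^o$ the corrected $|\beta_{1T}|$ is even smaller than your expression---so the existence and uniqueness arguments can be repaired, but the computations need to be redone.

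The learning and payoff comparisons are the weak point of the plan. ``$\alpha_3^{\textit{NF}}<\beta_3^{\textit{Pub}}$ path-by-path'' is not a well-posed statement: the two signaling coefficients live on different $\gamma$-trajectories, so a direct ODE comparison does not apply without first reducing both equilibria to ODEs driven by a common variable. Likewise, your payoff accounting double-counts the uncertainty term: the $-\psi\gamma_T\chi_T$ piece in $V(\cdot,T)$ is a \emph{conditional} correction that is already absorbed in the ex ante identity $\mathbb{E}_0[\hat M_T^2]=\gamma^o-\gamma_T$; it is not an additional cost to be traded off against the gain $\psi(\gamma_T^{\textit{NF}}-\gamma_T^{\textit{Pub}})$. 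The actual comparison must also control the flow-cost integrals $\int_0^T\mathbb{E}_0[(a_t-\theta)^2]dt$ across the two equilibria, and your proposal does not indicate how $\psi<\sigma_Y^2/\gamma^o$ enters that step.
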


Politicians or experts with larger biases take more extreme actions, and hence the equilibrium strategy attaches a positive weight to the type. Because of career concerns, however, the greater the perceived value of $\hat{M}$, the greater the incentive to manipulate it downward. With private monitoring, higher types therefore must offset higher beliefs from their perspectives, leading to a history-inference effect that dampens the signaling coefficient $\alpha_3$. The belief is then less responsive from an ex ante perspective, which facilitates maintaining a reputation for neutrality.\footnote{It is easy to show that the ex ante expectation of $\hat{M}_T^2$ is $\gamma^o-\gamma_T$, so that greater learning by the myopic player results in larger terminal losses for the long-run player. This reverses for slightly \textit{negative} $\psi$, but so does the history-inference effect: there is \textit{more} learning but again a higher payoff in the no-feedback case.}  Indeed, provided that the objective is not too concave and the environment not too uncertain (which strengthen the direct cost), this strategic effect dominates. 

Regarding (ii), because this environment is one of common values, one can establish the existence of an LME with minimal changes to the method behind Theorem \ref{thm:Leading_LME_BVP}. Indeed, the only difference is that our baseline BVP changes to incorporate terminal conditions that depend not only on $\chi_T$, but also on $\gamma_T$ via $\beta_{1T}=-\frac{\psi \gamma_T}{\sigma_Y^2+\psi \gamma_T\chi_T}$: with terminal lump-sum payoffs, there are last-minute incentives to manipulate the myopic player's belief that decrease in the associated precision. Our approach does not vary with this  dependence.

\subsection{Insider Trading\label{subsec:insidertrading}}
An asset with fixed fundamental value $\theta$ is traded in continuous time until date $T$, the time at which its true value is revealed, ending the game. A patient \emph{insider} (the long-run player) privately observes $\theta$ prior to the start of the game. As in \cite{yang2019back}, a second trader has a technology that allows him to privately observe imperfect signals of the insider's trades; this player is myopic. Both players and a flow of noise traders submit orders to a \emph{market maker} who then executes those trades at a public price $L_t=\mathbb{E}[\theta|\mathcal{F}_t^X]$.

We depart from the baseline model along three dimensions. First, the public signal---the \emph{total order flow}---is  $dX_t=(a_t+\hat{a}_t)dt+\sigma_X dZ_t^X$, which now includes the long-run player's action; hence, the myopic player learns from both the private monitoring channel and the public price. Second, the players' flow payoffs depend directly on $L$, interpreted as the action taken by the market maker: the myopic player's flow payoff is given by $\xi(\theta-L)\hat{a}-\frac{\hat{a}^2}{2}$, where $\xi\geq 0$, while the long-run player's flow payoff is $(\theta-L_t)a_t$; the inverse of the parameter $\xi$ is a measure of transaction costs for the myopic player. Finally, observe that the long-run player's flow payoff is linear in her action $a_t$ at all instants $t\in [0,T]$. 

Following the literature, we seek an equilibrium in which the informed trader reveals her private information gradually over time through a linear strategy of the form \eqref{eq:P1StrategyGeneral}. Hence, we require that the coefficients of the insider's strategy be $C^1$ functions over strict compact subsets of $[0,T)$; we can then apply Lemmas \ref{lem:BeliefDecomp} and \ref{lem:UniqueLearningSol} to such sets.\footnote{The $C^1$ requirement suffices for the total order to be ``inverted" from the price for $t<T$  (hence, it is without loss to make $X$ the source of public information), while the open interval allows for the possibility of full revelation of information by time $T$.  The proof of Lemma \ref{lem:BeliefDecomp} derives learning ODEs for an additive drift in $X$, and it is easy to see that the steps of Lemma \ref{lem:UniqueLearningSol} (with $\hat{u}_{\hat a\theta}=\xi, \hat{u}_{\hat a a}=0$) go through for this case.}

Clearly, when $\xi=0$ (or $\sigma_Y=\infty$), the model reduces to  \cite{kyle1985continuous}, and hence an LME with trading strategy of the form $\beta_3(\theta-L)$ always exists. This is not the case when $\xi>0$.

\begin{proposition}\label{prop:InsiderTradingNonexistence}
Fix $\xi>0$. For all $\sigma_Y>0$, there does not exist a linear Markov equilibrium.
\end{proposition}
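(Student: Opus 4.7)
The plan is to argue by contradiction: assume an LME with $C^{1}$ coefficients exists, and then leverage the linearity of the insider's flow payoff in $a$ to extract a ``no-arbitrage'' first-order condition that, combined with the equilibrium ODEs, cannot be jointly satisfied when $\xi>0$.

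As a first step, I would pin down the myopic trader's best response. Pointwise maximization of $\xi(\theta-L)\hat a-\hat a^{2}/2$ yields $\hat a_t=\xi(\hat M_t-L_t)$, hence $(\delta_{0t},\delta_{1t},\delta_{2t})=(0,\xi,-\xi)$. Next I would invoke Lemmas \ref{lem:BeliefDecomp}--\ref{lem:M and L laws of motion}, with the modifications needed for this model: the drift of $X$ now contains $a_t$, which both accelerates the myopic player's learning (an extra $1/\sigma_X^{2}$ channel appears in $\dot\gamma_t$, giving $\partial_{a'}\mu_M=\gamma_t\alpha_{3t}(1/\sigma_Y^{2}+1/\sigma_X^{2})$) and, from the insider's perspective, introduces a ``momentum'' component in the drift of $L$ of the form $\Lambda_t\,\xi(M_t-L_t)$, where $\Lambda_t$ is the Kyle pricing intensity obtained by projecting $\theta$ on the market maker's innovation in $X$. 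The representation $M_t=\chi_t\theta+(1-\chi_t)L_t$ carries over, and the laws of motion for $(M,L)$ remain affine in any admissible deviation $a'_t$ by the insider.

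Because the flow payoff $(\theta-L)a'$ is linear in $a'$ and both $\mu_M$ and $\mu_L$ are affine in $a'$, the HJB Hamiltonian for a candidate quadratic value function $V(\theta,M,L,t)$ is linear in the control. Finiteness of the supremum then forces the coefficient on $a'$ to vanish identically in $(\theta,M,L)$, yielding
\[
(\theta-L)+(\partial_{a'}\mu_M)\,V_M+(\partial_{a'}\mu_L)\,V_L=0.
\]
Matching coefficients in $\theta$, $M$, $L$, and the constant term gives four static ``no-arbitrage'' constraints that pin down the cross partials of $V$ in terms of $(\gamma_t,\chi_t,\alpha_{3t},\Lambda_t)$. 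Plugging these back into the HJB produces Riccati-type ODEs for the remaining value coefficients, which together with the modified learning ODEs for $(\gamma,\chi)$ and the equilibrium consistency conditions for $\vec\beta$ form the full BVP one would have to solve.

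The main obstacle is showing this BVP has no $C^{1}$ solution on any interval $[0,T)$. I would pursue it by reducing the system to a scalar equation in the insider's signaling intensity $\alpha_{3t}$ (or in the product $\alpha_{3t}\Lambda_t$ that plays the role of Kyle's constant) and showing that the momentum contribution $\xi(M-L)$ forces this scalar to diverge instantaneously. The economic intuition is clean: the myopic player's trading generates a \emph{predictable} drift in $L$ from the insider's perspective whenever $M\neq L$; since flow profits accumulate linearly in $a$, arbitraging this drift in continuous time requires unboundedly aggressive trading. Formally, one expects $\alpha_{3t}\to\infty$ at $t=0$, and via the modified learning ODE $\dot\gamma_t=-\gamma_t^{2}\alpha_{3t}^{2}(1/\sigma_Y^{2}+1/\sigma_X^{2})$ this would force $\gamma_{0+}=0$, contradicting both $\gamma_0=\gamma^o>0$ and the $C^{1}$ regularity required of a linear Markov equilibrium.
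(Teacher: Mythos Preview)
Your setup through the indifference identity is correct and tracks the paper: $\hat a_t=\xi(\hat M_t-L_t)$, the modified state dynamics with $\nu=1$, and the observation that the linear flow payoff makes the HJB Hamiltonian linear in $a'$, so a finite optimum forces
\[
(\theta-L)+(\partial_{a'}\mu_M)\,V_M+(\partial_{a'}\mu_L)\,V_L\equiv 0
\]
identically in $(\theta,m,\ell)$. The paper reaches the same point.

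The divergence is in how the contradiction is closed, and here your route has a structural gap. You propose to assemble ``the full BVP one would have to solve'' and then argue for non-existence via blow-up of $\alpha_{3t}$ at $t=0$. But there is no well-posed BVP: because the insider's flow is linear in $a$, no static Nash equilibrium exists at $T$, and hence there is no terminal condition for $\vec\beta$ to shoot toward. The paper makes exactly this observation in its footnote and for that reason does \emph{not} cast the argument as a BVP failure. Instead it works directly from the indifference identities together with the remaining HJB envelope equations and shows they are mutually inconsistent on any strict compact subset of $[0,T)$ once $\xi>0$---no boundary data are invoked at all.

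Your final paragraph names the right economic force (the predictable momentum $\xi(m-\ell)$ in the drift of $L$ makes the insider race the price she herself sets in motion), but ``one expects $\alpha_{3t}\to\infty$'' is intuition, not a proof. Without a terminal anchor, a blow-up argument would first have to rule out bounded $C^1$ solutions on arbitrary open subintervals, which is more work than the paper's direct contradiction from the indifference conditions alone. The ingredients you have are the right ones; the cleaner closure is to match coefficients in the indifference identity and in the $\ell$- and $m$-envelope equations simultaneously and exhibit an algebraic inconsistency, rather than to build a BVP and then dismantle it.
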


With linear Markov strategies, the myopic player acquires private information about $\theta$ over time. Thus, the myopic player's own  repeated trades carry further information to the market maker, beyond that which the market maker learns from the insider alone. This introduces momentum into the price from the insider's perspective, measured by a term $\xi(m-l)$ in the drift of $L$. Future trades then become less attractive to the insider, thereby placing the insider in a race against herself that results in all her information being traded away in the first instant, regardless of the amount of noise in the private signal $Y$.\footnote{The linearity of the setting prevents a Nash equilibrium from being defined at $T$. Thus, our argument does not stem from a problem with a BVP but rather an impossibility of indifference for the long-run player.}

In a closely related result, \cite{yang2019back} show that a linear equilibrium can cease to exist in a two-period setting where a trader who only participates in the last round receives a sufficiently precise signal of an informed player's first-period trade; a mixed-strategy equilibrium emerges instead. More generally, the existence problem relates to how, with common information, an informed player's rush to trade depends on the number of trading opportunities. The analysis of \cite{foster1994strategic} is illuminating in this respect: in a setting with nested information structures, a better informed insider trades a commonly known piece of information first, exploiting her superior information only later. While there are important differences between our setups (in their model, the belief of the less informed player is always known to the more informed player, and the common source of information is exogenous) there is a unifying theme: once common information is created, there is pressure to trade quickly on it. Such pressure increases with the number of rounds ahead.\footnote{For a similar result in a symmetric setting, see  \cite{back2000imperfect}. Both the presence of a myopic counterpart and a quadratic trading cost for this player only strengthen our nonexistence result.}

\vspace{-0.1in}
\section{Conclusion}

We have examined an important departure from the vast literature on signaling games: the case of a sender who does not see the signals emanating from her actions. A complex ``beliefs about beliefs'' problem arises in this case and leads to a novel separation effect via a second-order belief channel. Our contributions---namely, constructing belief-dependent equilibria and quantifying the impact of this natural separation effect on outcomes, along with the necessary new methodologies introduced---are at the frontier of what is known in these settings.

We conclude with a discussion of our modeling and expositional choices. First, the signal structure that we employ is important in that it enables us to ``close'' the set of states at the second order. If instead the long-run player had a stochastic type, more states would be needed at the very least; and if both players had access to imperfect private signals, beliefs of even higher order would be payoff-relevant. While these are interesting exercises, a natural question is whether behavior truly relies on such considerably more complex strategies.

Second, a model with a forward-looking receiver is a tractable extension that requires no major conceptual changes. In fact, most of the results are derived for, or can be generalized to, continuous coefficients in the myopic player's strategy. Those coefficients would then satisfy ODEs capturing optimal dynamic behavior, but crucially (i) no additional states are needed, and (ii) the fixed-point argument is applicable to an enlarged boundary value problem. Such an extension, however, only brings an old known force to the analysis: since $X$ is public, a forward-looking receiver would exhibit a traditional signal-jamming motive.

Our choice of applications stems from their proximity to our informational assumptions, but others are also plausible: a deception game for business or military strategy arises in an asymmetric version of our coordination game in which the leader enjoys miscoordination; a leadership model to study encouragement effects is obtained when complementarities between the state of the world and aggregate effort are allowed; and trading models with quadratic trading costs that restore existence can shed light on an informed trader's behavior.

Finally, while stylized, the linear-quadratic-Gaussian class is of great value. First, it uncovers effects that are likely to be key in other, more nonlinear, settings: the history-inference effect coupled with the time effects arising from learning seem to exhaust the forces present when behavior depends on the payoff-relevant aspects of the histories. Second, it permits the development of methods that are exportable to other settings: the fixed-point method for BVPs, by handling multidimensional shooting, has the power to be taken to asymmetric oligopolies with private information, or to reputation models with multidimensional types.

%%%%%%%%%%%%%%%%%%%%%%%                          %%%%%%%%%%%%%%%%%%%%%%%
%%%%%%%%%%%%%%%%%%%%%                                       %%%%%%%%%%%%%%%%%%%%%
%%%%%%%%%%%%%%%%%%%                                                    %%%%%%%%%%%%%%%%%%%
%%%%%%%%%%%%%%%%%               						                            %%%%%%%%%%%%%%%%% 
%%%%%%%%%%%%%%%										                                       %%%%%%%%%%%%%%%
%%%%%%%%%%%%%											      								        %%%%%%%%%%%%%
%%%%%%%%%%%                                           APPENDIX					    				      %%%%%%%%%%%   
%%%%%%%%%														   											             %%%%%%%%%
%%%%%%%																											                   %%%%%%%
%%%%%																	     												                     %%%%%
%%%																   	  		                                                                             %%%
%																				    																		             %

\vspace{-0.1in}

\begin{appendices}
\numberwithin{equation}{section}
\numberwithin{lemma}{section}
\numberwithin{proposition}{section}
\numberwithin{theorem}{section}

\section{Proofs for Section \ref{sec:leading_by_example}\label{appendix_A}}

%\newcommand\appendixlabel{A}

%\newtheorem{proposition}{Proposition}
%\newtheorem{lemma}{Lemma}
%\renewcommand\theproposition{\appendixlabel.\arabic{proposition}}
%\renewcommand\thelemma{\appendixlabel.\arabic{lemma}}

%\setcounter{equation}{0}
%\setcounter{lemma}{0}
%\setcounter{proposition}{0}

%\addtocounter{section}{1}

%\subsection{Preliminary Results}

\paragraph{Preliminary results.} We state standard results on ODEs \citep{teschl2012ordinary} which we use in the proofs that follow. Let $f(t,x)$ be continuous from $[0,T]\times \mathbb{R}^n$ to $\mathbb{R}^n$, where $T>0$. 

- \underline{\emph{Peano's} Theorem (Theorem 2.19, p. 56)}: There exists $T'\in (0,T)$, such that there is at least one solution to the IVP $\dot{x}=f(t,x),\; x(0)=x_0$ over $t\in [0,T')$.

If, moreover, $f$ is locally Lipschitz continuous in $x$, uniformly in $t$, then: 

- \underline{\emph{Picard-Lindel{\"o}f} Theorem (Theorem 2.2, p. 38)}: For $(t_0,x_0)\in [0,T)\times \mathbb{R}^n$, there is an open interval $I$ over which the IVP $\dot{x}=f(t,x),\; x(t_0)=x_0$ admits a unique solution.

- \underline{\emph{Comparison} theorem (Theorem 1.3, p. 27)}: If $x(\cdot), y(\cdot)$ are differentiable, $x(t_0)\leq y(t_0)$ for some $t_0\in [0,T)$, and $\dot{x}_t-f(t,x(t))\leq \dot{y}_t-f(t,y(t))$ $\forall t\in [t_0,T)$, then $x(t)\leq y(t)$  $\forall t\in [t_0,T)$. If, moreover, $x(t)<y(t)$ for some $t\in [t_0,T)$, then $x(s)<y(s)$ $\forall s\in [t,T)$.

\vspace{-0.1in}
\subsection{Proofs for Public Case}
\begin{proof}[Proof of Proposition \ref{prop:Leading_Public_LME}.]
We aim to characterize an LME in which the leader backs out the follower's belief from his action at all times, with strategies of the form $
a_t=\beta_{0t}+\beta_{1t} \hat M_t +\beta_{3t}\theta$ and $\hat a_{t} =\hat{\mathbb{E}}_t[a_t]= \beta_{0t}+(\beta_{1t}+\beta_{3t})\hat M_t$, 
where $\hat M_t:=\hat{\mathbb{E}}_t[\theta]$, and $\beta_{it}$, $i=0,1,3$, are deterministic, satisfying $\beta_{1t}+\beta_{3t}\neq 0$, $t\in [0,T]$. From standard results in filtering theory, if the follower expects $(a_t)_{t\geq 0}$ as above, then whenever he is on path\footnote{If instead of $\sigma_X=0$, $Y$ is public, this holds also after deviations by the myopic player; see footnote \ref{footnote:Public_Nash}.} his beliefs  are $\theta\sim \mathcal{N}(\hat{M}_t,\gamma_t)$, where
%(even if the long-run player is not)
\begin{eqnarray}\label{eq:leading_belief_public}
d\hat M_t=\frac{\beta_{3t}\gamma_t}{\sigma_Y^2}[dY_t-\underbrace{\{\beta_{0t}+(\beta_{1t}+\beta_{3t})\hat M_t\}}_{\hat{\mathbb{E}}_t [a_t]=}dt],\; \hat M_0=\mu \; \text{ and }\; \dot\gamma_t=-\left(\frac{\gamma_t\beta_{3t}}{\sigma_Y^2}\right)^2,\; \gamma_0=\gamma^o.\;\;
\end{eqnarray}

Let $V:\mathbb{R}^2\times [0,T]\to \mathbb{R}$ denote the leader's value function. The HJB equation is $r V =\sup\limits_{a\in\mathbb{R}} \{  -(a-\theta)^2-(a-\hat{a}_t)^2+\frac{\beta_{3t}\gamma_t}{\sigma_Y^2} [a-\beta_{0t}-(\beta_{1t}+\beta_{3t})m] V_m +\frac{\beta_{3t}^2\gamma_t^2}{2\sigma_Y^2}V_{mm}+V_t\}$. We guess a quadratic solution $V(\theta,m,t)=v_{0t}+v_{1t}\theta+v_{2t}m+v_{3t}\theta^2+v_{4t}m^2+v_{5t}\theta m$, from which the FOC in the HJB reads $0=-2(\beta_{0t}+\beta_{1t}m+\beta_{3t}\theta-\theta)-2\beta_{3t}(\theta-m)+\frac{\beta_{3t}\gamma_t[v_{2t}+2m v_{4t}+\theta v_{5t}]}{\sigma_Y^2}$ when the maximizer is $a^*:=\beta_{0t}+\beta_{1t}m+\beta_{3t}\theta$.

From here, $(v_{2t},v_{4t},v_{5t})=\left(\frac{2\sigma_Y^2 \beta_{0t}}{\beta_{3t}\gamma_t},\frac{\sigma_Y^2(\beta_{1t}-\beta_{3t})}{\beta_{3t}\gamma_t},\frac{2\sigma_Y^2(2\beta_{3t}-1)}{\beta_{3t}\gamma_t}\right)$, 
due to the FOC holding for all $(\theta,m,t)\in \mathbb{R}^2\times[0,T]$. And since $v_{iT}=0$ for $i\in \{0,\dots,5\}$, we deduce that $(\beta_{0T},\beta_{1T},\beta_{3T})=(0,1/2,1/2)$, the \emph{myopic equilibrium coefficients}.

Inserting $a^*$ into the HJB equation, and using the previous expressions for $(v_{2t}, v_{4t}, v_{5t})$ to replace $(v_{2t}, v_{4t}, v_{5t}, \dot{v}_{2t}, \dot{v}_{4t}, \dot{v}_{5t})$, yields an equation in $\vec{\beta}:=(\beta_0,\beta_1,\beta_3)$ and $\dot{\vec{\beta}}$. Grouping by coefficients ($\theta, m, \theta^2$,..., etc.) in the latter, we obtain a system of ODEs for $(v_0,v_1,v_3,\beta_0,\beta_1,\beta_3)$: $\dot{v}_{0t}=r v_{0t}+\beta_{3t}\gamma_t(\beta_{3t}-\beta_{1t})$, $\dot{v}_{1t}=r v_{1t}-2\beta_{0t}\beta_{3t}$, and $\dot{v}_{3t}=1+r v_{3t}-2\beta_{3t}^2$ along with
\begin{eqnarray}
\label{eq:leadexamplebetavecpublicFWD}
(\dot{\beta}_{0t},\dot{\beta}_{1t},\dot{\beta}_{3t})=\left(2r \beta_{0t}\beta_{3t},\beta_{3t}\left[r(2\beta_{1t}-1)+\frac{\beta_{1t}\beta_{3t}\gamma_t}{\sigma_Y^2}\right],\beta_{3t}\left[r(2\beta_{3t}-1)-\frac{\beta_{1t}\beta_{3t}\gamma_t}{\sigma_Y^2}\right]\right),
\end{eqnarray}
with conditions $(v_{0T},v_{1T},v_{3T},\beta_{0T},\beta_{1T},\beta_{3T})=(0,0,0,0,1/2,1/2)$. Critically, observe that solving the subsystem $(\beta_0,\beta_1,\beta_3,\gamma)$ delivers the remaining $v_i$, as their ODEs are uncoupled from one another and linear in themselves. The existence of a LME then reduces to the BVP defined by the $\gamma$-ODE \eqref{eq:leading_belief_public} and \eqref{eq:leadexamplebetavecpublicFWD} above, with $\gamma_0=\gamma^o$ and $(\beta_{0T},\beta_{1T},\beta_{3T})=(0,1/2,1/2)$.

To show existence, we transform this BVP into a backward (i.e., reversing the direction of time) IVP problem, using a parametrized initial value for $\gamma$. Abusing notation,
\begin{eqnarray}
(\dot{\beta}_{0t},\dot{\beta}_{1t},\dot{\beta}_{3t},\dot{\gamma}_t)=\beta_{3t}\times\left(-2r \beta_{0t},r(1-2\beta_{1t})-\frac{\beta_{1t}\beta_{3t}\gamma_t}{\sigma_Y^2},r(1-2\beta_{3t})+\frac{\beta_{1t}\beta_{3t}\gamma_t}{\sigma_Y},\frac{\beta_{3t}\gamma_t^2}{\sigma_Y^2}\right)\label{eq:leadexampleODEspublicBWD},\;\;\;
\end{eqnarray}
with initial conditions $\beta_{00}=0$, $\beta_{10}=\beta_{30}=\frac{1}{2}$ and $\gamma_0=\gamma^F\geq 0$. Define $B^{\textit{Pub}}_t:=\beta_{1t}+\beta_{3t}$.

\begin{lemma}\label{lem:LeadingPublicIVP}
Fix any $\gamma^F\geq 0$. If a solution to the backward system exists over $[0,T]$, then any such solution must have the following properties. If $\gamma^F>0$, then (i) $B^{\textit{Pub}}_t=1$ for all $t\in [0,T]$, (ii) $\beta_{3t}\in (1/2,1)$ and $\beta_{1t}\in (0,1/2)$ for all $t\in (0,T]$, (iii) $\beta_3$ is monotonically increasing while $\beta_1$ is monotonically decreasing, and (iv) $\gamma$ is strictly increasing. If $\gamma^F=0$, then $\beta_{1t}=\beta_{3t}=\frac{1}{2}$ and $\gamma_t=0$ for all $t\in [0,T]$. For any $\gamma^F\geq 0$, $\beta_{0}\equiv 0$. 
\end{lemma}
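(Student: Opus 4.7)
The plan is to work directly with the backward IVP \eqref{eq:leadexampleODEspublicBWD}, using three tools in sequence: linearity/uniqueness for the trivial components, a conservation law to reduce dimension, and a phase-plane invariant to get monotonicity. The strategy assumes only that a solution exists on $[0,T]$ (as the lemma hypothesizes), since Picard-Lindel\"of uniqueness applies to the locally Lipschitz (polynomial) right-hand side.

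First, the equation $\dot{\beta}_{0t} = -2r\beta_{0t}\beta_{3t}$ is linear in $\beta_{0}$ with $\beta_{00}=0$, so uniqueness forces $\beta_{0}\equiv 0$; this disposes of the last claim for any $\gamma^F\geq 0$. Next, I would compute $\dot{B}_t := \dot\beta_{1t}+\dot\beta_{3t}$; the $\pm\beta_{1t}\beta_{3t}\gamma_t/\sigma_Y^2$ terms cancel, yielding
\[
\dot{B}_t = 2r\beta_{3t}(1-B_t), \qquad B_0 = 1.
\]
Since $B\equiv 1$ solves this ODE, uniqueness delivers (i). From here I substitute $\beta_1 = 1-\beta_3$, reducing everything to the two-dimensional system in $(\beta_3,\gamma)$.

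For $\gamma^F>0$: claim (iv) is immediate once positivity of $\gamma$ is established, because $\dot\gamma_t = \beta_{3t}^2\gamma_t^2/\sigma_Y^2$; the unique solution through $\gamma=0$ is $\gamma\equiv 0$, so $\gamma_t>0$ throughout and $\dot\gamma_t>0$. The bounds in (ii) follow from barrier arguments on the reduced $\beta_3$-ODE
\[
\dot\beta_{3t} = \beta_{3t}\!\left[\,r(1-2\beta_{3t}) + \frac{(1-\beta_{3t})\beta_{3t}\gamma_t}{\sigma_Y^2}\right].
\]
The line $\beta_3=1$ is invariant for the reduced system (the bracket becomes $-r\leq 0$ and $\beta_3\equiv 1$ is a solution regardless of $\gamma$), so uniqueness prevents a trajectory from $\beta_{30}=1/2$ reaching it; the line $\beta_3=1/2$ cannot be re-crossed because $\dot\beta_{3}|_{\beta_3=1/2}=\gamma/(8\sigma_Y^2)>0$ whenever $\gamma>0$, while the initial push at $t=0$ is upward.

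The main obstacle is the monotonicity claim (iii) in the discounted case $r>0$, since the two terms in the bracket compete in sign for $\beta_3>1/2$. My plan is to introduce the auxiliary function
\[
\phi(t) := r(1-2\beta_{3t}) + \frac{(1-\beta_{3t})\beta_{3t}\gamma_t}{\sigma_Y^2},
\]
so that $\dot\beta_{3t} = \beta_{3t}\phi(t)$, and prove $\phi>0$ on $[0,T]$. Note $\phi(0) = \gamma^F/(4\sigma_Y^2)>0$. If $\phi$ first vanishes at some $t^*>0$, continuity forces $\dot\phi(t^*)\leq 0$; but at $t^*$ one has $\dot\beta_3=0$, and a direct differentiation reduces $\dot\phi(t^*)$ to $(1-\beta_3)\beta_3\dot\gamma/\sigma_Y^2$ evaluated at $t^*$, which is strictly positive by the already-established bounds $\beta_3\in(1/2,1)$ and $\dot\gamma>0$. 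The contradiction yields $\phi>0$ throughout, hence $\dot\beta_3>0$ and $\dot\beta_1 = -\dot\beta_3<0$. Finally, the degenerate case $\gamma^F=0$ is handled by noting that $\gamma\equiv 0$ is the unique solution of the $\gamma$-ODE through the origin, which collapses the $\beta_3$-ODE to $\dot\beta_3 = \beta_3 r(1-2\beta_3)$, for which $\beta_3\equiv 1/2$ is the unique solution matching $\beta_{30}=1/2$.
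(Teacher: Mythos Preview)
Your approach mirrors the paper's almost step for step: the same conservation law $\dot B_t = 2r\beta_{3t}(1-B_t)$, the same reduction to the $(\beta_3,\gamma)$ subsystem, and the same auxiliary function for monotonicity (the paper calls your $\phi$ by $H$ and obtains the identical expression $\dot H_{t^*}=(1-\beta_3)\beta_3^3\gamma^2/\sigma_Y^4$ at a putative zero). There is, however, one genuine error in your write-up.

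Your argument for the upper barrier $\beta_3<1$ asserts that ``$\beta_3\equiv 1$ is a solution regardless of $\gamma$''. This is false whenever $r>0$: at $\beta_3=1$ the reduced ODE reads $\dot\beta_3 = 1\cdot\bigl[r(1-2)+0\bigr]=-r\neq 0$, so the constant function $\beta_3\equiv 1$ does \emph{not} solve the equation, and the uniqueness-of-trajectories argument you invoke does not apply. The repair is immediate and is exactly what the paper does: apply the comparison theorem with the supersolution $x_t\equiv 1$, noting $\dot x_t - f^{\beta_3}(t,1) = 0-(-r)=r\ge 0$ together with $x_0=1>\tfrac12=\beta_{30}$. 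Equivalently, a direct barrier argument works: if $\beta_3$ first reached $1$ from below one would need $\dot\beta_3\ge 0$ there, contradicting $\dot\beta_3=-r<0$.

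A smaller organizational point: your claim that (iv) is ``immediate once positivity of $\gamma$ is established'' overlooks that $\dot\gamma = \beta_3^{2}\gamma^{2}/\sigma_Y^{2}$ also needs $\beta_3\neq 0$ for \emph{strict} increase. The paper handles this by first proving $\beta_3>0$ via comparison with the zero function; in your ordering you can simply defer the strict-increase conclusion until after your barrier argument has delivered $\beta_3>\tfrac12$.
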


\begin{proof}[Proof of Lemma \ref{lem:LeadingPublicIVP}.]
Because the system \eqref{eq:leadexampleODEspublicBWD} is $C^1$, the solution is unique when it exists. If $\gamma^F=0$, it is clear by inspection that $(\beta_0,\beta_1,\beta_3,\gamma)= (0,1/2,1/2,0)$ (uniquely) solves the IVP, so assume hereafter that $\gamma^F>0$. We first claim that $\beta_{3}>0$. Indeed, let $f^{\beta_{3}}(t,\beta_{3t})$ denote the RHS of the $\beta_3$-ODE in \eqref{eq:leadexampleODEspublicBWD}. Letting $x_t:=0$ for all $t\in[0,T]$, we have $\beta_{30}=1/2>x_0$ and $\dot{\beta}_{3t}-f^{\beta_3}(t,\beta_{3t})=0=\dot{x}_t-f^{\beta_3}(t,x_t)$; by the comparison theorem, the claim follows. Now, add the ODEs that $\beta_1$ and $\beta_3$ satisfy to get $\dot{B}^{\textit{Pub}}_t=2r\beta_{3t}(1-B^{\textit{Pub}}_t)$  with $B^{\textit{Pub}}_0=1$; because the RHS is of class $C^1$, it has a unique solution, which is clearly $B^{\textit{Pub}}=1$. Hence, $\beta_1+\beta_3=1$ and $\dot{\beta}_{3t}=\beta_{3t}\left[r(1-2\beta_{3t})+\frac{\beta_{3t}(1-\beta_{3t})\gamma_t}{\sigma_Y^2}\right]$, and we maintain the label $f^{\beta_3}(t,\beta_{3t})$ for its RHS. Defining $x_t:=1$ for all $t\in[0,T]$, then, $x_0=1>\beta_{30}=\frac{1}{2}$, and $\dot{\beta}_{3t}-f^{\beta_3}(t,\beta_{3t})=0\leq r=\dot{x}_t-f^{\beta_3}(t,x_t)$; thus, $\beta_3<1$ and $\beta_1=1-\beta_3>0$.

Since $\beta_3>0$,  $\gamma$ is clearly strictly increasing, and hence $\gamma_t>0$ for all $t\in[0,T]$. Now,  $\dot{\beta}_{3t}=\frac{1}{2}\left[0+\frac{\gamma_t}{4\sigma_Y^2}\right]>0$ whenever $\beta_{3t}=\frac{1}{2}$, and thus $\beta_{3t}>1/2$ and $\beta_{1t}<1/2$ for all $t\in (0,T]$. 

We now turn to (iii). Since $\dot{\beta}_{1t}+\dot{\beta}_{3t}=0$, we just show that $\dot{\beta}_{3}>0$; in turn, it suffices to show that $H_t:=\dot{\beta}_{3t}/\beta_{3t}=r(1-2\beta_{3t})+\frac{\beta_{3t}(1-\beta_{3t})\gamma_t}{\sigma_Y^2}>0$ for all $t\in [0,T]$. Observe that $H_0=\frac{\gamma_0}{4\sigma_Y^2}>0$, and with algebra it can be shown that if $H_t=0$, $\dot{H}_t=\frac{(1-\beta_{3t})\beta_{3t}^3\gamma_t^2}{\sigma_Y^4}>0$. It follows that $H>0$ as desired. Finally, note that in all cases, we have $\beta_3>0$, so from  \eqref{eq:leadexampleODEspublicBWD} $\beta_0\equiv 0$. Also, as long as $\gamma^F>0$, $\gamma>0$, so $(v_2,v_4,v_5)$ are well defined.\end{proof}

It remains to show that there is a $\gamma^F>0$ such that $\gamma_T=\gamma^o$ in the backward system while all the other ODEs admit solutions. As we argue in the proof of Theorem \ref{thm:LeadingInteriorExistence}, it suffices to show that the solutions are uniformly bounded when $\gamma_t\in [0,\gamma^o]$ for $t\in [0,T]$---refer to that proof for the details of the argument. Applied to this context, the bounds $\beta_0,\beta_1,\beta_3\in [0,1]$ from Lemma \ref{lem:LeadingPublicIVP} are valid more generally as long as $\gamma$ does not explode, so there is indeed a solution to the BVP, and hence a LME exists. To conclude, part (ii) in the proposition is implied by Lemma \ref{lem:LeadingPublicIVP}, while the uniqueness property is shown in the online appendix.\end{proof}

\vspace{-.2in}
\subsection{Proofs for No-Feedback Case}

\begin{lemma}[Belief Representation]\label{lem:Belief_Rep_NF}
Suppose that the follower expects $a_t=[\beta_{0t}+\beta_{1t}(1-\chi_t)]\mu+\alpha_t\theta$, where $\alpha=\beta_3+\beta_1\chi$, $\chi=1-\gamma/\gamma^o$, and $\gamma_t:=\hat{\mathbb{E}}_t[(\theta-\hat M_t)^2]$. Then $\dot\gamma_t = -\left(\frac{\gamma_t\alpha_t}{\sigma_Y^2}\right)^2$. Moreover, if the leader follows \eqref{eq:leading_strat_NF}, $M_t =\chi_t\theta + (1-\chi_t)\mu$ holds at all times.
\end{lemma}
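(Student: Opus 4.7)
I would split the argument into two parts matching the two assertions of the lemma: a Kalman–Bucy filtering step that gives the Riccati equation for $\gamma_t$, and a verification step that the candidate $M_t=\chi_t\theta+(1-\chi_t)\mu$ solves the leader's forecasting ODE with the right initial condition.

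Filtering step. Under the stated conjecture, $dY_t=\{[\beta_{0t}+\beta_{1t}(1-\chi_t)]\mu+\alpha_t\theta\}dt+\sigma_Y dZ^Y_t$ is a linear observation of $\theta\sim\mathcal{N}(\mu,\gamma^o)$ with deterministic coefficients from the follower's viewpoint. This is a textbook linear-Gaussian filtering problem: Kalman–Bucy yields a Gaussian posterior whose variance satisfies the Riccati ODE $\dot\gamma_t=-\alpha_t^2\gamma_t^2/\sigma_Y^2$ with $\gamma_0=\gamma^o$, and whose mean obeys
\begin{equation*}
d\hat M_t=\frac{\alpha_t\gamma_t}{\sigma_Y^2}\Big(dY_t-\{[\beta_{0t}+\beta_{1t}(1-\chi_t)]\mu+\alpha_t\hat M_t\}dt\Big).
\end{equation*}
Differentiating the identity $\chi_t=1-\gamma_t/\gamma^o$ gives $\dot\chi_t=\alpha_t^2\gamma_t(1-\chi_t)/\sigma_Y^2$ with $\chi_0=0$, which is exactly the $\sigma_X\to\infty$ specialization of \eqref{eq:chidot}; Lemma \ref{lem:UniqueLearningSol} in that limit guarantees a unique $(\gamma,\chi)$, so the filter is well-posed and the first claim holds.

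Representation step. When the leader follows \eqref{eq:leading_strat_NF}, she observes only $\theta$ (plus time), so $M_t$ is a deterministic function of $\theta$. Taking the leader's expectation in the Kalman update above, $\mathbb{E}_t[dY_t]=a_t\,dt=(\beta_{0t}\mu+\beta_{1t}M_t+\beta_{3t}\theta)dt$ and $\mathbb{E}_t[\hat M_t]=M_t$; the innovation's Brownian part is adapted to $\mathcal{F}^Y$, which is invisible to the leader, and hence projects to zero. Consequently $M_t$ satisfies the linear ODE
\begin{equation*}
\dot M_t=\frac{\alpha_t\gamma_t}{\sigma_Y^2}\Big\{(\beta_{0t}\mu+\beta_{1t}M_t+\beta_{3t}\theta)-\big([\beta_{0t}+\beta_{1t}(1-\chi_t)]\mu+\alpha_t M_t\big)\Big\},\qquad M_0=\mu,
\end{equation*}
which has a unique solution. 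Substituting the candidate $M_t=\chi_t\theta+(1-\chi_t)\mu$ and using $\alpha_t=\beta_{3t}+\beta_{1t}\chi_t$, the coefficient on $\theta$ inside the braces reduces to $\beta_{3t}+(\beta_{1t}-\alpha_t)\chi_t=\alpha_t(1-\chi_t)$ and the coefficient on $\mu$ to $-\alpha_t(1-\chi_t)$, so the right-hand side collapses to $\alpha_t^2\gamma_t(1-\chi_t)(\theta-\mu)/\sigma_Y^2=\dot\chi_t(\theta-\mu)$, which is exactly $\frac{d}{dt}\{\chi_t\theta+(1-\chi_t)\mu\}$. The initial condition is matched because $\chi_0=0$, and uniqueness of solutions delivers the representation for all $t\in[0,T]$.

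The only genuinely delicate point is the self-consistency of the conjecture: the follower's expected drift already contains $\chi_t$, while the $\chi$-ODE itself is \emph{derived} from the resulting filter. The argument resolves this circularity by observing that both conditions pin down the same function $\chi_t=1-\gamma_t/\gamma^o$, at which point the remaining computations are routine Gaussian-filter bookkeeping and a linear-ODE verification. Everything else is essentially algebra and so presents no additional obstacle.
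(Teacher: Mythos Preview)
Your proof is correct and follows essentially the same route as the paper: a Kalman--Bucy step yielding the Riccati equation for $\gamma$ and the innovations form of $d\hat M_t$, followed by taking the leader's expectation to obtain the dynamics of $M_t$ and verifying the candidate $M_t=\chi_t\theta+(1-\chi_t)\mu$. The only cosmetic difference is that the paper writes $\hat M_t$ and $M_t$ in integral form (with kernel $R(t,s)=\exp(-\int_s^t \alpha_u^2\gamma_u/\sigma_Y^2\,du)$) and then imposes the representation to \emph{derive} the $\chi$-ODE, whereas you work differentially and \emph{verify} the representation against the $M$-ODE using the $\chi$-ODE obtained from $\chi=1-\gamma/\gamma^o$; these are two directions of the same equivalence.
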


\begin{proof}[Proof of Lemma \ref{lem:Belief_Rep_NF}.] Anticipating $a_t=\alpha_{0t}\mu+\alpha_t \theta$, with $\alpha_0=[\beta_0+\beta_1(1-\chi)]\mu$ and $\alpha=\beta_3+\beta_1\chi$, the myopic player's belief is $\sim\mathcal{N}(M_t,\gamma_t)$ where $d\hat M_{t}=\frac{\alpha_{t}\gamma_{t}}{\sigma_Y^2}[dY_t-(\alpha_{0t}+\alpha_{t}\hat M_t)dt]$ and $\dot{\gamma_{t}}=-\frac{\gamma_{t}^2\alpha_{t}^2}{\sigma_Y^2}$. Thus, $\hat M_{t}= \mu R(t,0)+\int_0^t R(t,s)\frac{\alpha_{s}\gamma_{s}}{\sigma_Y^2}[(a_{s}-\alpha_{0s})ds+\sigma_Y dZ_s^Y]$ and $M_{t}= \mu R(t,0)+\int_0^t R(t,s)\frac{\alpha_{s}\gamma_{s}}{\sigma_Y^2}(a_{s}-\alpha_{0s})ds$ where $R(t,s)=\exp(-\int_s^t\frac{\alpha_{u}^2\gamma_{u}}{\sigma_Y^2}du)$. Solving for $M$ after inserting $a_{t}=\beta_{0t}\mu+\beta_{1t}M_{t}+\beta_{3t}\theta$, and imposing the representation, it is easy to conclude that \eqref{eq:leading_M_representation} will hold if and only if  $\dot\chi_t=\frac{\alpha_{t}^2\gamma_{t}}{\sigma_Y^2}(1-\chi_t)$. By arguments analogous to those used for Lemma \ref{lem:UniqueLearningSol}, the $(\gamma,\chi)$-ODE pair admits a unique solution, and it satisfies $\chi=1-\gamma/\gamma^o$. \end{proof}

\begin{proof}[Proof of Proposition \ref{prop:Leading_NF_LME}.] If the leader uses $a_t=\beta_{0t}\mu+\beta_{1t}M_t+\beta_{3t}\theta$ then, using the representation $M_t=\chi_t\theta+(1-\chi_t)\mu$, $\hat{a}_{t}=\hat{\mathbb{E}}_t[a_t]=\alpha_{0t}\mu+\alpha_t M_t$, where $\alpha_{0t}:=\beta_{0t}+\beta_{1t}[1-\chi_t$ and $\alpha_t=\beta_{1t}\chi_t+\beta_{3t}$. Taking an expectation in the leader's flow payoff $-(a_t-\theta)^2-(a_t-\hat{a}_t)^2$ then yields that $(\theta,M_t,t)$ is the relevant state on and off path. (Indeed, expanding the squares in the previous expression the only nontrivial component is $\mathbb{E}_t[\hat{a}_t^2]$, which makes $\mathbb{E}_t[\hat{M}_t^2]$ appear; however, $\mathbb{E}_t[\hat{M}_t^2]=M_t^2+\mathbb{E}_t[(\hat{M}_t-M_t)^2]=M_t^2+\gamma_{t}\chi_t$ after all private histories.\footnote{From the proof of Lemma \ref{lem:Belief_Rep_NF}, $\mathbb{E}_t[(\hat{M}_t-M_t)^2]=\mathbb{E}_t[(\int_0^t R(t,s)\frac{\alpha_{s}\gamma_{s}}{\sigma_Y} dZ_s^Y)^2]=\int_0^t R(t,s)^2\frac{\alpha_{s}^2\gamma_{s}^2}{\sigma_Y^2} ds=\int_0^t \exp(2\int_s^t \frac{\dot \gamma_u}{\gamma_u}du)(-\dot{\gamma}_s) ds=\int_0^t \left(\gamma_t/\gamma_s\right)^2(-\dot{\gamma}_s)ds=\gamma_t^2(1/\gamma_t-1/\gamma^o)=\gamma_t\chi_t$.}) 

We can then set up the HJB equation. Since $dM_t=\frac{\alpha_t\gamma_t}{\sigma_Y^2}\left(a-\alpha_{0t}-\alpha_t m\right)dt$ from the proof of Lemma \ref{lem:Belief_Rep_NF}$, 
r V =\sup\limits_{a\in\mathbb{R}} \{  -(a-\theta)^2-\left(a^2-2a [\alpha_{0t}+\alpha_{t}m]+\alpha_{0t}^2+2\alpha_{0t}\alpha_{t}m+\alpha_{t}^2[m^2+\gamma_t\chi_t]\right)+V_t+\frac{\alpha_t\gamma_t}{\sigma_Y^2}\left(a-\alpha_{0t}-\alpha_t m\right) V_{m}\}$. We then guess $V(\theta,m,t)=v_{0t}+v_{1t}\theta+v_{2t}m+v_{3t}\theta^2+v_{4t}m^2+v_{5t}\theta m$ and take analogous steps to those in the proof of Proposition \ref{prop:Leading_Public_LME}. Namely, we first show that there is a core BVP consisting of $(\beta_0,\beta_1,\beta_3,\gamma)$. Second, we construct a backward IVP version of our original BVP that has a parametrized initial condition $\gamma^F$ for the $\gamma-$ODE:\footnote{The detailed steps can be found in the online appendix.}
\begin{align}
\dot\beta_{0t}&=\alpha_t(2\sigma_Y^2)^{-1}\times \left\lbrace -r\sigma_Y^2\beta_{0t}(2-\chi_t)+r\sigma_Y^2(1-\chi_t)-2\gamma_t\beta_{1t}^2(1-\chi_t)\right\rbrace \label{eq:leadeampleNFbeta0ODEBWD}\\
%,\; {\color{red}\beta_{00}=\mu\frac{1-\chi_0}{2(2-\chi_0)}}\notag\\
\dot \beta_{1t}&=\alpha_t(2\sigma_Y^2)^{-1}\times \left\lbrace r\sigma_Y^2-2\beta_{1t}[\beta_{3t}\gamma_t+r\sigma_Y^2(2-\chi_t)]+2\beta_{1t}^2\gamma_t(1-\chi_t)\right\rbrace \label{eq:leadeampleNFbeta1ODEBWD}\\
%,\; \beta_{10}=\frac{1}{2(2-\chi_0)}\notag\\
\dot\beta_{3t}&=\alpha_t(2\sigma_Y^2)^{-1}\times \left\lbrace r\sigma_Y^2(2-\chi_t)+2\beta_{3t}[\beta_{1t}\gamma_t-r\sigma_Y^2(2-\chi_t)]\right\rbrace \label{eq:leadeampleNFbeta3ODEBWD}\\
%,\;\beta_{30}=\frac{1}{2}\notag\\
\dot{\gamma}_t&=\alpha_t^2\gamma_t^2/\sigma_Y^2\label{eq:LeadingNFgammaODEBWD}
\end{align}
with initial condition $(\beta_{00},\beta_{10},\beta_{30},\gamma_0)= (\frac{1-\chi_0}{2(2-\chi_0)},\frac{1}{2(2-\chi_0)},\frac{1}{2},\gamma^F)$ and where $\chi=1-\gamma/\gamma^o$.

We aim to prove that there exists $\gamma^F\in (0,\gamma^o)$ such that the IVP has a (unique) solution which satisfies $\gamma_T=\gamma^o$. ($\gamma^F=0$ cannot work, as $(\beta_0,\beta_1,\beta_3,\gamma)=(0,1/2,1/2,0)$ is the unique solution.) As argued in the proof of Proposition \ref{prop:Leading_Public_LME}, it suffices to show that the system is uniformly bounded if $\gamma_t\in [0,\gamma^o]$ over $[0,T]$ (see the proof of Theorem \ref{thm:LeadingInteriorExistence} for further details).

The $\alpha$-ODE is $\dot\alpha_{t}=f^{\alpha}(t,\alpha_t):=r\alpha_t[1-\alpha_t(2-\chi_t)]$ and $\alpha_0=\frac{1}{2-\chi_0}>0$. By the comparison theorem, $\alpha>0$; hence, by the same argument as in the proof of Lemma \ref{lem:LeadingPublicIVP}, $\gamma$ is increasing (in the backward system), so $\chi=1-\gamma/\gamma^o<1$ is decreasing.
As $\alpha_0=\frac{1}{2-\chi_0}$ and $\dot \alpha_0> \frac{d}{dt}\left(\frac{1}{2-\chi_t}\right)|_{t=0}$, the comparison theorem can be applied to $\alpha$ and $1/(2-\chi)$ to show $\alpha_t\geq 1/(2-\chi_t)\geq 1/2$, with both inequalities strict for all $t\in (0,T]$, for all $r\geq 0$; in turn, $\dot{\alpha_t}\leq 0$ (and hence $\dot{\alpha}_t\geq 0$ in the forward system) for all $t\in [0,T]$, with strict inequality for $t\in (0,T]$ if and only if $r>0$. It follows that for all $t\in (0,T]$, $\alpha_t\leq \alpha_0=\frac{1}{2-\chi_0}<1$.

Now, $B^{\textit{NF}}:=\beta_0+\beta_1+\beta_3$ satisfies
$
\dot{B}^{\textit{NF}}_t=\frac{\alpha_t}{2\sigma_Y^2}\left\lbrace 2r\sigma_Y^2  (2-\chi_t) [1-B_t^{\textit{NF}}] \right\rbrace$ with $B_0^{\textit{NF}} = 1
$; thus $B^{\textit{NF}}\equiv1$. By routine application of the comparison theorem to the backward system, $\beta_3\in (1/2,1)$ and $\beta_1\in (0,1)$, from which $\beta_0$ is bounded too; thus, a solution to the BVP for $(\beta_0,\beta_1,\beta_3,\gamma)$ exists, as discussed in the proof of Proposition \ref{prop:Leading_Public_LME}. In the online appendix we check that the rest of the coefficients are well defined, ensuring the existence of an LME.

The final claim is $\alpha_T\to 1$ as $T\to \infty$ in the forward system. Indeed, since $\alpha>1/2$, we have $\gamma_T\to 0$ as $T\to \infty$; thus $\chi_T\to 1$ and $\alpha_T=1/(2-\chi_T)\to 1$, all in forward form.
\end{proof}

For the proofs of Propositions \ref{prop:Leading_Learning_Comp} and \ref{prop:Leading_Payoff_Comp}, see the online appendix.

\vspace{-0.2in}
\section{Proofs for Section \ref{sec:eqmanalysis}\label{appendix_B}}

\noindent\textbf{Proof of Lemma \ref{lem:BeliefDecomp}}.  We consider a drift of the form $\hat a_t+\nu a_t$, $\nu\in [0,1]$, in $X$. Also, let $L$ in \eqref{eq:BeliefDecomposition} denote a process that is measurable with respect to $X$. Inserting \eqref{eq:BeliefDecomposition} into \eqref{eq:P1StrategyGeneral} yields $a_t= \alpha_{0t}+\alpha_{2t}L_t+\alpha_{3t}\theta$ which the myopic player thinks drives $Y$, where $\alpha_{0t}=\beta_{0t}$, $\alpha_{2t}=\beta_{2t}+\beta_{1t}(1-\chi_{t})$, and $\alpha_{3t}=\beta_{3t}+\beta_{1t}\chi_{t}$ (the latter often abbreviated $\alpha_t$ in this appendix).

The myopic player's filtering problem is then conditionally Gaussian. Specifically, define 
\begin{eqnarray}
d\hat{X}_{t}&:=&dX_{t}-[\hat{a}_t+\nu(\alpha_{0t}+\alpha_{2t}L_t)]dt=\nu\alpha_{3t} \theta dt+\sigma_X dZ^X_{t}\notag\\
d\hat{Y}_{t}&:=&dY_{t}-[\alpha_{0t}+\alpha_{2t}L_t]dt=\alpha_{3t} \theta dt+\sigma_Y dZ_t^Y,\notag
\end{eqnarray}
which are in the myopic player's information set, and where the last equalities hold from his perspective. By Theorems 12.6 and 12.7 in \cite{liptser1977statistics}, his posterior belief is Gaussian with mean $\hat{M}_t$ and variance $\gamma_{1t}$ (simply $\gamma_t$ in the main body) that evolve as 
\begin{eqnarray}\label{eq:dhatM}
d\hat{M}_t=\frac{\nu\alpha_{3t}\gamma_{1t}}{\sigma_X^2}[d\hat{X}_{t}-\nu\alpha_{3t}\hat{M}_tdt]+\frac{\alpha_{3t}\gamma_{1t}}{\sigma_Y^2}[d\hat{Y}_{t}-\alpha_{3t}\hat{M}_tdt]\; \text{ and }\; \dot{\gamma_{1t}}=-\gamma_{1t}^2\alpha_{3t}^2\Sigma,
\end{eqnarray}
with $\Sigma:=\nu^2/\sigma_X^2+1/\sigma_Y^2$. (These expressions still hold after deviations, which go undetected.)

The long-run player can affect $\hat{M}_t$ via her choice of actions. Indeed, using that $d\hat X=\nu (a_t-\alpha_{0t}-\alpha_{2t}L_t)dt+\sigma_XdZ_t^X$ and $d\hat Y_t=(a_t-\alpha_{0t}-\alpha_{2t}L_t)dt+\sigma_Y dZ_t^Y$ from her standpoint,
\begin{eqnarray}\label{eq:Mhat}
d\hat{M}_t&=& (\kappa_{0t}+\kappa_{1t}a_t+\kappa_{2t}\hat{M}_t)dt+B_{t}^X dZ^X_{t}+ B_{t}^YdZ^Y_{t},\; \text{ where }\\
\kappa_{1t}= \alpha_{3t}\gamma_{1t}\Sigma,\; &\kappa_{0t}&= -\kappa_{1t}[\alpha_{0t}+\alpha_{2t}L_t],\; \kappa_{2t}= -\alpha_{3t}\kappa_{1t},\; B^X_{t}=\frac{\nu\alpha_{3t}\gamma_{1t}}{\sigma_X},\; B^Y_{t}=\frac{\alpha_{3t}\gamma_{1t}}{\sigma_Y}.\;\;\;\;\;\;\;\;\;\label{eq:mus_Mhat}
\end{eqnarray}

On the other hand, since the long-run player always thinks that the myopic player is on path, the public signal evolves, from her perspective, as
$dX_{t}=(\nu a_t + \delta_{0t}+ \delta_{1t}\hat{M}_tdt+\delta_{2t}L_t)dt+\sigma_X dZ^X_{t}$. Because the dynamics of $\hat M$ and $X$ have drifts that are affine in $\hat M$---with intercepts and slopes that are in the long-run player's information set---and deterministic volatilities, the pair $(\hat M, X)$ is conditionally Gaussian. Thus, by the filtering equations in Theorem 12.7 in \cite{liptser1977statistics}, $M_t:=\mathbb{E}_t[\hat M_t]$  and $\gamma_{2t}:=\mathbb{E}_t[(M_t-\hat M_t)^2]$ satisfy 
\begin{eqnarray}\label{eq:M_filtering}
dM_t&=&\underbrace{(\kappa_{0t}+\kappa_{1t}a_t+\kappa_{2t}M_t)dt}_{=\mathbb{E}_t[(\kappa_{0t}+\kappa_{1t}a_t+\kappa_{2t}\hat{M}_t)dt]}+\frac{\sigma_XB_{t}^X+\gamma_{2t}\delta_{1t}}{\sigma_X^2}[dX_{t}-(\nu a_t+\delta_{0t}+\delta_{1t}M_t+\delta_{2t}L_t)dt]\;\;\;\;\;\;\;\;\\
\dot{\gamma}_{2t}&=& 2\kappa_{2t}\gamma_{2t}+(B^X_{t})^2+(B_{t}^Y)^2-\left( B_{t}^X+\gamma_{2t}\delta_{1t}/\sigma_X\right)^2,\label{eq:gamma2ODE}
\end{eqnarray}
with $dZ_t:= [dX_{t}-(\nu a_t+\delta_{0t}+\delta_{1t}M_t+\delta_{2t}L_t)dt]/\sigma_X$ a Brownian motion from the long-run player's standpoint.\footnote{Theorem 12.7 in \cite{liptser1977statistics} is stated for actions that depend on $(\theta, X)$ exclusively, but it also applies to those that condition on past play (i.e., on $M$). Indeed, from \eqref{eq:Mhat},  $\hat M_t= \hat M_t^\dag+ A_t$ where $M_t^\dag=M_t^\dag[Z_s^X,Z_t^Y; s<t]$ and $A_t=\int_0^t e^{\int_0^s\kappa_{2u}du}\kappa_{1s}a_sds$. Applying the theorem to $(\hat M^\dag_t, X_t-\int_0^t \nu a_sds)_{t\in [0,T]}$, yields a posterior mean $M_t^\dag$ and variance $\gamma_{2t}^\dag$ for $\hat M^\dag$ such that $M^\dag+A_t =M_t$ as in \eqref{eq:M_filtering} and $\gamma_{2t}=\gamma_{2t}^\dag$.} Critically, observe that since \eqref{eq:M_filtering} is linear, one can solve for $M_t$ as an \emph{explicit} function of past actions $(a_s)_{s<t}$ and past realizations of the public history $(X_s)_{s<t}$.

Inserting $a_t=\beta_{0t}+\beta_{1t}M_t+\beta_{2t}L_t+\beta_{3t}\theta$ in  \eqref{eq:M_filtering} and collecting terms yields $dM_t=[\hat\kappa_{0t}+\hat\kappa_{1t}M_t+\hat\kappa_{2t}L_t+\hat\kappa_{3t}\theta]dt+\hat B_t dX_{t}$, where, (i) $\hat\kappa_{0t}=-\alpha_{3t}\gamma_{1t}\alpha_{0t}\Sigma+ \alpha_{3t}\gamma_{1t}\beta_{0t}\Sigma+\frac{\nu\alpha_{3t}\gamma_{1t}+\gamma_{2t}\delta_{1t}}{\sigma_X^2}[-\nu\beta_{0t}-\delta_{0t}]$, (ii) $\hat\kappa_{1t}= \alpha_{3t}\gamma_{1t}\beta_{1t}\Sigma-\alpha_{3t}^2\gamma_{1t}\Sigma+\frac{\nu\alpha_{3t}\gamma_{1t}+\gamma_{2t}\delta_{1t}}{\sigma_X^2}[-\nu\beta_{1t}-\delta_{1t}]$, (iii) $\hat\kappa_{2t}= -\alpha_{3t}\gamma_{1t}\alpha_{2t}\Sigma+\alpha_{3t}\gamma_{1t}\beta_{2t}\Sigma+\frac{\nu\alpha_{3t}\gamma_{1t}+\gamma_{2t}\delta_{1t}}{\sigma_X^2}[-\nu\beta_{2t}-\delta_{2t}]$, (iv) $\hat\kappa_{3t}=\left[\frac{\alpha_{3t}\gamma_{1t}}{\sigma_Y^2}-\frac{\nu\gamma_{2t}\delta_{1t}}{\sigma_X^2}\right]\beta_{3t}$ and (v) $\hat B_t= \frac{\nu\alpha_{3t}\gamma_{1t}+\gamma_{2t}\delta_{1t}}{\sigma_X^2}$.

Let $R(t,s)=\exp(\int_s^t \hat\kappa_{1u}du)$. Since $M_{0}=\mu$, we have  $M_t=R(t,0)\mu+\theta\int_0^t R(t,s)\hat\kappa_{3s}ds+\int_0^t R(t,s)[\hat\kappa_{0s}+\hat\kappa_{2s}L_s]ds+ \int_0^t R(t,s)\hat B_s dX_{s}$. Imposing equality with \eqref{eq:BeliefDecomposition} yields the equations $
\chi_{t}= \int_0^t R(t,s)\hat\kappa_{3s}ds$ and  $L_t = [R(t,0)\mu+\int_0^t R(t,s)[\hat\kappa_{0s}+\hat\kappa_{2s}L_s]ds+ \int_0^t R(t,s)\hat B_sdX_{s}]/[1-\chi_t]$. The validity of the construction boils down to finding a solution to the previously stated equation for $\chi$ that takes values in $[0,1)$. Indeed, when this is the case, it is easy to see that
\begin{align}\label{eq:L_differential}
 dL_t&=\{L_t[\hat{\kappa}_{1t}+\hat{\kappa}_{2t}+\hat{\kappa}_{3t}]dt+\hat{\kappa}_{0t}dt+\hat{B}_tdX_t\}/(1-\chi_t),
 \end{align}
from which it is easy to conclude that $L$ is a (linear) function of $X$ as conjectured.

We will find a solution to the $\chi$-equation that is $C^1$ with values in $[0,1)$. Differentiating $\chi_{t}= \int_0^t R(t,s)\hat\kappa_{3s}ds$ then yields an ODE for $\chi$ as below that is coupled with $\gamma_1$ and $\gamma_2$:
\begin{eqnarray}
\dot\gamma_{1t}&=&-\gamma_{1t}^2(\beta_{3t}+\beta_{1t}\chi_t)^2\Sigma\notag\\
\dot\gamma_{2t}&=&-2\gamma_{2t}\gamma_{1t}(\beta_{3t}+\beta_{1t}\chi_t)^2\Sigma+\gamma_{1t}^2(\beta_{3t}+\beta_{1t}\chi_t)^2\Sigma -\left(\nu\gamma_{1t}(\beta_{3t}+\beta_{1t}\chi_t)+\gamma_{2t}\delta_{1t}\right)^2/\sigma_X^2\notag\\
\dot\chi_t&=& \gamma_{1t}(\beta_{3t}+\beta_{1t}\chi_t)^2\Sigma(1-\chi_t)-(\nu[\beta_{3t}+\beta_{1t}\chi_t]+\delta_{1t}\chi_t)\left(\nu\gamma_{1t}(\beta_{3t}+\beta_{1t}\chi_t)+\gamma_{2t}\delta_{1t}\right)/\sigma_X^2.\notag
\end{eqnarray}
In the proof of Lemma \ref{lem:UniqueLearningSol} we establish that $\chi=\gamma_2/\gamma_1\in [0,1)$ taking the system above as a primitive (we do it for $\nu=0$, but it also holds otherwise). Setting $\nu=0$ and $\gamma_2=\chi\gamma_1$ in the third ODE, and writing $\gamma$ for $\gamma_1$, the first and third ODEs become \eqref{eq:gammadot}--\eqref{eq:chidot}. If now $\nu=0$, using (i)--(v) that define $(\vec{\hat{\kappa}},\hat B)$ yields that \eqref{eq:L_differential} becomes $dL_t=(\ell_{0t}+\ell_{1t}L_t)dt+B_t dX_t$ where 
\begin{eqnarray}\label{eq:l0_l1_B}
(l_{0t},l_{1t},B_t)=[\sigma_X^2(1-\chi_t)]^{-1} \times (-\gamma_t\chi_t\delta_{0t}\delta_{1t},-\gamma_t\chi_t\delta_{1t}(\delta_{1t}+\delta_{2t}),\gamma_t\chi_t\delta_{1t}). 
\end{eqnarray}
That $L_t$ coincides with $\mathbb{E}[\theta|\mathcal{F}_t^X]$ is proved in the Online Appendix. \hfill{$\square$}\\

\noindent\textbf{Proof of Lemma \ref{lem:UniqueLearningSol}.} Consider the system in $(\gamma_1,\gamma_2,\chi)$ from the proof of the previous lemma when $\nu=0$, and let $\delta_{1t}:=\hat u_{\hat a\theta}+\hat u_{a \hat a}\alpha_{3t}$.\footnote{All the results in this proof extend (i) to $\nu\in [0,1]$ and (ii) to $\delta_1$ a generic continuous function over $[0,T]$, the latter case arising when the myopic player becomes forward looking.} The local existence of a solution follows from Peano's Theorem. Suppose that the maximal interval of existence is $[0,\tilde T)$, with $\tilde T\leq T$. Since the system is locally Lipschitz continuous in $(\gamma_1,\gamma_2,\chi)$ uniformly in $t\in [0,T]$, its solution over $[0,\tilde T)$ is unique (Picard-Lindel\"{o}f). Applying the comparison theorem to the pairs $\{\gamma_1,0\}$ and $\{\gamma_1,\gamma^o\}$, we get $\gamma_{1t}\in (0,\gamma^o]$ over $[0,\tilde T)$. Hence, $\gamma_2/\gamma_1$ is well-defined, and since it solves the $\chi$-ODE, $\chi=\gamma_2/\gamma_1$ by uniqueness. Replacing $\gamma_2=\chi\gamma_1$ and $\nu=0$ in the $\chi-$ODE then yields \eqref{eq:chidot}. A second application of the comparison theorem to $\{\chi,0\}$ and $\{\chi,1\}$ then implies $\chi\in [0,1)$, and in turn $\gamma_2=\chi\gamma_1\in [0,\gamma^o)$, over $[0,\tilde T)$. Since the solution is bounded, if $\tilde T<T$, it can be extended to $\tilde T$ by the continuity of the RHS of the system; and then subsequently extended beyond $\tilde T$ by Peano's theorem, a contradiction. But if $\tilde T=T$, it can be extended to $T$---the first part of the lemma holds. If $\beta_{30}\neq 0$, then $\dot\gamma_{10}<0$ and $\dot\chi_0>0$, so by continuity of $\dot\gamma_1$ and $\dot\chi$, there exists $\epsilon>0$ such that $\gamma_{1t}<\gamma^o$ and $\chi_t>0$ for all $t\in (0,\epsilon)$, and by the comparison theorem, these strict inequalities hold up to time $T$.\hfill{$\square$}\\

\noindent\textbf{Proof of Lemma \ref{lem:M and L laws of motion}.} Inserting $\nu=0$ in \eqref{eq:mus_Mhat} defining $(\kappa_0,\kappa_1,\kappa_2,B^X_t)$ yields that \eqref{eq:M_filtering} becomes $dM_t=\frac{\gamma_t\alpha_{3t}}{\sigma_Y^2}(a_{t}-[\alpha_{0t}+\alpha_{2t}L_t+\alpha_{3t}M_t])dt+\frac{\chi_t\gamma_t\delta_{1t}}{\sigma_X}dZ_t$, 
where $dZ_t:= [dX_{t}-(\nu a_t+\delta_{0t}+\delta_{1t}M_t+\delta_{2t}L_t)dt]/\sigma_X$ a Brownian motion from the long-run player's standpoint. As for the law of motion of $L$, this one follows from \eqref{eq:dLonpath} using \eqref{eq:l0_l1_B} and that $dX_t=(\delta_{0t}+\delta_{2t}L_t+\delta_{1t}M_t)dt+\sigma_X dZ_t$ from the long-run player's perspective when $\nu=0$.

We conclude with three observations. First, from \eqref{eq:Mhat} and \eqref{eq:M_filtering}, $\hat M_t- M_t$ is independent of the strategy followed, and hence so is $Z_t$ due to $\sigma_XdZ_t = \delta_{1t}(\hat M_t- M_t)dt+\sigma_X dZ_t^X$ under the true data-generating process. This strategic independence enables us to fix an exogenous Brownian motion $Z$ and then solve the best-response problem with $Z$ in the laws of motion of $ M$ and $L$---i.e., the so-called \emph{separation principle} for control problems with unobserved states applies (see, for instance, \citealp{liptser1977statistics}, Chapter 16).

Second, it is clear from \eqref{eq:LRpprogram}, \eqref{eq:M_filtering}--\eqref{eq:gamma2ODE}, and the proof of Lemma \ref{lem:UniqueLearningSol} that no additional state variables are needed due to $\gamma_{2t}:=\mathbb{E}_t[(M_t-\hat M_t)^2]=\chi_t\gamma_t$ holding irrespective of the strategy chosen. Third, the set of admissible strategies for the best-response problem then consists of all square-integrable processes that are progressively measurable with respect to $(\theta, M,L)$. This set is clearly the appropriate set, and richer than that in Definition \ref{def:eqbm}.\hfill{$\square$}\\

\noindent\textbf{Proof of Lemma \ref{lem:GammaChiRelationship}.}
By Lemma \ref{lem:UniqueLearningSol}, the system of ODEs \eqref{eq:gammadot}-\eqref{eq:chidot} admits a unique solution. The proof then consists of showing that $(\gamma,\chi(\gamma))$ as in the lemma, with $c_2=\frac{\sigma_X^2}{2\hat u_{a\hat a}^2}[\sqrt{1/\sigma_Y^2+4(u_{a\hat a}/[\sigma_X\sigma_Y]^2}-1/\sigma_Y^2]\in (0,1)$, $c_1=\frac{\sigma_X^2}{2\hat u_{a\hat a}^2}[\sqrt{1/\sigma_Y^2+4(u_{a\hat a}/[\sigma_X\sigma_Y]^2}+1/\sigma_Y^2]>0$ and $d=[\sigma_Y \hat u_{a\hat a}]^2(c_1+c_2)/\sigma_X^2>0$, is a solution. This is done in the Online Appendix, where we also show how to construct the candidate $\chi$ and the above coefficients.\hfill{$\square$}

\vspace{-0.1in}
\subsection{Proof of Theorem \ref{thm:LeadingInteriorExistence}}

In light of the generality of Theorem \ref{thm:Leading_LME_BVP}, we only sketch the proof of Theorem \ref{thm:LeadingInteriorExistence} here (all the details are in the Online Appendix). Specifically, the proof can be divided into two steps: \emph{reduction} and  \emph{shooting}.\footnote{A final step of \emph{verification}---i.e., checking that the rest of the coefficients are well defined as the last step for finding a LME---is a special instance of ``Step 5" in the proof of Theorem \ref{thm:Leading_LME_BVP}, and is thus omitted.} In the \emph{reduction} step, we show that the problem of solving the BVP stated in Section \ref{subsec:existence_interior} can be translated to finding a value $\gamma^F$ such that the backward IVP consisting of ODEs for $(v_8\gamma, \beta_1,\beta_3,\gamma)$ with $\gamma_0=\gamma^F$ has a solution over $[0,T]$ that shoots $\gamma$ to $\gamma^o$ at $T$; this is done after recognizing that $v_6$ and $\beta_2$ can be written as functions of the other variables in closed form. We then tackle the \emph{shooting} step via a contradiction. Specifically, we consider the supremum over values $\tilde{\gamma}^F$ such that all the IVPs with $\gamma_0=\gamma^F\in [0,\tilde{\gamma}^F)$ admit a solution over $[0,T]$---towards a contradiction, assume that $\gamma_T<\gamma^o$ for all initial conditions in the maximal set induced (otherwise, our desired conclusion follows from the continuity of the solutions). Under this assumption, $\gamma_t\in [0,\gamma^o]$ at all times, and one can show that for some horizons as in the theorem, the solutions can be uniformly bounded over all initial values in the maximal set. But this in turn implies that a solution to our IVP defined over $[0,T]$ exists for initial values of $\gamma$ strictly above the supremum, a contradiction.\footnote{See \cite{bonatti2017dynamic} for an application of this method to a symmetric oligopoly model featuring dispersed fixed private information, imperfect public monitoring, and multiple long-run firms.}

\subsection{Proof of Theorem \ref{thm:Leading_LME_BVP}}
After a change of variables $\tilde \beta_{2}=\beta_2/(1-\chi), \tilde v_6= v_6\gamma/(1-\chi)^2, \tilde v_8=v_8\gamma/(1-\chi),
$ the BVP is
\begin{align}
\dot{\tilde{v}}_{6t}&=\gamma_t \left\{-\beta_{1t}^2+2 \beta_{1t} \tilde\beta_{2t}+\tilde\beta_{2t}^2+\tilde v_{6t}\left[\alpha_t^2/\sigma_Y^2+2 (\hat u_{ \hat a \theta}+\hat u_{\hat a a}\alpha_t)^2 \chi_t/\sigma_X^2\right]\right\}\label{eq:LeadingIntAsymChangev6}\\
\dot{ \tilde{v}}_{8t}&=\gamma_t \left\{(-2+4 \alpha_t) \beta_{1t}-2 \tilde\beta_{2t}+\tilde v_{8t}(\hat u_{\hat a \theta}+\hat u_{\hat a a}\alpha_t)^2 \chi_t/\sigma_X^2-4 \beta_{1t}^2 \chi_t\right\}\label{eq:LeadingIntAsymChangev8}\\
\begin{split}
\dot \beta_{1t}&=\gamma_t[4 \sigma_X^2 \sigma_Y^2 (1+\hat u_{\hat a \theta} \chi_t)]^{-1}\times \left\{2 \sigma_X^2 \alpha_t \left([\hat u_{\hat a \theta}+\hat u_{\hat a a}\alpha_t]^2-\alpha_t[\hat u_{\hat a \theta}+\hat u_{\hat a a}\alpha_t]\right)\right.\\
&\qquad\left.+4\sigma_X^2\alpha_t\beta_{1t}(\alpha_t-\beta_{1t})+\tilde v_{8t}\alpha_t\chi_t (\hat u_{\hat a \theta}+\hat u_{\hat a a}\alpha_t)^2 (\hat u_{\hat a \theta}-2 \beta_{1t})\right.\\
&\qquad\left.+4 \beta_{1t}\chi_t\left[\hat u_{\hat a \theta}^2 \sigma_Y^2+\hat u_{\hat a \theta} \alpha_t \left(\hat u_{\hat a \theta} \sigma_X^2+2 \hat u_{\hat a a}\sigma_Y^2-\sigma_X^2 \beta_{1t}\right)\right]\right.\\
&\qquad \left.+4\hat u_{\hat a a}\beta_{1t} \alpha_t^2\chi_t\left(2 \hat u_{\hat a \theta} \sigma_X^2+\hat u_{\hat a a}\sigma_Y^2+\sigma_X^2\alpha_t[\hat u_{\hat a a}-1]\right)\right.\\
&\qquad\left.-4 \sigma_Y^2 (\hat u_{\hat a \theta}+\hat u_{\hat a a}\alpha_t)^2 \tilde\beta_{2t}\chi_t+4 \sigma_Y^2 (\hat u_{\hat a \theta}+\hat u_{\hat a a}\alpha_t)^2 \beta_{1t} (\hat u_{\hat a \theta}-2 \tilde\beta_{2t}) \chi_t^2\right\}\label{eq:LeadingIntAsymBeta1}
\end{split}\\
\begin{split}
\dot{\tilde{\beta}}_{2t}&=\gamma_t[4 \sigma_X^2 \sigma_Y^2 (1+\hat u_{\hat a \theta} \chi_t)]^{-1} \times\left\{2 \sigma_X^2 \alpha_t \left[\hat u_{\hat a \theta}^2+2 \beta_{1t}^2+\alpha_t (\hat u_{\hat a \theta}[2\hat u_{\hat a a}-1]+2\tilde\beta_{2t})\right]\right.\\
&\qquad\left. +2\sigma_X^2\alpha_t^3\hat u_{\hat a a}(\hat u_{\hat a a}-1)+\alpha_t\chi_t(\hat u_{\hat a \theta}+\hat u_{\hat a a}\alpha_t)^2 \left[-4 \tilde v_{6t}+\tilde v_{8t} (\hat u_{\hat a \theta}-2 \tilde \beta_{2t})\right]\right.\\
&\qquad\left.+4\alpha_t\chi_t \hat  \sigma_X^2 \left[u_{\hat a \theta}\beta_{1t}^2+(\hat u_{\hat a \theta}^2+\hat u_{\hat a a} \alpha_t[2\hat u_{\hat a \theta}+(\hat u_{\hat a a}-1)\alpha_t]) \tilde\beta_{2t}\right]\right.\\
&\qquad \left.-4 (\hat u_{\hat a \theta}+\hat u_{\hat a a}\alpha_t)^2 \left[\hat u_{\hat a \theta} \tilde v_{6t}\alpha_t+\sigma_Y^2 \tilde\beta_{2t} (-\hat u_{\hat a \theta}+2 \tilde \beta_{2t})\right] \chi_t^2\right\}\label{eq:LeadingIntAsymChangeBeta2}
\end{split}\\
\begin{split}
\dot\beta_{3t}&=\gamma_t[4 \sigma_X^2 \sigma_Y^2 (1+\hat u_{\hat a \theta} \chi_t)]^{-1}\times \left\{-4 \sigma_X^2 \alpha_t^2 \beta_{1t}-\chi_t^2[\tilde v_{8t}\alpha_t (\hat u_{\hat a \theta}+\hat u_{\hat a a}\alpha_t)^2 (\hat u_{\hat a \theta}-2 \beta_{1t})]\right.\\
&\qquad\left.+2 \alpha_t\chi_t(\hat u_{\hat a \theta}+\hat u_{\hat a a}\alpha_t)  \left[-\hat u_{\hat a \theta} \sigma_X^2+\sigma_X^2 \alpha_t\left(2 \hat u_{\hat a \theta}+[\hat u_{\hat a a}-1](2\alpha_t-1)\right)\right]\right.\\
&\qquad \left.-2\alpha_t^2 \chi \tilde v_{8t} (\hat u_{\hat a \theta}+\hat u_{\hat a a}\alpha_t)^2-2\alpha_t\chi_t[2 \hat u_{\hat a \theta} \sigma_X^2 \alpha_t \beta_{1t}-2 \sigma_X^2 \beta_{1t}^2]\right.\\
&\qquad \left.-4\sigma_X^2\chi_t^2 \alpha_t \beta_{1t}[\hat u_{\hat a \theta} \alpha_t(2\hat u_{\hat a a}-1)+\hat u_{\hat a a} \alpha_t^2(\hat u_{\hat a a}-1)+\hat u_{\hat a \theta}(\hat u_{\hat a \theta}-\beta_{1t})]\right.\\
&\qquad\left.-4 \sigma_Y^2\chi_t^2 (\hat u_{\hat a \theta}+\hat u_{\hat a a}\alpha_t)^2 (-1+2 \alpha_t)\tilde \beta_{2t}+8 \sigma_Y^2 (\hat u_{\hat a \theta}+\hat u_{\hat a a}\alpha_t)^2 \beta_{1t} \tilde\beta_{2t} \chi_t^3\right\}\label{eq:LeadingIntAsymBeta3}
\end{split}\\
\dot{\gamma}_{t}&=-\alpha_t^2\gamma_t^2/\sigma_Y^2\label{eq:LeadingIntAsymGamma}\\
\dot{\chi}_{t}&=\gamma_t\left\{\alpha_t^2(1-\chi_t)/\sigma_Y^2-(\hat u_{\hat a \theta}+\hat u_{\hat a a}\alpha_t)^2\chi_t^2/\sigma_X^2\right\}\label{eq:LeadingIntAsymChi}.
\end{align}
with boundaries $(\gamma_0,\chi_0,\tilde{v}_{6T},\tilde{v}_{8T},\beta_{1T},\tilde{\beta}_{2T},\beta_{3T})=(\gamma^o,0,0,0,\frac{\hat u_{\hat a a}+2\hat u_{\hat a \theta}}{2(2-\hat u_{\hat a a}\chi_T)},\frac{\hat u_{\hat a a}+2\hat u_{\hat a \theta}}{2(2-\hat u_{\hat a a}\chi_T)},1/2)$ and where $\alpha:=\beta_3+\beta_1\chi$. (By Lemma \ref{lem:UniqueLearningSol} and Assumption \ref{assu:static} part (ii) (using $u_{a\theta}=u_{a\hat a}=1/2$)  the denominators are strictly positive, and hence well defined after all possible histories.)

The proof proceeds in five steps. The main task is to obtain a solution $(\tilde{v}_6,\tilde{v}_8,\beta_1,\tilde{\beta}_2,\beta_3,\gamma,\chi)$ to the boundary value problem for all $T<T(\gamma^o)$; from there, it is straightforward to verify that the remaining equilibrium coefficients are well defined, as we do at the end of the proof.

\noindent\textbf{Step 1}: \emph{Convert BVP to fixed point problem in terms of a parameterized IVP.} It is useful to introduce $z=(\tilde{v}_6,\tilde{v}_8,\beta_1,\tilde\beta_2,\beta_3,\gamma,\chi)^{\mathsf{T}}$ and write the system of ODEs \eqref{eq:LeadingIntAsymChangev6}-\eqref{eq:LeadingIntAsymChi} as $\dot{z}_t=F(z_t)$. We write $\tilde{z}=(z_1,z_2,\dots,z_5)^{\mathsf{T}}$ and $\tilde{F}(z)=(F_1(z),F_2(z),\dots,F_5(z))^{\mathsf{T}}$.

Define $\mathbf{B}:[0,1)\to \mathbb{R}^5$ by $\mathbf{B}(\chi)=\left(0,0,\frac{\hat u_{\hat a a}+2\hat u_{\hat a \theta}}{2(2-\hat u_{\hat a a}\chi)},\frac{(\hat u_{\hat a a}+2\hat u_{\hat a \theta})}{2(2-\hat u_{\hat a a}\chi)},1/2\right)^{\mathsf{T}}$, formed by writing the terminal value of $\tilde z$ as a function of $\chi$. Define $s_0\in \mathbb{R}^5$ by $s_0=\mathbf{B}(0)=(0,0,\frac{\hat u_{\hat a a}+2\hat u_{\hat a \theta}}{4},\frac{\hat u_{\hat a a}+2\hat u_{\hat a \theta}}{4},1/2)^{\mathsf{T}}$. For $x\in \mathbb{R}^n$, let $||x||_{\infty}$ denote the sup norm, $\sup_{1\leq i\leq n}|x_i|$. For any $\rho>0$, let $\mathcal{S}_\rho(s_0)$ denote the closed $\rho$-ball around $s_0$, $\mathcal{S}_\rho(s_0):=\{s\in \mathbb{R}^5|\, ||s-s_0||_\infty\leq \rho\}.$ 

For all $s\in \mathcal{S}_\rho(s_0)$, let IVP-$s$ denote the initial value problem defined by \eqref{eq:LeadingIntAsymChangev6}-\eqref{eq:LeadingIntAsymChi} and initial conditions $(\tilde{v}_{60},\tilde{v}_{80},\beta_{10},\tilde\beta_{20},\beta_{30},\gamma_0,\chi_0)=(s,\gamma^o,0)$. Whenever a solution to IVP-$s$ exists, it is unique as $F$ is of class $C^1$; denote it by $z(s)$, where $z(s)=(\tilde{z}(s)^{\mathsf{T}},\gamma(s),\chi(s))^{\mathsf{T}}=(\tilde{v}_{6}(s),\tilde{v}_{8}(s),\beta_{1}(s),\tilde\beta_{2}(s),\beta_3(s),\gamma(s),\chi(s))^{\mathsf{T}}$, where we suppress additional dependence on $(\gamma^o,0)$ which remain fixed. Note that such a solution solves the BVP if and only if 
\begin{align}
\tilde{z}_T(s)=\mathbf{B}(\chi_T(s)),\label{eq:IVPsolvesBVP}
\end{align}
as the initial values $\gamma_0(s)=\gamma^o$ and $\chi_0(s)=0$ are satisfied by construction. Note also that $\tilde{z}_T(s)=s+\int_0^T \tilde{F}(z_t(s)) \,dt$; hence \eqref{eq:IVPsolvesBVP} is satisfied if and only if $s$ is a fixed point of the function $g:\mathcal{S}_\rho(s_0)\to \mathbb{R}^5$ defined by $g(s):=\mathbf{B}(\chi_T(s))-\int_0^T \tilde{F}(z_t(s)) dt.$ Note, moreover, that for any solution, we have by Lemma \ref{lem:UniqueLearningSol} that $\chi_t\in [0,\bar{\chi})$, where we define $\bar{\chi}:=1$ for this proof.

\noindent\textbf{Step 2}: \emph{Obtain sufficient conditions for IVP-$s$ to have unique and uniformly bounded solutions for all $s\in \mathcal{S}_\rho(s_0)$, any $\rho>0$.} Specifically, for arbitrary $K>0$, we ensure that the solution $\tilde{z_t}(s)$ varies at most $K$ from its starting point $s$ for all $t\in [0,T]$, and thus by the triangle inequality, this solution varies most $\rho+K$ from $s_0$. These bounds will be used later.

\begin{lemma}\label{lem:LeadingIntSBC}
Fix $\gamma^o,\rho,K>0$. There exists a threshold $T^{SBC}(\gamma^o;\rho,K)>0$ such that if $T<T^{SBC}(\gamma^o;\rho,K)$, then for all $s\in \mathcal{S}_\rho(s_0)$ a unique solution to IVP-$s$ exists over $[0,T]$ with $\tilde z_t(s)\in \mathcal{S}_{\rho+K}(s_0)$, all $t\in [0,T]$. We call this property the System Bound Condition (SBC).
\end{lemma}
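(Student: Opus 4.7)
The plan is a standard Picard--Lindel\"of plus continuation/escape-time argument, with the a priori control on the learning states $(\gamma,\chi)$ supplied by Lemma \ref{lem:UniqueLearningSol}.

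First I would verify regularity of $F$ on the relevant domain. The only potential singularity in \eqref{eq:LeadingIntAsymChangev6}--\eqref{eq:LeadingIntAsymChi} is the factor $1+\hat u_{\hat a\theta}\chi$ appearing in the denominators of the $\beta_1$, $\tilde\beta_2$, and $\beta_3$ equations. Assumption \ref{assu:static}(ii) specialized to the leadership game ($u_{a\theta}=u_{a\hat a}=1/2$) gives $\hat u_{\hat a\theta}>-1$, so on $\chi\in[0,1]$ this factor is bounded below by $\min(1,1+\hat u_{\hat a\theta})>0$ and $F$ is $C^1$ on an open neighborhood of $\mathcal{S}_{\rho}(s_0)\times\{(\gamma^o,0)\}$. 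Picard--Lindel\"of then yields, for each $s\in\mathcal{S}_{\rho}(s_0)$, a unique solution $z_t(s)$ to IVP-$s$ on a maximal interval $[0,T_{\max}(s))$. Along any such solution $(\beta_1,\beta_3)$ is continuous, so Lemma \ref{lem:UniqueLearningSol} gives $\gamma_t(s)\in(0,\gamma^o]$ and $\chi_t(s)\in[0,1)$, and the trajectory lies in the compact slab $\mathcal{K}:=\mathcal{S}_{\rho+K}(s_0)\times[0,\gamma^o]\times[0,1]$ whenever $\tilde z_t(s)\in\mathcal{S}_{\rho+K}(s_0)$.

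Next I would close the estimate by a linear-in-$t$ escape argument. Since $\tilde F$ is continuous on the compact set $\mathcal{K}$, set $M:=\sup_{z\in\mathcal{K}}\|\tilde F(z)\|_\infty<\infty$ and define $T^{SBC}(\gamma^o;\rho,K):=K/M$. For fixed $s\in\mathcal{S}_{\rho}(s_0)$ let
\[
t^*(s):=\sup\bigl\{t\in[0,\min(T^{SBC},T_{\max}(s)))\,:\,\tilde z_\tau(s)\in\mathcal{S}_{\rho+K}(s_0)\ \forall\,\tau\in[0,t]\bigr\}.
\]
On $[0,t^*(s)]$ the trajectory lies in $\mathcal{K}$, so $\|\tilde z_t(s)-s\|_\infty \leq \int_0^t\|\tilde F(z_\tau(s))\|_\infty\,d\tau \leq Mt \leq MT^{SBC}=K$. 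If $t^*(s)<T^{SBC}$ and $t^*(s)<T_{\max}(s)$, then $\tilde z_{t^*(s)}(s)$ lies strictly inside $\mathcal{S}_{\rho+K}(s_0)$, so continuity extends residence in $\mathcal{S}_{\rho+K}(s_0)$ past $t^*(s)$, contradicting its definition. Hence $t^*(s)=\min(T^{SBC},T_{\max}(s))$, and compactness of $\mathcal{K}$ rules out blow-up, forcing $T_{\max}(s)>T^{SBC}$. Uniformity in $s\in\mathcal{S}_{\rho}(s_0)$ is automatic because $M$ and $T^{SBC}$ depend only on $(\gamma^o,\rho,K)$.

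The only nonroutine issue is ensuring that neither the algebraic singularity of $F$ at $1+\hat u_{\hat a\theta}\chi=0$ nor the strict upper bound $\chi_t<1$ degenerates as $s$ varies. This is handled by the two-sided input above: Assumption \ref{assu:static} keeps the denominator bounded below on all of $\chi\in[0,1]$, while Lemma \ref{lem:UniqueLearningSol} confines $(\gamma,\chi)$ to a compact set intrinsically (i.e., with no short-time restriction on $\chi$). Consequently the bound $M$ on $\tilde F$ is genuinely uniform across $\mathcal{S}_{\rho}(s_0)$ and the linear escape-time estimate closes the argument with no further work.
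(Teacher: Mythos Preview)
Your proof is correct and follows essentially the same approach as the paper: both verify that the only potential singularity $1+\hat u_{\hat a\theta}\chi$ is bounded below on $\chi\in[0,1]$ via Assumption~\ref{assu:static}(ii), invoke Lemma~\ref{lem:UniqueLearningSol} to confine $(\gamma,\chi)$ a priori, bound $\tilde F$ on the resulting compact slab, and close with a linear-in-$t$ escape/contradiction argument. The only cosmetic difference is that the paper constructs component-wise bounds $h_i(\gamma^o;\rho,K)$ (explicitly proportional to $\gamma^o$, which is later used to extract the $O(1/\gamma^o)$ order of $T(\gamma^o)$ in Theorem~\ref{thm:Leading_LME_BVP}) and sets $T^{SBC}=\min_i K/h_i$, whereas you take a single sup-norm bound $M$ and set $T^{SBC}=K/M$.
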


\begin{proof}
Fix any $s\in \mathcal{S}_\rho(s_0)$. Since $\tilde F$ is of class $C^1$, a local solution exists, and solutions are unique given existence. We now construct bounds on $\tilde F$ by writing $\tilde F(z(s))=\tilde F(z(s)-s_0+s_0)$ and using the conjectured bounds $||\tilde z(s)-s_0||_\infty<\rho+K$, $\gamma\in (0,\gamma^o]$, $\chi \in [0,\bar{\chi})$ for the solution, when it exists. Using these bounds on $\tilde F$, we identify $T^{SBC}(\gamma^o;\rho,K)$ such that for all $t<T^{SBC}(\gamma^o;\rho,K)$ the solution to IVP-$s$ (exists and) satisfies the conjectured bounds. 

Note that the desired component-wise inequalities $|z_{i t}(s)-s_{i0}|< \rho+K$, $i\in \{1,2,\dots,5\}$, imply the further bounds
$|\tilde{v}_{6t}|,|\tilde{v}_{8t}|<\bar v_6(\rho,K):=\bar v_8(\rho,K):=\rho+K$, $
|\beta_{1t}|,|\tilde{\beta}_{2t}|<\bar\beta_1(\rho,K):=\bar\beta_2(\rho,K):=\frac{|\hat u_{\hat a a}+2\hat u_{\hat a \theta}|}{4}+\rho+K$,
$|\beta_{3t}|<\bar\beta_3(\rho,K):=1/2+\rho+K$ and $|\alpha_t|<\bar{\alpha}(\rho,K):=\bar{\beta}_1(\rho,K)\bar{\chi}+\bar\beta_3(\rho,K)$.
A lower bound on $1+\hat u_{\hat a \theta}\chi_t$ in the denominators of \eqref{eq:LeadingIntAsymBeta1}-\eqref{eq:LeadingIntAsymBeta3} is $\underline u:=\min\{1,1+\hat u_{\hat a \theta}\}$; using that $u_{a\theta}=u_{a\hat a}=1/2$, Assumption \ref{assu:static} part (ii) implies $\underline u>0$.

By the triangle inequality, one can use these bounds construct functions $h_i(\gamma^o;\rho,K)$, $i\in \{1,2,\dots,5\}$, proportional to $\gamma^o$, which bound the magnitudes of the RHS in \eqref{eq:LeadingIntAsymChangev6}-\eqref{eq:LeadingIntAsymBeta3}. Now for arbitrary $(\rho,K)\in \mathbb{R}_{++}^2$, define $T^{SBC}(\gamma^o;\rho,K):= \min_{i\in \{1,2,\dots,5\}} \frac{K}{h_i(\gamma^o;\rho,K)}.$

We claim that, for any $t<T^{SBC}(\gamma^o;\rho,K)$, if a solution exists at time $t$, then $||\tilde z_{t}(s)-s||_\infty<K$, $\gamma_t\in (0,\gamma^o]$ and $\chi_t\in [0,\bar{\chi})$. To see this, suppose by way of contradiction that there is some $s\in \mathcal{S}_\rho$ and some $t<T^{SBC}(\gamma^o;\rho,K)$ at which a solution to IVP-$s$ exists but either $|z_{it}(s)-s_i|\geq K$ for some $i\in \{1,2,\dots,5\}$, $\gamma_t\notin (0,\gamma^o]$ or $\chi_t\notin [0,\bar{\chi})$; let $\tau$ be the infimum of such times. Now by Lemma \ref{lem:UniqueLearningSol}, it cannot be that $\gamma_t\notin (0,\gamma^o]$ or $\chi_t\notin [0,\bar{\chi})$ while $z_t(s)$ exists, so (by continuity of $z_t(s)$ in $t$) it must be that for some $i$, $|z_{i\tau}(s)-s_i|\geq K$, while $\gamma_t\in (0,\gamma^o]$ and $\chi_t\in [0,\bar{\chi})$ hold for $t\in [0,\tau]$. By construction of $h_i(\gamma^o;\rho,K)$, for all $t\in [0,\tau]$ we have $|F_i(z_t(s))|\leq h_i(\gamma^o;\rho,K)$ and thus
$|z_{i\tau}(s)-s_{i}|\leq \int_0^{\tau} |F_i(z_t(s))|dt\leq \tau \cdot h_i(\gamma^o;\rho,K)<T^{SBC}(\gamma^o;\rho,K) h_i(\gamma^o;\rho,K)\leq K$; this is a contradiction, so the claim holds. By the triangle inequality, $\tilde z_t(s)\in \mathcal{S}_{\rho+K}(s_0)$ if a solution exists at time $t<T^{SBC}(\gamma^o;\rho,K)$. Hence, if $T<T^{SBC}(\gamma^o;\rho,K)$, a (unique) solution to IVP-$s$ exists over $[0,T]$, since an explosion at any time would imply that the previous bound is violated at an earlier time.
\end{proof}

\noindent\textbf{Step 3:} \emph{Establish that $g$ is a well defined, continuous self-map on $\mathcal{S}_\rho$ when $T$ is below a threshold $T(\gamma^o;\rho,K)$.} The expression for the latter is shown in the proof Lemma \ref{lem:LeadingIntSMC} below.

\begin{lemma}\label{lem:LeadingIntSMC}
Fix $\gamma^o>0$, $\rho>0$ and $K>0$. There exists $T(\gamma^o;\rho,K)\leq T^{SBC}(\gamma^o;\rho,K)$ such that for all $T<T(\gamma^o;\rho,K)$, $g$ is a well defined, continuous self-map on $\mathcal{S}_\rho$.
\end{lemma}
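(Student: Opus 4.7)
The plan is to establish the three requirements on $g$ in turn, exploiting the uniform solution bounds from Lemma \ref{lem:LeadingIntSBC} to make $g(s)-s_0$ arbitrarily small in norm, and standard ODE continuous-dependence to get continuity.

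First, for \emph{well-definedness}, I would simply take $T<T^{SBC}(\gamma^o;\rho,K)$, so that by Lemma \ref{lem:LeadingIntSBC} a unique solution $z_t(s)$ to IVP-$s$ exists on $[0,T]$ for every $s\in\mathcal{S}_\rho(s_0)$, with $\tilde z_t(s)\in\mathcal{S}_{\rho+K}(s_0)$, $\gamma_t(s)\in(0,\gamma^o]$, $\chi_t(s)\in[0,\bar\chi)$. Hence both $\chi_T(s)$ and the integral $\int_0^T\tilde F(z_t(s))\,dt$ are defined, and so $g(s)$ is defined.

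Second, for the \emph{self-map property}, I split
\[
g(s)-s_0 \;=\;\bigl[\mathbf{B}(\chi_T(s))-\mathbf{B}(0)\bigr]\;-\;\int_0^T \tilde F(z_t(s))\,dt
\]
and bound each piece by a quantity proportional to $T$. The integral term is controlled componentwise by $T\cdot h_i(\gamma^o;\rho,K)$, exactly the bounds already built in the proof of Lemma \ref{lem:LeadingIntSBC}. For the first term, I would use the $\chi$-ODE \eqref{eq:LeadingIntAsymChi} together with the same bounds on $(\alpha,\gamma,\chi,\hat u)$ to produce a constant $h_\chi(\gamma^o;\rho,K)$ with $|\dot\chi_t(s)|\leq h_\chi$, so that $\chi_T(s)\leq T\cdot h_\chi$. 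Since $\hat u_{\hat a\hat a}<2$ by Assumption \ref{assu:static}(iv) (because $u_{a\hat a}=1/2$), the map $\chi\mapsto\mathbf{B}(\chi)$ is $C^1$ (hence Lipschitz with some constant $L$) on any compact $[0,\chi^{\max}]\subset[0,1)$; by choosing $T$ small enough we ensure $\chi_T(s)\leq\chi^{\max}<1$ uniformly in $s\in\mathcal{S}_\rho$, yielding $\|\mathbf{B}(\chi_T(s))-\mathbf{B}(0)\|_\infty\leq L\cdot T\cdot h_\chi$. Combining the two estimates,
\[
\|g(s)-s_0\|_\infty \;\leq\; T\bigl(L\cdot h_\chi+\max_i h_i(\gamma^o;\rho,K)\bigr),
\]
so setting $T(\gamma^o;\rho,K):=\min\!\bigl\{T^{SBC}(\gamma^o;\rho,K),\;\rho/[L h_\chi+\max_i h_i]\bigr\}$ (possibly reduced so that $\chi^{\max}<1$ stays uniform) gives $g(\mathcal{S}_\rho)\subseteq\mathcal{S}_\rho$.

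Third, for \emph{continuity} of $g$ on $\mathcal{S}_\rho$, I would invoke the standard result that when the RHS of an IVP is $C^1$ (and here $F$ is polynomial on the bounded region where the solutions live), the flow $s\mapsto z_t(s)$ is continuous on $[0,T]\times\mathcal{S}_\rho$ (in fact jointly continuous, uniformly on compacts of time). Dominated convergence then gives continuity of $s\mapsto\int_0^T\tilde F(z_t(s))\,dt$, while continuity of $\chi_T(\cdot)$ and of $\mathbf{B}$ on $[0,\chi^{\max}]$ gives continuity of $s\mapsto\mathbf{B}(\chi_T(s))$; hence $g$ is continuous.

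The main obstacle I anticipate is the \emph{self-map estimate}, because $\mathbf{B}(\chi)$ carries $(2-\hat u_{\hat a\hat a}\chi)$ in the denominator and a priori could lose regularity as $\chi\uparrow 1$. The argument works only because one can use the $\chi$-ODE to force $\chi_T(s)$ to remain uniformly below $1$ for small $T$, at which point Assumption \ref{assu:static}(iv) supplies a uniform Lipschitz constant for $\mathbf{B}$. The rest is routine ODE continuity and the triangle inequality; the hard accounting work has already been done in Lemma \ref{lem:LeadingIntSBC}.
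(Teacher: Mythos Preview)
Your proposal is correct and follows essentially the same approach as the paper: the same decomposition $g(s)-s_0=[\mathbf{B}(\chi_T(s))-\mathbf{B}(0)]-\int_0^T\tilde F(z_t(s))\,dt$, the same use of the $h_i$ bounds from Lemma~\ref{lem:LeadingIntSBC} on the integral, and a bound on $\chi_T$ via the $\chi$-ODE. One small simplification you missed: since Assumption~\ref{assu:static}(iv) with $u_{a\hat a}=1/2$ gives $\hat u_{\hat a a}<2$ (you wrote $\hat u_{\hat a\hat a}$, a typo), the denominator $2-\hat u_{\hat a a}\chi$ is bounded below by $\phi:=\min\{2,2-\hat u_{\hat a a}\}>0$ uniformly on all of $[0,1)$, so $\mathbf{B}$ is Lipschitz there and no separate restriction to a compact $[0,\chi^{\max}]$ is needed---this dissolves the ``main obstacle'' you anticipated.
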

\begin{proof}
First, the inequality $T(\gamma^o;\rho,K)\leq T^{SBC}(\gamma^o;\rho,K)$, which holds by construction (as carried out below), ensures that a unique solution to IVP-$s$ exists for all $s\in \mathcal{S}_\rho$, and hence $g$ is well defined on $\mathcal{S}_\rho$. Now $g(s)$ is equal to $\mathbf{B}(\chi_T(s))-[\tilde z_T(s)-s]$. The continuity of $g$ then follows from $\tilde z_T$ and $\chi_T$ being continuous in $s$ and $\mathbf{B}$ in $\chi$.

To complete the proof, we show that if $T<T(\gamma^o;\rho,K)$, $g$ satisfies the condition
$||g(s)-s_0||_{\infty}\leq \rho$ for all $s\in \mathcal{S}_\rho$,
which we refer to as the Self-Map Condition (SMC). 

Note that $g(s)-s_0=\Delta(s)-\int_0^T \tilde F(z_t(s))dt$, where $\Delta(s):=\mathbf{B}(\chi_T(s))-\mathbf{B}(0)$ takes the value 
$\frac{|\hat u_{\hat a a}+2\hat u_{\hat a \theta}|}{2}\left[(2-\hat u_{\hat a a}\chi_T(s))^{-1}-1/2\right]$ in its third and fourth components and zero in the others. The $h_i(\gamma^o;\rho,K)$ constructed in the proof of the previous lemma will provide us a bound for the components of $\int_0^T \tilde F(z_t(s))dt$, but we must also bound $\Delta_3(s)=\Delta_4(s)$.

Recalling that $\chi\in [0,1)$, the ODE for $\chi$ implies that $
\dot{\chi}_{t}\leq \gamma_t\left\{\alpha_t^2(1-\chi_t)/\sigma_Y^2\right\}\leq \gamma^o \bar\alpha^2/\sigma_Y^2$. We also have $\chi_T=\int_0^T \dot{\chi}_t dt\leq \left(\gamma^o \bar \alpha^2/\sigma_Y^2\right)T$. Next, observe that $2-\hat u_{\hat a a}\chi_T(s)\geq \phi:=\min\{2,2-\hat u_{\hat a a}\}>0$, where strict inequality follows from Assumption \ref{assu:static} part (iv) using $u_{a \hat a}=1/2$. Hence, $|\Delta_3(s)|= \frac{|\hat u_{\hat a a}+2\hat u_{\hat a \theta}|}{2}\left| \frac{\hat u_{\hat a a}\chi_T(s)}{2(2-\hat u_{\hat a a}\chi_T(s))}\right|\leq \frac{|\hat u_{\hat a a}+2\hat u_{\hat a \theta}|}{4}\frac{|\hat u_{\hat a a}|\left(\gamma^o \bar \alpha^2/\sigma_Y^2\right)T}{\phi}.$

Define $\bar{\Delta}_i:\mathbb{R}_{++}^3\to \mathbb{R}_+$ by $\Delta_i(\gamma^o;\rho,K)=\frac{|\hat u_{\hat a a}+2\hat u_{\hat a \theta}|}{4}\frac{|\hat u_{\hat a a}|\gamma^o \bar \alpha^2/\sigma_Y^2}{\phi}$ for $i\in \{3,4\}$ and $\bar{\Delta}_i(\gamma^o;\rho,K)=0$ for $i\in \{1,2,5\}$. Note that for all $i\in \{1,2,3,4,5\}$,  $\bar{\Delta}_i(\gamma^o;\rho,K)$ is proportional to $\gamma^o$, and by construction, $|\Delta_i(s)|\leq T \bar{\Delta}_i(\gamma^o;\rho,K)$. Finally, define
\begin{align}
T(\gamma^o;\rho,K)&:=\min \left\lbrace T^{SBC}(\gamma^o;\rho,K),\min_{i\in \{1,2,\dots,5\}} \frac{\rho}{\bar{\Delta}_i(\gamma^o;\rho,K)+h_i(\gamma^o;\rho,K)}\right \rbrace.
\end{align}

We now have $|g_i(s)-s_{i0}|=|\Delta_i(s)-\int_0^T F_i(z_t(s))dt|$ bounded above by $
 |\Delta_i(s)|+\int_0^T |F_i(z_t(s))|dt\leq T\bar{\Delta}_i(\gamma^o;\rho,K)+ T h_i(\gamma^o;\rho,K)< \rho$,
where we have used that $T<T(\gamma^o;\rho,K)\leq \frac{\rho}{\bar{\Delta}_i(\gamma^o;\rho,K)+h_i(\gamma^o;\rho,K)}$ by construction. Hence, $|g_i(s)-s_{i0}|\leq \rho$ for all $i\in \{1,2,\dots,5\}$, completing the proof.
\end{proof}

\noindent\textbf{Step 4}: \emph{Apply a fixed point theorem to $g$ to find $s$ such that the solution to IVP-$s$ solves the BVP.} By Lemma \ref{lem:LeadingIntSMC}, we can apply Brouwer's Theorem: there exists $s^*\in \mathcal S_\rho$ such that $s^*=g(s^*)$, and hence the solution to IVP-$s^*$ is a solution to the BVP. That $T(\gamma^o)\in O(1/\gamma^o)$ follows simply from the denominators of the underlying expressions begin proportional to $\gamma^o$. One can further optimize $T(\gamma^o;\rho,K)$ over $(\rho,K)\in \mathbb{R}_{++}^2$ to obtain $T(\gamma^o)$.

\noindent\textbf{Step 5}: \emph{Show that given a solution to the BVP as above, the remaining coefficients are well defined and thus a LME exists.} Since $\gamma>0$ and $\chi<1$ over $[0,T]$ in our solution, the tuple $(v_6,v_8,\beta_1,\beta_2,\beta_3,\gamma,\chi)$ (obtained by reversing the change of variables at the beginning of the proof) solves our original boundary value problem by construction.

To verify that the remaining coefficients are well defined, consider first the ODE for $\alpha$: $\dot\alpha_t=\frac{\alpha_t (\hat u_{\hat a \theta}+\hat u_{\hat a a}\alpha_t) \gamma_t \chi_t}{2 \sigma_X^2 \sigma_Y^2 (1+\hat u_{\hat a \theta} \chi_t)}\left\{(\hat u_{\hat a \theta}+\hat u_{\hat a a}\alpha_t)[2\sigma_X^2\alpha_t-\tilde{v}_{8t}\alpha_t -4 \sigma_Y^2 \tilde{\beta}_{2t} \chi_t]-2\sigma_X^2 \alpha_t^2\right\}$. By continuity of the solution to the BVP, the RHS of the equation above is locally Lipschitz continuous in $\alpha$, uniformly in $t$. Moreover, $\alpha_T=\beta_{1T}\chi_T+\beta_{3T}=\frac{1+\hat u_{\hat a \theta}\chi_T}{2-\hat u_{\hat a a}\chi_T}>0$ by Assumption \ref{assu:static} parts (ii) and (iv) given $u_{a\theta}=u_{a\hat a}=1/2$. By the comparison theorem, $\alpha_t>0$ for all $t\in [0,T]$.

Using the solution to the BVP and the inequalities above, we identify the remaining equilibrium coefficients. We have directly
$v_{2t}=\frac{2\sigma_Y^2\beta_{0t}}{\gamma_t \alpha_t}$, 
$v_{5t}=\frac{\sigma_Y^2[\beta_{1t}(2-\chi_t)-\beta_{3t}-\hat u_{\hat a \theta}]}{\gamma_t \alpha_t}$,
$v_{7t}=-\frac{2\sigma_Y^2(1-2\beta_{3t})}{\gamma_t \alpha_t}$ and 
$v_{9t}=\frac{2\sigma_Y^2\left[\beta_{2t}-\beta_{1t}(1-\chi_t)\right]}{\gamma_t \alpha_t}.$
The last three are well defined as $\alpha,\gamma>0$. In the remaining ODEs---included in the online appendix---$(\beta_0,v_1,v_3)$ is uncoupled from $(v_0,v_4)$. By inspection, the former has (unique) solution $(\beta_0,v_1,v_3)=(0,0,0)$. Hence $v_2=0$, and the solutions for $(v_0,v_4)$ can be obtained directly by integration, given their terminal values. \hfill{$\square$}

\vspace{-0.1in}
\subsection{Existence Proof Sketch for the General Model}\label{appendix_sketch}

In what follows, we refer the reader to the Mathematica file \path{spm.nb}  on our websites. There, we work under the normalization  $\partial^2 U/\partial a^2=\partial^2 \hat U/\partial \hat a^2=-1$, as scaling flow payoffs by a factor does not affect incentives; consequently, $U_{xy}=u_{xy}$, for $x,y\in \{a,\hat a, \theta\}$ in that file.

As in the paper, we first show that the task of finding LMEs can be reduced to a BVP in $(v_6,v_8,\vec{\beta},\gamma,\chi)$. In this BVP, Assumption \ref{assu:static} and the fact that $\chi\in [0,1)$ ensure that the terminal conditions ($(1-u_{a\hat{a}} \hat{u}_{a \hat{a}})$ and $(1-u_{a\hat{a}}\hat{u}_{a\hat{a}}\chi_T)$ in their denominators), as well as the ODEs (with  $(u_{a\theta}+u_{a\hat a}\hat u_{\hat{a}\theta}\chi_t)$  and $(1-u_{a\hat a}\hat u_{a\hat a})$ in their denominators) are all well defined. The change of variables $(\tilde v_6,\tilde v_8,\tilde \beta_2)=(\frac{v_6\gamma}{(1-\chi)^2}, \frac{v_8\gamma }{1-\chi},\frac{\beta_2}{1-\chi})$ takes us to a new well defined BVP consisting of $(\tilde{v}_6,\tilde{v}_8,\beta_1,\tilde\beta_2,\beta_3,\gamma,\chi)$, where $(1-\chi_t)$ is absent from their denominators.

The existence proof for the latter BVP now follows the same Steps 1--4 as in the proof of Theorem \ref{thm:Leading_LME_BVP}. Regarding Step 5: 1) $1-\chi>0$ and $\gamma>0$ allow us to recover $(v_6,v_8,\beta_2)$ from $(\tilde v_6,\tilde v_8,\tilde \beta_2)$; 2) $\alpha_T\neq 0$ and the comparison theorem applied to $\alpha$ and $0$ (in backward form) yield $\alpha\neq 0$, so $(v_2,v_5,v_7,v_9)$ are well defined; 3) $(\beta_0,v_3)$ form a linear system in themselves that does not contain $v_0,v_1$ or $v_4$, so its solution exists and is unique; 4) the ODEs for $v_0$, $v_1$ and $v_4$ are linear in themselves and uncoupled, so they have unique solutions.

\paragraph{Proofs for Section \ref{sec:applications}:} Refer to the online appendix.

\end{appendices}

%\setcounter{equation}{0}
%\setcounter{lemma}{0}
%\setcounter{proposition}{0}
%\numberwithin{equation}{section}
%\numberwithin{lemma}{section}
%\numberwithin{proposition}{section}
%\numberwithin{theorem}{section}
%\addtocounter{section}{1}
\vspace{-0.4cm}

\bibliographystyle{ecta}

\bibliography{bib}

\begin{thebibliography}{38}
\newcommand{\enquote}[1]{``#1''}
\expandafter\ifx\csname natexlab\endcsname\relax\def\natexlab#1{#1}\fi

\bibitem[\protect\citeauthoryear{Back, Cao, and Willard}{Back
  et~al.}{2000}]{back2000imperfect}
\textsc{Back, K., C.~H. Cao, and G.~A. Willard} (2000): \enquote{Imperfect
  competition among informed traders,} \emph{The Journal of Finance}, 55,
  2117--2155.

\bibitem[\protect\citeauthoryear{Bergemann and Strack}{Bergemann and
  Strack}{2015}]{bergemann2015dynamic}
\textsc{Bergemann, D. and P.~Strack} (2015): \enquote{Dynamic revenue
  maximization: A continuous time approach,} \emph{Journal of Economic Theory},
  159, 819--853.

\bibitem[\protect\citeauthoryear{Bonatti and Cisternas}{Bonatti and
  Cisternas}{2019}]{bonatti2019consumer}
\textsc{Bonatti, A. and G.~Cisternas} (2019): \enquote{Consumer Scores and
  Price Discrimination,} \emph{Review of Economic Studies}.

\bibitem[\protect\citeauthoryear{Bonatti, Cisternas, and Toikka}{Bonatti
  et~al.}{2017}]{bonatti2017dynamic}
\textsc{Bonatti, A., G.~Cisternas, and J.~Toikka} (2017): \enquote{Dynamic
  oligopoly with incomplete information,} \emph{The Review of Economic
  Studies}, 84, 503--546.

\bibitem[\protect\citeauthoryear{Bouvard and L{\'e}vy}{Bouvard and
  L{\'e}vy}{2019}]{bouvard2019horizontal}
\textsc{Bouvard, M. and R.~L{\'e}vy} (2019): \enquote{{Horizontal Reputation
  and Strategic Audience Management},} \emph{Journal of the European Economic
  Association}.

\bibitem[\protect\citeauthoryear{Carlsson and Dasgupta}{Carlsson and
  Dasgupta}{1997}]{carlsson1997noise}
\textsc{Carlsson, H. and S.~Dasgupta} (1997): \enquote{Noise-proof equilibria
  in two-action signaling games,} \emph{Journal of Economic Theory}, 77,
  432--460.

\bibitem[\protect\citeauthoryear{Cisternas}{Cisternas}{2018}]{cisternas2018two}
\textsc{Cisternas, G.} (2018): \enquote{Two-sided learning and the ratchet
  principle,} \emph{The Review of Economic Studies}, 85, 307--351.

\bibitem[\protect\citeauthoryear{Daley and Green}{Daley and
  Green}{2012}]{daley2012waiting}
\textsc{Daley, B. and B.~Green} (2012): \enquote{Waiting for News in the Market
  for Lemons,} \emph{Econometrica}, 80, 1433--1504.

\bibitem[\protect\citeauthoryear{Dessein and Santos}{Dessein and
  Santos}{2006}]{dessein2006adaptive}
\textsc{Dessein, W. and T.~Santos} (2006): \enquote{Adaptive Organizations,}
  \emph{Journal of Political Economy}, 114, 956--995.

\bibitem[\protect\citeauthoryear{Dilm\'{e}}{Dilm\'{e}}{2019}]{dilme2019dynamic}
\textsc{Dilm\'{e}, F.} (2019): \enquote{Dynamic Quality Signaling with Hidden
  Actions,} \emph{Games and Economic Behavior}, 113, 116--136.

\bibitem[\protect\citeauthoryear{Ely and V{\"a}lim{\"a}ki}{Ely and
  V{\"a}lim{\"a}ki}{2002}]{ely2002robust}
\textsc{Ely, J.~C. and J.~V{\"a}lim{\"a}ki} (2002): \enquote{A robust folk
  theorem for the prisoner's dilemma,} \emph{Journal of Economic Theory}, 102,
  84--105.

\bibitem[\protect\citeauthoryear{Faingold and Sannikov}{Faingold and
  Sannikov}{2011}]{faingold2011reputation}
\textsc{Faingold, E. and Y.~Sannikov} (2011): \enquote{Reputation in
  continuous-time games,} \emph{Econometrica}, 79, 773--876.

\bibitem[\protect\citeauthoryear{Feltovich, Harbaugh, and To}{Feltovich
  et~al.}{2002}]{feltovich2002too}
\textsc{Feltovich, N., R.~Harbaugh, and T.~To} (2002): \enquote{Too Cool for
  School? Signalling and Countersignalling,} \emph{The RAND Journal of
  Economics}, 33, 630--649.

\bibitem[\protect\citeauthoryear{Foster and Viswanathan}{Foster and
  Viswanathan}{1994}]{foster1994strategic}
\textsc{Foster, F.~D. and S.~Viswanathan} (1994): \enquote{Strategic trading
  with asymmetrically informed traders and long-lived information,}
  \emph{Journal of Financial and Quantitative Analysis}, 29, 499--518.

\bibitem[\protect\citeauthoryear{Foster and Viswanathan}{Foster and
  Viswanathan}{1996}]{foster1996strategic}
---\hspace{-.1pt}---\hspace{-.1pt}--- (1996): \enquote{Strategic trading when
  agents forecast the forecasts of others,} \emph{The Journal of Finance}, 51,
  1437--1478.

\bibitem[\protect\citeauthoryear{Garicano}{Garicano}{2000}]{garicano2000hierarchies}
\textsc{Garicano, L.} (2000): \enquote{Hierarchies and the Organization of
  Knowledge in Production,} \emph{Journal of Political Economy}, 108, 874--904.

\bibitem[\protect\citeauthoryear{Grant}{Grant}{1996}]{grant1996toward}
\textsc{Grant, R.} (1996): \enquote{Toward a knowledge-based theory of the
  firm,} \emph{Strategic management journal}, 17, 109--122.

\bibitem[\protect\citeauthoryear{Gryglewicz and Kolb}{Gryglewicz and
  Kolb}{2019}]{gryglewicz2019strategic}
\textsc{Gryglewicz, S. and A.~Kolb} (2019): \enquote{Strategic Pricing in
  Volatile Markets,} Tech. rep.

\bibitem[\protect\citeauthoryear{Heinsalu}{Heinsalu}{2018}]{heinsalu2018dynamic}
\textsc{Heinsalu, S.} (2018): \enquote{Dynamic Noisy Signaling,} \emph{American
  Economic Journal: Microeconomics}, 10, 225--249.

\bibitem[\protect\citeauthoryear{Hermalin}{Hermalin}{1998}]{hermalin1998toward}
\textsc{Hermalin, B.~E.} (1998): \enquote{Toward an economic theory of
  leadership: Leading by example,} \emph{American Economic Review}, 1188--1206.

\bibitem[\protect\citeauthoryear{Keller}{Keller}{1968}]{keller1968numerical}
\textsc{Keller, H.~B.} (1968): \emph{Numerical Methods for Two-Point
  Boundary-Value Problems}, Blaisdell Publishing Co.

\bibitem[\protect\citeauthoryear{Kolb}{Kolb}{2019}]{kolb2019strategic}
\textsc{Kolb, A.~M.} (2019): \enquote{Strategic real options,} \emph{Journal of
  Economic Theory}, 183, 344--383.

\bibitem[\protect\citeauthoryear{Kyle}{Kyle}{1985}]{kyle1985continuous}
\textsc{Kyle, A.~S.} (1985): \enquote{Continuous auctions and insider trading,}
  \emph{Econometrica}, 1315--1335.

\bibitem[\protect\citeauthoryear{Levin}{Levin}{2003}]{levin2003relational}
\textsc{Levin, J.} (2003): \enquote{Relational incentive contracts,}
  \emph{American Economic Review}, 93, 835--857.

\bibitem[\protect\citeauthoryear{Liptser and Shiryaev}{Liptser and
  Shiryaev}{1977}]{liptser1977statistics}
\textsc{Liptser, r.~S. and A.~Shiryaev} (1977): \emph{Statistics of Random
  Processes 1, 2}, Springer-Verlag, New York.

\bibitem[\protect\citeauthoryear{Mailath and Morris}{Mailath and
  Morris}{2002}]{mailath2002repeated}
\textsc{Mailath, G.~J. and S.~Morris} (2002): \enquote{Repeated games with
  almost-public monitoring,} \emph{Journal of Economic Theory}, 102, 189--228.

\bibitem[\protect\citeauthoryear{Marschak}{Marschak}{1955}]{marschak1955elements}
\textsc{Marschak, J.} (1955): \enquote{Elements for a theory of teams,}
  \emph{Management Science}, 1, 127--137.

\bibitem[\protect\citeauthoryear{Marschak and Radner}{Marschak and
  Radner}{1972}]{marschak1972team}
\textsc{Marschak, J. and R.~Radner} (1972): \emph{Economic Theory of Teams},
  New Haven, CT: Yale University Press.

\bibitem[\protect\citeauthoryear{Matthews and Mirman}{Matthews and
  Mirman}{1983}]{matthews1983equilibrium}
\textsc{Matthews, S.~A. and L.~J. Mirman} (1983): \enquote{Equilibrium limit
  pricing: The effects of private information and stochastic demand,}
  \emph{Econometrica}, 981--996.

\bibitem[\protect\citeauthoryear{Milgrom and Roberts}{Milgrom and
  Roberts}{1992}]{milgrom1992economics}
\textsc{Milgrom, P.~R. and J.~D. Roberts} (1992): \emph{Economics, organization
  and management}, Prentice-Hall.

\bibitem[\protect\citeauthoryear{Nonaka}{Nonaka}{1991}]{nonaka1991knowledge}
\textsc{Nonaka, I.} (1991): \enquote{The knowledge-creating company,}
  \emph{Harvard Business Review}, 69, 96--104.

\bibitem[\protect\citeauthoryear{Phelan and Skrzypacz}{Phelan and
  Skrzypacz}{2012}]{phelan2012beliefs}
\textsc{Phelan, C. and A.~Skrzypacz} (2012): \enquote{Beliefs and private
  monitoring,} \emph{Review of Economic Studies}, 79, 1637--1660.

\bibitem[\protect\citeauthoryear{Rantakari}{Rantakari}{2008}]{rantakari2008governing}
\textsc{Rantakari, H.} (2008): \enquote{Governing Adaptation,} \emph{Review of
  Economic Studies}, 75, 1257--1285.

\bibitem[\protect\citeauthoryear{Sannikov}{Sannikov}{2007}]{sannikov2007games}
\textsc{Sannikov, Y.} (2007): \enquote{Games with imperfectly observable
  actions in continuous time,} \emph{Econometrica}, 75, 1285--1329.

\bibitem[\protect\citeauthoryear{Simon}{Simon}{1957}]{simon1957models}
\textsc{Simon, H.~A.} (1957): \emph{Models of man; social and rational}, Wiley.

\bibitem[\protect\citeauthoryear{Teschl}{Teschl}{2012}]{teschl2012ordinary}
\textsc{Teschl, G.} (2012): \emph{Ordinary differential equations and dynamical
  systems}, vol. 140, American Mathematical Society.

\bibitem[\protect\citeauthoryear{Williamson}{Williamson}{1996}]{williamson1996mechanisms}
\textsc{Williamson, O.~E.} (1996): \emph{The mechanisms of governance}, Oxford
  University Press.

\bibitem[\protect\citeauthoryear{Yang and Zhu}{Yang and
  Zhu}{2019}]{yang2019back}
\textsc{Yang, L. and H.~Zhu} (2019): \enquote{Back-Running: Seeking and Hiding
  Fundamental Information in Order Flows,} \emph{Review of Financial Studies},
  forthcoming.

\end{thebibliography}

\end{document}